\date{ \vspace{-5mm} }
\title{Product Centered Dirichlet Processes for Bayesian Multiview Clustering}
\author[1]{Alexander Dombowsky}
\author[1,2]{David B. Dunson}
\affil[1]{Department of Statistical Science, Duke University, Durham, NC, USA}
\affil[2]{Department of Mathematics, Duke University, Durham, NC, USA}
\theoremstyle{definition}
\newtheorem{theorem}{Theorem}
\newtheorem{remark}{Remark}
\newtheorem{prop}{Proposition}
\def\bs{\boldsymbol}
\def\lb{\left\{ }
\def\rb{\right\}}
\def\E{\mathbb{E}}
\def\Var{\textrm{Var}}
\def\tx{\textrm}
\def\X{\bs X}
\def\diff{\tx{d}}
\def\KV{K_v}
\def\n{\bs n}
\def\CLIC{\tx{CLIC}}
\def\CRP{\tx{CRP}}
\def\Dir{\tx{Dir}}
\def\Pn{\mathcal{P}_n}
\def\N{\mathcal{N}}
\def\DP{\tx{DP}}
\def\cvp{\overset{p}{\longrightarrow}}
\def\CRP{\tx{CRP}}
\newcommand{\ind}{\perp\!\!\!\!\perp} 
\newcommand{\norm}[1]{\left\lVert#1\right\rVert}
\begin{document}

\maketitle

\begin{abstract}
While there is an immense literature on Bayesian methods for clustering, the multiview case has received little attention. This problem focuses on obtaining distinct but statistically dependent clusterings in a common set of entities for different data types. For example, clustering patients into subgroups with subgroup membership varying according to the domain of the patient variables. A challenge is how to model the across-view dependence between the partitions of patients into subgroups. 
The complexities of the partition space 
make standard methods to model dependence, such as correlation, infeasible. In this article, we propose CLustering with Independence Centering (CLIC), a clustering prior that uses a single parameter to explicitly model dependence between clusterings across views. CLIC is induced by the product centered Dirichlet process (PCDP), a novel hierarchical prior that bridges between independent and equivalent partitions. We show appealing theoretic properties, provide a finite approximation and prove its accuracy, present a marginal Gibbs sampler for posterior computation, and derive closed form expressions for the marginal and joint partition distributions for the CLIC model. On synthetic data and in an application to epidemiology, CLIC accurately characterizes view-specific partitions while providing inference on the dependence level.
\end{abstract}
{\textit{Keywords}: Bayesian inference; multiview clustering; random partitions; mixture models; Bayesian nonparametrics}

\doublespacing

\section{Introduction} \label{section:intro}
Many application areas collect {\em multiview} data comprising several different types of information on the same set of objects. Multiomics data, for example, consist of various molecular features, such as single nucleotide polymorphisms (SNPs), 
RNA expression, and DNA methylation measured on the same sample \citep{rappoport2018multi}.  The particular focus of this article is on {\em multiview clustering}, a field which generalizes standard cluster analysis to multiview data; see \cite{bickel2004multi} and \cite{fang2023comprehensive} for an overview of the developments in this area. For example, it is broadly of interest to infer patient subgroups based on three views: molecular biomarkers, clinical covariates, and disease outcomes. These views may be of vastly different dimension or type. Often, the biomarkers measure counts for tens of thousands of genes, the covariates consist of categorical demographic information and numeric laboratory results, and the disease outcomes are a single binary variable indicating mortality. Multiview clustering methods seek to incorporate or balance these markedly different attributes when inferring the subgroups.

Previously, the term multiview clustering referred to a technique in which contributions from the different views are incorporated into a single integrative clustering. However, more recent approaches have expanded the notion of multiview clustering to address the aim of {\em multiple clustering} \citep{yao2019multivew, wei2020multi, franzolini2023conditional}, a related field which derives multiple clusterings for a single dataset \citep{yu2024multiple}. While the input for a multiple clustering algorithm need not be multiview, the clusterings generated by said algorithm are created to be as dissimilar as possible to reflect different perspectives of the data \citep{bailey2014alternative}. In this article, we refer to multiview clustering as the problem of inferring separate but statistically dependent groupings in the same objects for the different views under consideration. Returning to our earlier example, this framework entails inferring three groupings of the hospital patients: one informed by the molecular biomarkers, another based on clinical characteristics, and a third reflecting differences in the conditional distribution of the outcomes given the biomarkers and covariates. 

There are several practical reasons why multiview clustering is preferable to applying a clustering algorithm to a single, consolidated feature set. A key benefit of multiview clustering is interpretability of the results; the clustering associated with a certain view reflects the variation within that view. Next, the multiview approach ensures that some views do not become obscured in the clustering procedure by other views of higher dimension. In the case where one view is a low dimensional outcome and the other is a set of covariates, the posterior distribution under a single partition is dominated by the likelihood of the predictors, obscuring any grouping structure indicated by the dependent variable \citep{wade2014improving}. Even in low dimensional settings, clustering algorithms are skewed towards capturing variability in the views with higher dimension \citep{franzolini2023bayesian}. Finally, investigators are often interested in the relationship between the clusterings, e.g., assessing whether covariate subgroups map well to patterns in the outcome.

The Bayesian framework is appealing for multiview clustering due to its ability to model complex dependence structures and quantify clustering uncertainty. However, Bayesian methodology was seldom applied to multiview clustering analyses until relatively recently, and earlier models focused on capturing a specific notion of dependence rather than defining a general measure of clustering correlation. Bayesian consensus clustering \citep{lock2013bayesian,lu2022bayesian} induces across view dependence by a latent integrative clustering of the objects and is primarily motivated by multiomics data. The integrative clustering is not just a vessel to imbue dependence, but moreso a primary focus of inference in order to obtain a single unified grouping of the observations. In general, we may not want to assume that such an integrative clustering exists. Alternatively, enrichment \citep{wade2011enriched, wade2014improving} addresses prediction with clustering models, and enforces dependence by requiring that the clusters in the outcome be subsets of the clusters in the covariates. Nested clustering models have also been utilized by \cite{lee2016nonparametric} and \cite{franzolini2023bayesian}. The nested assumption can be rather rigid in practice, as clusterings can still be similar without having a tree structure. 

We focus on inducing a broader sense of cluster dependence. Generally, there are two extremes for specifying a probabilistic clustering model for multiview data. On one end of the spectrum is the single partition model, which arises when all view-specific partitions are equivalent. In this case, the relationship between the partitions is trivially dependent. Conversely, assuming that there are independent partitions associated with each view fails to acknowledge the possibility of across-view dependence induced by clustering the same set of objects. If two objects are co-clustered together under one view, we would like to incorporate this information when deciding whether to co-cluster them under another.

In this paper, we propose a framework for modeling dependent partitions with a single parameter controlling dependence, bridging between identical partitions for the different views and independence at the two extremes. Let $\Pn$ be the set partitions of the integers $[n] = \{ 1, \dots, n \}$ and suppose we randomly draw two partitions $C_1, C_2 \in \Pn$, where $C_v = (C_{v1}, \dots, C_{vK_v})$. Alternatively, we may express the partitions as $n$-dimensional labelings $\bs c_v = (c_{v1}, \dots, c_{vn})$ with $c_{vi} \in [\KV] = \{ 1, \dots, \KV \}$ for all $i=1, \dots, n$. $C_1$ and $C_2$ represent partitions of the same data for different views. The \textit{Rand index} \citep{rand1971objective} between $C_1$ and $C_2$ is defined as $R(C_1, C_2) = \{ A_{12} + B_{12} \}/{n \choose 2}$, where $A_{12}= \sum_{i<j} \textbf{1}_{c_{1i}=c_{1j}, c_{2i}=c_{2j}}$ and $B_{12} = \sum_{i<j} \textbf{1}_{c_{1i} \neq c_{1j}, c_{2i} \neq c_{2j}}$ count the number of shared pairwise clustering decisions. If $C_1$ and $C_2$ are equivalent, $R(C_1, C_2) = 1$. Assuming that $\pi$ is a joint distribution with support on $\Pn$, we will be interested in the expected Rand index (ERI) $\tau_{12}:= \E[R(C_1, C_2)] = \sum_{C_1, C_2 \in \Pn} R(C_1, C_2) \pi(C_1, C_2)$. If $C_1 \ind C_2$ a priori, i.e. $\pi(C_1, C_2) = \pi(C_1)\pi(C_2)$, we denote the resulting ERI as $\tau_{12}^{0} = \sum_{C_1, C_2 \in \Pn} R(C_1, C_2) \pi(C_1) \pi(C_2)$. The form of $\tau_{12}^{0}$ depends on the hyperparameters of the marginal probability mass functions (PMFs) $\pi(C_1)$ and $\pi(C_2)$. We introduce CLustering with Independence Centring (CLIC), a class of joint distributions on $\Pn$ with the property that for some $\rho>0$,
\begin{equation} \label{eq:CLIC-definition}
    \tau_{12}= \frac{1}{\rho+1} + \left( 1 - \frac{1}{\rho+1} \right) \tau_{12}^{0}.
\end{equation}
That is, the ERI is a weighted average between $1$, the ERI of two equivalent random partitions, and $\tau_{12}^0$, the ERI of two independent random partitions. Notice that $\rho$ controls concentration towards the independent clustering model. In practice, one is uncertain about $\rho$, but we will show that a conjugate prior naturally arises under a specific data augmentation scheme. This will allow one to infer the posterior distributions of both $(C_1,C_2)$ and $\rho$. 

Our model is induced by altering the Dirichlet process (DP) prior \citep{ferguson1973bayesian} for generating random measures with support on some space $\Theta$. The DP is parameterized with a \textit{concentration parameter} $\gamma > 0$, and a \textit{base measure} $H$ on $\Theta$. Let $S_1, \dots, S_K$ be any partition of $\Theta$, i.e $S_k \cap S_l = \emptyset$ and $\bigcup_{k=1}^K S_k = \Theta$.  We say that $P \sim \DP(\gamma, H)$ if $(P(S_1), \dots, P(S_K)) \sim \Dir(\gamma H(S_1), \dots, \gamma H(S_K))$. The elementary properties of the Dirichlet distribution imply that $H$ can be viewed as the center of the DP with variation controlled by $\gamma$. The larger $\gamma$ is, the more concentrated the prior on $P$ is at the base measure $H$. 

\cite{sethuraman1994constructive} showed that realizations from the Dirichlet process are almost surely discrete even if $H$ is non-atomic. For that reason, a sample $\theta_1, \dots, \theta_n \sim P$ is expected to have a number of repeated values. By grouping objects with the same value of $\theta_i$, we create index sets known as \textit{clusters}, ultimately inducing a partition $C = (C_1, \dots, C_K) \in \Pn$. If we marginalize out the DP prior on $P$, the induced prior on $C$ is the Chinese Restaurant Process (CRP) \citep{aldous1985exchangeability}. $C \sim \CRP(\gamma)$ if its PMF is
\begin{equation} \label{eq:CRP-PMF}
    \pi^{\CRP}(C \mid \gamma) = \frac{\gamma^K}{\prod_{i=1}^n (\gamma + i - 1)} \prod_{k=1}^K (n_k -1)!,
\end{equation}
where $n_k = |C_k|$ is the size of the $k$th cluster for $k=1, \dots, K$. The number of clusters in the CRP is governed by $\gamma$, and one can show that $\E(K)/\log n \to \gamma$ as $n \to \infty$. The clustering properties of the DP prior are particularly useful in Dirichlet Process Mixtures (DPMs) in which $\theta_i$ determines the distribution of data for object $i$ \citep{antoniak1974mixtures}. 

We return now to our original aim of modeling dependent partitions. Suppose that $\Theta = \Theta_1 \times \Theta_2$ for some parameter spaces $\Theta_v$ for $v=1,2$. Here, the index $v$ corresponds to the view of the data, whether it be multiomics or covariates and an outcome. To induce two view-specific partitions of $[n]$, we decompose the atoms into $\theta_i = (\theta_{1i}, \theta_{2i})$, where $\theta_{vi} \in \Theta_v$. We define the product centered Dirichlet process (PCDP) to be the following prior for a random measure $P$ on $\Theta$,
\begin{equation} \label{eq:CLIC-model}
    \begin{gathered} 
    P \mid \rho, Q_1, Q_2 \sim \DP(\rho, Q_1 \times Q_2); \\
    Q_v \mid \gamma_v, H_v \sim \DP(\gamma_v, H_v) \tx{ for } v=1,2;
\end{gathered}
\end{equation}
where $\gamma_v > 0$ and $H_v$ has support on $\Theta_v$, and $\times$ denotes the product measure. By counting ties in $\theta_{1i}$ and $\theta_{2i}$, the PCDP creates two dependent partitions $C_1, C_2 \in \Pn$ with ERI given by \eqref{eq:CLIC-definition} and $\tau_{12}^{0} = \sum_{C_1, C_2 \in \Pn} R(C_1, C_2) \pi^\CRP(C_1 \mid \gamma_1) \pi^\CRP(C_2 \mid \gamma_2)$. The induced joint partition distribution is denoted by $(C_1, C_2) \sim \CLIC(\rho, \gamma_1, \gamma_2)$. We will show that increasing values of $\rho$ will result in independent $\CRP(\gamma_v)$ partitions, whereas increasing values of $\gamma_v$ results in two identical $\CRP(\rho)$ partitions.  

The PCDP is related to but distinct from methods for 
Bayesian nonparametric modeling of data from subjects in different groups, including the hierarchical Dirichlet process (HDP) \citep{teh2006sharing, ren2008dynamic, paisley2014nested} and the nested Dirichlet process (NDP) \citep{rodriguez2008nested}.
In this literature, groups are prespecified and contain different subjects, and the focus is on borrowing of information in inferring group-specific distributions while clustering the subjects in different groups. Particularly relevant is the HDP, which models a random measure for each group using a hierarchical DP prior in which the base measure $Q$ is itself modeled as a DP.  
Conversely, \eqref{eq:CLIC-model}  simulates a single random probability distribution whose base measure is a product of DP distributed measures. The goal of the PCDP is fundamentally different in inducing dependence across views in clustering a single set of subjects. However, the marginal distributions of measures drawn from a PCDP will be HDPs, and the Gibbs sampler for the HDP will be useful for inferring the posterior of our induced CLIC model. More generally, the PCDP is not an example of a hierarchical process \citep{camerlenghi2018bayesian, camerlenghi2019distribution}, but theoretical tools established by the literature on hierarchical processes will be essential for deriving the closed form expression of the $\CLIC(\rho, \gamma_1, \gamma_2)$ probability mass function. 

The PCDP and CLIC contribute to an emerging literature on Bayesian inference for dependent clustering models. Besides nested clusterings such as the enriched Dirichlet process, 
temporally correlated partitions have been modeled using a generalized P\'olya urn scheme in \cite{caron2007generalized} and \cite{caron2017generalized}, and a sequence of cluster re-allocations in \cite{page2022dependent}. The latter approach exhibits a similar ERI to \eqref{eq:CLIC-definition} when $n=2$. An approximate Bayes approach for multiview clustering was proposed in \cite{duan2020latent} that models the similarity matrix of each view as a noisy realization from a cluster graph. \cite{franzolini2023bayesian} introduced conditional partial exchangeability (CPE), a general framework for modeling dependent partitions that encompasses the models of \cite{wade2011enriched}, \cite{lee2016nonparametric}, and \cite{page2022dependent}. A two view clustering model satisfies CPE if the data are exchangeable conditional on $C_1$ and $C_2$, and, additionally, the data in the second view are partially exchangeable conditional on $C_1$. CLIC satisfies these CPE assumptions under certain settings. 
Furthermore, the CLIC 
parameter $\rho$ in \eqref{eq:CLIC-definition} is a transformation of the expected telescopic adjusted Rand index (TARI), a measure of clustering dependence proposed by \cite{franzolini2023bayesian}. In contrast to their methods, CLIC directly models uncertainty in clustering dependence by assigning $\rho$ a prior, does not assume a directional relationship between $C_1$ and $C_2$, and induces a different joint partition probability function for $(C_1,C_2)$.

This paper is organized as follows. In Section \ref{section:PCDP} we propose the PCDP as a model for dependent random measures, show basic theoretic properties, and derive a finite approximation that is analogous to other random measure priors \citep{ishwaran2002exact, lijoi2023finite}. In Section \ref{section:CLIC}, we focus on CLIC, the induced dependent partition distribution of the PCDP. We prove that \eqref{eq:CLIC-definition} holds for the $\CLIC(\rho, \gamma_1, \gamma_2)$ distribution and derive the closed form expression for its PMF, which we call a marginally exchangeable partition probability function (MEPPF). A marginal Gibbs sampler for the finite approximation is discussed in Section \ref{section:computation}. We evaluate CLIC as a clustering method for synthetic multiview data in Section \ref{section:simulations} and in an application to the Collaborative Perinatal Project (CPP) in Section \ref{section:application}. Finally, we provide concluding remarks and extensions in Section \ref{section:discussion}.

\section{Product Centered Dirichlet Processes} \label{section:PCDP}

\subsection{Multiview Clustering}
We assume that the data are split into two views $\X = (\X_1, \X_2)$, where $\X_v = (X_{v1}, \dots, X_{vn})$ and $X_{vi} \in \Omega_v$ for $v=1,2$, though our methodology can be generalized to more than two views. The views may be of varying dimension and data type. For instance, the views could be different omics measurements or an outcome and covariates. The dimension of each view is denoted by $d_v$ for all $v=1, \dots, V$, and the variables in the $v$th view are indicated with the index $j_v$. Associated with the $ith$ observation is parameter $\theta_{i} = (\theta_{1i}, \theta_{2i}) \in \Theta = \Theta_1 \times \Theta_2$ for view-specific parameter spaces $\Theta_1, \Theta_2$. We model the data using a multiview mixture model in which $\theta_i$ determines the probability distribution of $X_i = (X_{1i}, X_{2i})$. That is,
\begin{align} \label{eq:MMM}
    X_i \mid \theta_i & \sim f(X_i ; \theta_i), & \theta_i \mid P & \sim P, & \tx{for } i&=1, \dots, n; 
\end{align}
where $f(\cdot ; \theta)$ is a parametric probability distribution that depends on $\theta$ and $P$ is a probability distribution with support on $\Theta$, often called the mixing measure. We will assume that $P$ is almost surely discrete, so that draws $\theta_1, \dots, \theta_n \sim P$ are guaranteed to have unique values $\theta_1^*, \dots, \theta_K^*$ for some $K<\infty$. Additionally, there will be ties within each view represented by the unique values $\theta_{v1}^*, \dots, \theta_{vK_v}^*$.

We create \textit{clusters} by grouping observations with identical atoms. Let $C_{vk_v} = \lb i: \theta_{vi} = \theta_{vk_v}^* \rb$ be the $k_v$th cluster in the $v$th view. The view-specific partitions are defined as $C_v = (C_{v1}, \dots, C_{vK_v})$ with associated labelings similarly defined as $\bs c_v = (c_{v1}, \dots, c_{v_n})$, where $c_{vi} = k_v \iff i \in C_{vk_v}$. The size of $C_{vk_v}$ is denoted by $n_{vk_v} = |C_{vk_v}|$, which we collect into the vector $\n_v = (n_{v1}, \dots, n_{vK_v})$. While we consider ties in $\theta_{i1}$ and $\theta_{i2}$ separately, we will also observe ties in $\theta_i$, i.e. $\theta_i = \theta_j$ $\iff$ $\theta_{i1} = \theta_{j1}$ and $\theta_{i2} = \theta_{j2}$.  These ties are represented by a \textit{cross-partition} or \textit{cross-clustering} $D = (D_{k_1k_2})_{k_1, k_2}$, where $D_{k_1k_2} = C_{1k_1} \cap C_{2k_2}$ for $k_1 = 1, \dots, K_1$ and $k_2 = 1, \dots, K_2$. The cross-partition is an object describing the joint distribution of two random partitions, analogous to collecting random variables into a random vector. The cross-clustering is related to the \textit{contingency table} between $C_1$ and $C_2$, given by $\n = (n_{k_1k_2})_{k_1, k_2}$ where $n_{k_1k_2} = |D_{k_1k_2}|$. The contingency table can be reverted to the cluster sizes using summations, i.e. $n_{1k_1} = \sum_{k_2} n_{k_1 k_2}$ and $n_{2k_2} = \sum_{k_1} n_{k_1 k_2}$. As we will discuss later on, $\n$ will be crucial for interpretation and computation of our model.

Observe that we can rewrite the distribution of $X_i$ in \eqref{eq:MMM} as $(X_i \mid c_{1i}=k_1, c_{2i} = k_2, \theta_{1k_1}^*, \theta_{2k_2}^*) \sim f(X_i; \theta_{1k_1}^*, \theta_{2k_2}^*)$. 
We address two cases in which dependent clustering may arise. The first is when the views are assumed to be independent conditional on the clusterings,
\begin{equation} \label{eq:multiview-model}
    X_i \mid c_{1i}=k_1, c_{2i} = k_2, \theta_{1k_1}^*, \theta_{2k_2}^* \sim f_1(X_{1i}; \theta_{1k_1}^*) \times f_2(X_{2i}; \theta_{2k_2}^*),
\end{equation}
where $f_v(\cdot ; \theta_v)$ is a parametric probability distribution for the $v$th view. Here, we can interpret $C_1$ as a clustering for $\X_1$ and $C_2$ as a clustering for $\X_2$. In the case of multiomics, $C_1$ may be a partition related to SNPs, whereas $C_2$ represents a partition induced by RNA expression. The kernel in \eqref{eq:multiview-model} is equivalent to assuming that the relationship between $C_1$ and $C_2$ comprises the overall across view dependence structure. If, for example, $C_1$ and $C_2$ are independent, then $X_{1i} \ind X_{2i}$.

Conversely, for covariate-outcome data, it may be more appropriate to model the data with
\begin{equation} \label{eq:conditional-model}
    X_i \mid c_{1i}=k_1, c_{2i} = k_2, \theta_{1k_1}^*, \theta_{2k_2}^* \sim f_1(X_{1i}; \theta_{1k_1}^*) \times f_2(X_{2i}; X_{1i}, \theta_{2k_2}^*),
\end{equation}
where $f_2(\cdot; \cdot, \theta_2)$ is a parametric conditional distribution and $f_1(\cdot; \theta_1)$ is the marginal distribution for $\X_1$. We then interpret $C_1$ as $\X_1$ clusters, but $C_2$ are now $(\X_2 \mid \X_1)$ clusters, i.e. clusters related to the conditional distribution of the second view given the first. For example, if $X_{i1}, X_{2i} \in \mathbb{R}$, \eqref{eq:conditional-model} can be modeled with a linear regression model $f_2(X_{2i}; X_{1i}, \theta_{2k_2}^*) = \N(X_{2i}; X_{1i} \theta_{2k_2}^* , \sigma^2)$. $C_2$ groups together observations with a similar relationship between the covariates $\X_1$ and the response $\X_2$. If $C_1 \ind C_2$, then groups based off the covariates are not useful for prediction. Observe that if there are $V$ sub-views amongst the covariates $\X_1$, we can model $f_1(X_{1i}; \theta_{1k_1}^*)$ with the product kernel in \eqref{eq:multiview-model}, resulting in $V+1$ dependent partitions comprising the outcome and the covariate sub-views. 

The main motivation of this article is to address the dependence structure between $C_1$ and $C_2$. On one end of the spectrum, $C_1 \ind C_2$, meaning that the clustering of one view gives no information about the clustering in the other. There are a wide variety of methods to model independent clusterings for \eqref{eq:multiview-model} and \eqref{eq:conditional-model} using random measures, such as setting $P=P_1 \times P_2$, where $P_1 \ind P_2$ and each $P_v$ follows a DP prior. On the other hand, the two clusterings could be trivially dependent, i.e. $C_1=C_2$ almost surely. Trivially dependent clusterings imply that the underlying grouping structure of $\X$ is unchanged between the views. If $C_1 = C_2$, one can see that $\n$ is a generalized permutation matrix, i.e. there is only one non-zero entry in each row and column of the contingency table. This clustering model can be implemented by assuming $P$ is drawn from a DP prior with a non-atomic base measure, as ties in $\theta_{1i}$ are equivalent to ties in $\theta_{2i}$. We are interested in defining a model bridging between independent and identical clusterings across the views, allowing nontrivial dependence between $C_1$ and $C_2$.

\subsection{Prior on the Mixing Measure}
The product centered Dirichlet process (PCDP) is a nonparametric prior for random measures that can model dependence between partitions. To begin, we assume that the parameter space is a Cartesian product of view-specific parameter spaces, i.e. $\Theta = \Theta_1 \times \Theta_2 = \lb (\theta_{1k_1}^*, \theta_{2k_2}^*) : \theta_{vk_v}^* \in \Theta_v \rb$. At the top level, we draw $P$ from a DP conditional on a concentration parameter $\rho$ and a base measure $Q$,
\begin{equation} \label{eq:P-prior}
    P \mid \rho, Q \sim \DP(\rho, Q).
\end{equation}
Since $P$ is drawn from a DP, it is guaranteed to be almost surely discrete \citep{sethuraman1994constructive}, which ultimately leads to clusters in the data generating process. This procedure can be better visualized with the P\'olya urn scheme \citep{blackwell1973ferguson}, which computes the conditional distribution of the $i$th atom after marginalizing out $P$:
\begin{equation} \label{eq:P-polya}
    \theta_i \mid \theta_1, \dots, \theta_{i-1}, \rho, Q \sim \sum_{j=1}^{i-1} \left( \frac{1}{\rho + i -1} \right) \delta_{\theta_j} + \frac{\rho}{\rho + i - 1} Q ,
\end{equation}
where $\delta_{\theta_j}$ is the degenerate measure at $\theta_j$. With probability $1/(\rho + i - 1)$, we set $\theta_i$ equal to one of the already generated atoms $\theta_j$, placing $i$ into an already created cluster. Otherwise, with probability $\rho/(\rho + i - 1)$, we simulate $\theta_i \sim Q$, generating a potentially new value.

Observe that $Q$ must be carefully chosen in order to induce non-trivially dependent clusters. Suppose that $Q$ is some non-atomic distribution, i.e. $Q(\{ (\theta^*_{1k_1}, \theta^*_{1k_2}) \}) = 0$ for any pair of atoms. Under this regime, \eqref{eq:P-polya} implies that $C_1 = C_2$ almost surely since $\theta_{1i} = \theta_{1j} \iff \theta_{2i} = \theta_{2j}$. When dealing with the standard DP and non-atomic base measure, there are only two choices when sampling the $ith$ atom in \eqref{eq:P-polya}: sample $\theta_i$ from an existing cluster or create a new cluster. This scheme is visualized using the Chinese restaurant process (CRP) metaphor. Here, $\theta_i$ are customers entering a Chinese restaurant with an infinite amount of tables. They can either sit at one of the occupied tables, indicated by $\theta_i = \theta_j$, or sit at an unoccupied table, which corresponds to sampling $\theta_i \sim Q$ \citep{aldous1985exchangeability}. In contrast, the cross-clustering case motivates four choices: set $\theta_{1i}$ and $\theta_{2i}$ equal to existing atomic values, set $\theta_{1i} = \theta_{1j}$ and generate a new value of $\theta_{2i}$, set $\theta_{2i} = \theta_{2j}$ and generate a new value of $\theta_{1i}$, or generate two new values of $\theta_{1i}$ and $\theta_{2i}$. In a dependent clustering model, each of these possibilities should have positive probability when simulating $\theta_i$. Alternatively, if we were to simulate $\theta_{1i} \sim P_1$ and $\theta_{2i} \sim P_2$, where $P_v \sim \DP(\rho_v, Q_v)$ independently, then there would indeed be the desired four choices when sampling $\theta_i$, but $C_1 \ind C_2$.

A middle ground between the independent and trivially dependent case is induced by assigning $Q$ an appropriate prior distribution. The first step is to construct $Q$ using marginals on $\Theta_v$. Let $Q_v$ be probability distributions on $(\Theta_v, \mathcal{S}_v)$, where $\mathcal{S}_v$ is a $\sigma$-algebra on $\Theta_v$, for $v=1,2$. The product measure $Q = Q_1 \times Q_2$ is defined by $Q(S_1 \times S_2) = Q_1(S_1)Q_2(S_2)$ for any $S_1 \in \mathcal{S}_1$ and $S_2 \in \mathcal{S}_2$. Occasionally, the notation $Q=\bigtimes_{v=1}^2 Q_v$ may be used to refer to the product measure. We will furthermore assume that $Q_v$ are also drawn from DP priors, i.e.
\begin{equation} \label{eq:Q-prior}
    Q_v \mid \gamma_v, H_v \sim \DP(\gamma_v, H_v) \tx{ independently for } v=1,2.
\end{equation}
We will typically take $H_v$ to be non-atomic. Since $Q_v$ are themselves probability measures, $Q$ is guaranteed to exist and be unique almost surely \citep{resnick2001probability}, and its marginal distributions are $Q_1$ and $Q_2$. Observe that if $\rho \to \infty$, then $P=Q$ and, therefore, $C_1 \ind C_2$. For this reason, we interpret $Q$ as the null case, with degree of concentration controlled by $\rho$, and call \eqref{eq:P-prior} and \eqref{eq:Q-prior} the PCDP for modeling random measures. Observe that there are three concentration parameters in this process: $\gamma_1$, $\gamma_2$, and $\rho$. In the Supplemental Material, we verify that these hyperparameters are identifiable for $\pi (P, Q_1, Q_2 \mid \rho, \gamma_1, \gamma_2)$ for any non-atomic base measures $H_1$ and $H_2$. In practice, we fix $\gamma_1$ and $\gamma_2$, but a similar argument implies identifiability for $\rho$ in this case.

This formulation is distinct from the HDP \citep{teh2006sharing}, which assumes that $P$ follows \eqref{eq:P-prior} and $Q \sim \DP(\gamma, H)$, with no specification on the marginals. One can show that when $Q_1$ and $Q_2$ follow \eqref{eq:Q-prior}, their product measure is not distributed as a DP. For example, let $S_v \in \mathcal{S}_v$ for $v=1,2$. Then we construct a partition of $\Theta$ using the sets $S_1 \times S_2$, $S_1 \times S_2^c$, $S_1^c \times S_2$, and $S_1^c \times S_2^c$, where $S_v^c = \Theta_v \setminus S_v$. We then have, for instance, that $Q(S_1 \times S_2) = Q_1(S_1)Q_2(S_2)$, where $Q_v(S_v) \sim \tx{Beta}(\gamma_v H_v(S_v), \gamma_v H_v(S_v^c))$ for $v=1,2$ and are independent. Hence, $Q(S_1 \times S_2)$ is not distributed as a Beta random variable and instead follows a Wilks' type-B distribution \citep{wilks1932certain, tang1984distribution}. On the other hand, the marginals of $P$ will be distributed as HDPs. Define $P_1(\cdot) = \int_{\Theta_2} \diff P(\cdot, \theta_2)$ and $P_2(\cdot) = \int_{\Theta_1} \diff P(\theta_1, \cdot)$, to be the marginal distributions of $P$ on $\Theta_v$ for $v=1,2$. For example, conditional on $Q_1,Q_2$ and given the partition $S_{11}, \dots, S_{1m}$ of $\Theta_1$, Fubini's theorem implies
\begin{gather} 
    (P_1(S_{11}), \dots, P_{1}(S_{1m}))  \overset{d}{=} (P(S_{11} \times \Theta_2), \dots, P(S_{1m} \times \Theta_2)) \nonumber \\
    \sim \tx{Dir}(\rho Q_1(S_{11}), \dots, \rho Q_1(S_{1m})), \label{eq:P-marginal}
\end{gather}
and a similar property holds for $P_2$. In other words, the prior for the marginals is $P_v \mid \rho, Q_v \sim \DP(\rho, Q_v)$, $Q_v \sim \DP(\gamma_v, H_v)$. 

Using \eqref{eq:P-polya} and \eqref{eq:Q-prior}, we can express a P\'olya urn scheme for the PCDP. For observation $i$, we  observe $K^{(i)} \leq i-1$ unique values in $\theta^{(i)} = (\theta_1, \dots, \theta_{i-1})$ since $P$ is almost surely discrete. Across the views, this translates into unique atoms $\theta_{v1}^*, \dots, \theta_{vK_v^{(i)}}^*$, where $K_v^{(i)} \leq K^{(i)}$ for $v=1,2$. In order to compute the predictive distribution of $\theta_i$, we must marginalize out $Q$ from \eqref{eq:P-polya}. Since $Q$ is almost surely discrete and a product measure, we derive a unique form of the usual P\'olya urn scheme for Bayesian clustering.
\begin{theorem} \label{thm:thm:polya-urn}
     The distribution of the $i$th atom conditional on the first $i-1$ atoms is
    \begin{equation} \label{eq:polya}
    \theta_i \mid \theta^{(i)}, \rho \sim  \sum_{k_1, k_2}  \frac{n_{k_1 k_2}}{\rho + i - 1} \delta_{(\theta_{1k_1}^*, \theta_{2k_2}^*)} + \frac{\rho}{\rho + i - 1} \bigtimes_{v=1}^2 \lb  \sum_{k_v=1}^{K_v^{(i)}} \frac{r_{vk_v} \delta_{\theta_{vk_v}^*}}{\gamma_v + r^{(i)}}   + \frac{\gamma_v H_v}{\gamma_v + r^{(i)}}  \rb,
    \end{equation}
    where $\bs r_v = (r_{v1}, \dots, r_{vK_v^{(i)}})$, $r_{vk_v} \in \lb 0, \dots, n_{vk_v} \rb$, and $\sum_{v=1}^{K_v^{(i)}} r_{vK_v} = r^{(i)}$ for $v=1,2$.
\end{theorem}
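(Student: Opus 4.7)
The plan is to obtain \eqref{eq:polya} in three steps: apply the standard Dirichlet process P\'olya urn to $P$ conditional on $Q$; recognize that conditioning on the observed atoms $\theta^{(i)}$ only partially determines the history of draws from $Q$; and finally integrate out $Q = Q_1 \times Q_2$ using the independent DP priors on each $Q_v$. The resulting expression will naturally split into a ``copy an existing pair'' term and a product over views of a ``draw from $Q_v$'' term.

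First, I would start from \eqref{eq:P-polya} with base measure $Q$. Grouping the first $i-1$ atoms by the $K^{(i)}$ unique pairs $(\theta_{1k_1}^*, \theta_{2k_2}^*)$ and collecting their multiplicities $n_{k_1 k_2}$ gives
$$\theta_i \mid \theta^{(i)}, \rho, Q \sim \sum_{k_1, k_2} \frac{n_{k_1 k_2}}{\rho + i - 1} \delta_{(\theta_{1k_1}^*, \theta_{2k_2}^*)} + \frac{\rho}{\rho + i - 1} Q.$$
The first summand already matches the first term in \eqref{eq:polya}, so the remaining work is to marginalize $Q$ out of the second summand.

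Next, conditional on the event that the $i$th atom is drawn from $Q$, one samples $\theta_{1i} \sim Q_1$ and $\theta_{2i} \sim Q_2$ independently, since $Q=Q_1 \times Q_2$ with independent DP priors. To apply the DP P\'olya urn to each $Q_v$, I need to know how many of the previous atoms in view $v$ were actually drawn directly from $Q_v$ rather than copied from an earlier $\theta_j$ via the top-level DP. Let $r_{vk_v} \in \{0, \dots, n_{vk_v}\}$ denote this number for the unique value $\theta_{vk_v}^*$, and set $r^{(i)} = \sum_{k_v=1}^{K_v^{(i)}} r_{vk_v}$, the total number of fresh draws from $Q_v$ in the history (equal in the two views, since each draw from $Q$ contributes exactly one draw to each $Q_v$). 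This is the single-restaurant, product-base analogue of the Chinese restaurant franchise augmentation used for the HDP.

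Given the augmented counts $\{r_{vk_v}\}$, the posterior of $Q_v$ is again a DP by DP conjugacy, and its P\'olya urn scheme yields
$$\theta_{vi} \mid \dots \sim \sum_{k_v=1}^{K_v^{(i)}} \frac{r_{vk_v}}{\gamma_v + r^{(i)}} \delta_{\theta_{vk_v}^*} + \frac{\gamma_v}{\gamma_v + r^{(i)}} H_v.$$
Because the two view-specific draws are conditionally independent, their joint distribution is exactly the product measure appearing in the second term of \eqref{eq:polya}. Substituting this back into the ``$Q$'' term of the first display completes the derivation.

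The main obstacle is bookkeeping for the table counts: the law of $\theta_i$ given only $\theta^{(i)}$, without the augmentation, is a mixture over admissible configurations of $\{r_{vk_v}\}$, so \eqref{eq:polya} is best read as the conditional law given both $\theta^{(i)}$ and the latent table-count augmentation, mirroring the franchise representation. Verifying that this augmentation is consistent with the DP posteriors for $Q_1$ and $Q_2$ is the only non-routine point; DP conjugacy, the usual P\'olya urn, and the a priori independence of $(Q_1, Q_2)$ then assemble the stated formula.
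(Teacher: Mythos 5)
Your proposal is correct and follows essentially the same route as the paper: the paper's proof likewise augments with the fresh draws from $Q$ (its ``pseudo-unique values'' $\vartheta^*_1,\dots,\vartheta^*_{r^{(i)}}$, whose tie counts are exactly your $r_{vk_v}$), applies the Blackwell--MacQueen urn to $P$ given $Q$, and marginalizes $Q$ via the independent DP posteriors of $Q_1$ and $Q_2$. Your closing remark that \eqref{eq:polya} should be read as conditional on the latent table-count augmentation matches how the paper's argument implicitly conditions on the pseudo-unique values throughout.
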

The two terms in \eqref{eq:polya} can be interpreted in terms of an independence centered CRP metaphor. Suppose that $\theta_{1i}$ denotes the table and $\theta_{2i}$ denotes the entr\'ee of the $i$th customer. With probability $1/(\rho + i - 1)$, the customer sits at an occupied table and, seeing an appetizing dish already placed at the table, orders the same entr\'ee as one of their tablemates, which is indicated by setting $\theta_i = \theta_j$. The probability of each option is weighted by their corresponding entry in the contingency table, so popular combinations of table and entr\'ee will have more of an influence on customer $i$'s choice. Otherwise, with probability $\rho/(\rho + i - 1)$, the customer chooses a table and orders an entr\'ee independently. By the time customer $i$ has entered the restaurant, $r^{(i)} \leq i-1$ customers have made their table and entr\'ee choices independently. Of these $r^{(i)}$ people, $r_{1k_1}$ have sat down at table $k_1$ and $r_{2k_2}$ have ordered entr\'ee $k_2$. The vector $\bs r_1$ is just the number of customers at each table who made their choices independently, and $\bs r_2$ is the same but across the entr\'ees. If customer $i$'s choice of table and entr\'ee are independent, then the probability of sitting at table $k_1$ is weighted by $r_{1k_1}$, and a similar property holds for the entr\'ee with $r_{2k_2}$.  

We can then update \eqref{eq:polya} for customer $i+1$ by adhering to the following rules. If customer $i$ selects table $k_1$ and entr\'ee $k_2$ from the contingency table $\bs n$, we set $n_{k_1k_2} = n_{k_1k_2} + 1$. Otherwise, customer $i$ chooses independently and we set $r^{(i+1)} = r^{(i)} + 1$. If customer $i$ happens to choose table $k_1$ and entre\'e $k_2$, we set $n_{k_1k_2} = n_{k_1k_2}+1$, $r_{1k_1} = r_{1k_1}+1$, and $r_{2k_2} = r_{2k_2}+1$. Suppose now that, given customer $i$ chooses independently, they opt for table $k_1$ but order a new entre\'e. Then we would create a new entry $n_{k_1 (K_2^{(i)}+1)}=1$ in the contingency table, create a new entry of $\bs r_2$ given by $r_{2(K_2^{(i)}+1)}=1$, and set $r_{1k_1} = r_{1k_1}+1$. A similar rule holds for sitting at a new table but asking for an already ordered entr\'ee. Finally, if customer $i$ opts for both a new table and new entr\'ee, we create $n_{(K_1^{(i)}+1)(K_2^{(i)}+1)}=1$, $r_{1(K_1^{(i)}+1)}=1$, and $r_{2(K_2^{(i)}+1)}=1$.   

Clearly, the larger $\rho$ is, the less likely it is that the choice of table depends on the choice of entr\'ee, meaning that all dishes should be distributed independently across the restaurant, $r^{(i)} \to i-1$, and $\bs r_v \to \bs n_v$ for $v=1,2$. However, the marginal concentration parameters $\gamma_v$ will have an impact on these choices as well. If, instead, we took $\gamma_1,\gamma_2 \to \infty$, we will find the opposite: all the customers at each table will order the same entr\'ee, and no two tables will have the same dish. That is, the choice of food is equivalent to the choice of a table. This behavior can be seen by taking limits in \eqref{eq:polya}, but we can also state these limiting properties in terms of $P$.
\begin{prop} \label{prop:convergence}
Recall that $\mathcal{S}_v$ is a $\sigma$-algebra on $\Theta_v$ for $v=1,2$.
    \begin{enumerate}
        \item Fix $\gamma_1, \gamma_2$ and let $S \subset \Theta$. Then as $\rho \to \infty$, $P(S) \overset{p}{\longrightarrow} Q(S)$. 
        \item  Fix $\rho$ and let $S = S_1 \times S_2$, where $S_v \in \mathcal{S}_v$ and $H_v(S_v)>0$ for $v=1,2$. Then as $\gamma_1, \gamma_2 \to \infty$, $P(S) \overset{d}{\longrightarrow} P_0(S)$ where $P_0 \sim \DP(\rho, H_1 \times H_2)$.
    \end{enumerate}
\end{prop}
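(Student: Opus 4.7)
The plan is to exploit the two-level structure of the PCDP. Conditional on $Q$, the top-level measure satisfies $P \mid Q \sim \DP(\rho, Q)$, so for any measurable $S \subset \Theta$ the standard Beta marginal of a DP gives
\begin{equation*}
P(S) \mid Q \sim \tx{Beta}\bigl(\rho Q(S), \rho(1 - Q(S))\bigr),
\end{equation*}
with $\E[P(S) \mid Q] = Q(S)$ and $\Var(P(S) \mid Q) = Q(S)(1-Q(S))/(\rho+1)$. Both parts will reduce to elementary second-moment or characteristic-function computations combined with the tower property.

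For part (i), fix $\gamma_1, \gamma_2$ and take any measurable $S \subset \Theta$. The tower property gives
\begin{equation*}
\E\bigl[(P(S) - Q(S))^2\bigr] = \E\bigl[\Var(P(S) \mid Q)\bigr] = \E\bigl[Q(S)(1-Q(S))\bigr]/(\rho+1) \leq 1/[4(\rho+1)],
\end{equation*}
which tends to $0$ as $\rho \to \infty$. Convergence in $L^2$ implies convergence in probability, so $P(S) \cvp Q(S)$.

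For part (ii), the first step is to show $Q(S) \cvp H_1(S_1)H_2(S_2) =: h$ as $\gamma_1, \gamma_2 \to \infty$. Because $Q_v \sim \DP(\gamma_v, H_v)$, we have $Q_v(S_v) \sim \tx{Beta}(\gamma_v H_v(S_v), \gamma_v H_v(S_v^c))$, which has mean $H_v(S_v)$ and variance $H_v(S_v)H_v(S_v^c)/(\gamma_v+1) \to 0$, so $Q_v(S_v) \cvp H_v(S_v)$. Since $Q_1$ and $Q_2$ are independent and each marginal converges in probability to a constant, the continuous mapping theorem yields $Q(S) = Q_1(S_1)Q_2(S_2) \cvp h$. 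The second step transfers this to $P(S)$. Let $\Phi(\alpha, \beta; t)$ denote the characteristic function of $\tx{Beta}(\alpha, \beta)$, which is jointly continuous in $(\alpha, \beta)$ on $[0,\rho]^2$ (with appropriate interpretation at the boundary as degenerate Betas) and satisfies $|\Phi| \leq 1$. Conditioning on $Q$ and applying the tower property gives
\begin{equation*}
\E\bigl[e^{itP(S)}\bigr] = \E\bigl[\Phi(\rho Q(S), \rho(1-Q(S)); t)\bigr],
\end{equation*}
and bounded convergence combined with the continuity of $\Phi$ yields
\begin{equation*}
\E\bigl[e^{itP(S)}\bigr] \longrightarrow \Phi(\rho h, \rho(1-h); t),
\end{equation*}
which is the characteristic function of $P_0(S)$ for $P_0 \sim \DP(\rho, H_1 \times H_2)$. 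L\'evy's continuity theorem then gives $P(S) \overset{d}{\longrightarrow} P_0(S)$.

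The main technical subtlety is in part (ii), where in-probability convergence of $Q(S)$ must be lifted through the conditional Beta distribution of $P(S) \mid Q$ to a distributional limit for $P(S)$. Using characteristic functions avoids any direct manipulation of joint moments, and the bound $|\Phi| \leq 1$ makes the interchange of limit and expectation immediate via bounded convergence. The hypothesis $H_v(S_v) > 0$ ensures that $h > 0$ and hence that the limiting distribution is nontrivial; the case $h = 1$ (possible when $H_v(S_v) = 1$) is covered by the continuous extension of $\Phi$ to the boundary, and $h = 0$ is excluded by hypothesis.
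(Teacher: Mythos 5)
Your proof is correct and takes essentially the same approach as the paper's. Part (i) is the identical conditional second-moment bound (the paper phrases it via Markov's inequality on $\Var[P(S)\mid Q]$ rather than $L^2$ convergence), and part (ii) has the same skeleton --- establish $Q_v(S_v)\cvp H_v(S_v)$, condition on $Q$ to get the Beta law of $P(S)$, use continuity of that law in its shape parameters, and interchange limit and expectation --- with the only cosmetic difference being that you pass through characteristic functions and L\'evy's continuity theorem where the paper works with the conditional CDF at continuity points and dominated convergence.
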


Proposition \ref{prop:convergence} states that in limiting cases for the concentration parameters, the PCDP converges to either a DP with a product base measure or the product measure of two independent DP measures. The first statement follows after applying Markov's inequality to show that for any $\epsilon>0$ and $\rho>0$, $\pi (|P(S)-Q(S)| > \epsilon \mid \rho) \leq 1/ \lb 4 \epsilon^2(\rho+1)\rb$. We verify the second statement by proving convergence of the CDF for $P(S)$ conditional on $Q_1(S_1)$ and $Q_2(S_2)$, then applying the dominated convergence theorem. Therefore, the choices of $\rho$, $\gamma_1$, and $\gamma_2$ are very important for the distribution of the random partitions. We will return to the effect of the concentration parameters on inference in Section \ref{section:CLIC}, where we show that in the PCDP, the dependence between $C_1$ and $C_2$ is modeled as a function of these values. 

The PCDP can be extended to $V \geq 2$ views by setting $Q=\bigtimes_{v=1}^V Q_v$, where $Q_v \mid \gamma_v, H_v \sim \DP(\gamma_v, H_v)$, then sampling $P \mid \rho, Q$ from \eqref{eq:P-prior}. Here, $\rho$ can be interpreted as a global clustering dependence parameter, i.e., the dependence structure between each pair $(C_v, C_u)$ is identical for all $v \neq u$. Additionally, a similar P\'olya urn scheme to \eqref{eq:polya} holds in the general case, where past combinations are weighted by entries in the contingency array $\n = (n_{k_1 \cdots k_V})_{k_1, \dots, k_V}$ and the customer now has $V$ choices to make when they enter the restaurant. Accordingly, the $V$-view cross-partition is defined as $D = (D_{k_1 \dots k_V})_{v=1, \dots, V}$, where $D_{k_1 \dots k_V} = \bigcap_{v=1}^V C_{vk_v}$. We can recover the contingency tables for any two views by summing along the dimensions of $\n$, and we can construct the associated $2$-view cross-partitions by taking unions of the sets in $D$. By Fubini's theorem, the univariate marginals of $P$ follow an HDP, and pairwise marginals are distributed as PCDPs with two views. Therefore, results from the two view case will carry over to any pairs of views in the $V$-partition setting.

\subsection{Finite Approximation}
Draws from the PCDP are random measures with a countably infinite number of support points, i.e. $P = \sum_{k_1=1}^\infty \sum_{k_2=1}^\infty p_{k_1 k_2} \delta_{(\theta_{1k_1}^*, \theta_{2k_2}^*)}$. However, this makes simulation and inference on $P$ computationally infeasible. In this section, we propose a finite measure $\tilde{P} = \sum_{k_1=1}^{L_1} \sum_{k_2=1}^{L_2} \tilde{p}_{k_1 k_2} \delta_{(\tilde{\theta}^*_{1k_1}, \tilde{\theta}^*_{2k_2})}$, where $L_1, L_2 < \infty$, which approximates $P$ as $L_1$ and $L_2$ grow large. Our method is similar to the approximation to the DP prior given by sampling a $L$-dimensional $p \sim \tx{Dir}(\rho/L, \dots, \rho/L)$ and $\theta_l^* \sim Q$ for $l=1, \dots, L$, then taking $L \to \infty$ \citep{ishwaran2002exact, lijoi2023finite}. \cite{teh2006sharing} showed several finite approximations to the HDP using the Dirichlet-multinomial distribution, which we adapt for the PCDP. The finite PCDP is given by
\begin{equation}
    \begin{aligned} \label{eq:finite-approximation}
    \tilde{p} \mid \rho, \tilde{q} & \sim \tx{Dir}(\rho \tilde q); & \tilde{q} & = \tilde{q}_1 \otimes \tilde{q}_2; \\
    \tilde{q}_v & \sim \tx{Dir}(\gamma_v/L_v, \dots, \gamma_v/L_v), & \tilde{\theta}^*_{vk_v} & \sim H_v \tx{ for } v=1,2;
\end{aligned}
\end{equation}
where $\otimes$ denotes the tensor product, i.e. $\tilde{q} = (\tilde{q}_{1k_1} \tilde{q}_{2k_2})_{k_1,k_2}$, and $k_v = 1, \dots, L_v$. We denote $\tilde{Q}_v = \sum_{k_v=1}^{L_v} \tilde q_{vk_v} \delta_{\tilde{\theta}_{vk_v}^*}$ to be the approximation of $\tilde{Q}_v$ and set $\tilde{Q} = \tilde{Q}_1 \times \tilde{Q}_2$. Using Fubini's theorem and results on DP approximations \citep{ishwaran2002exact}, we obtain the following result on 
convergence of $\tilde{Q}$ to $Q$. Our result is stated in terms of the expectation of a bounded measurable function $h(\theta)$. We denote $\int \int h(\theta) \diff \tilde{Q}(\theta) = \E_{\tilde Q}[h(\theta)]$ and $\int \int h(\theta) \diff Q(\theta) = \E_{Q}[h(\theta)]$.
\begin{prop} \label{prop:finite-approximation}
Let $h:\Theta \to \Theta^\prime \subset \mathbb{R}$ so that $h(\theta) = h_1(\theta_1)h_2(\theta_2)$, $h_v(\cdot)$ is $(\Theta_v, \mathcal{S}_v)$-measurable for $v=1,2$, and $h$ is bounded on $\Theta$. Then, as $L_1,L_2 \to \infty$,
\begin{equation} \label{eq:double-convergence}
    \int \int h(\theta) \diff \tilde{Q}(\theta) \overset{d}{\longrightarrow} \int \int h(\theta) \diff Q(\theta).
\end{equation}
\end{prop}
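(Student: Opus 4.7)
The plan is to reduce the two-view convergence to the standard one-dimensional Dirichlet approximation result by exploiting the product structure of both the test function and the base measure. First, I would factor both integrals. Since $h(\theta) = h_1(\theta_1)h_2(\theta_2)$ is bounded and $\tilde{Q} = \tilde{Q}_1 \times \tilde{Q}_2$ (resp.\ $Q = Q_1 \times Q_2$) is a finite product of probability measures, Fubini's theorem applies on both sides to give
\begin{equation*}
\int \int h(\theta)\, \diff \tilde{Q}(\theta) = \left(\int h_1\, \diff \tilde{Q}_1 \right)\left(\int h_2\, \diff \tilde{Q}_2 \right), \qquad \int \int h(\theta)\, \diff Q(\theta) = \left(\int h_1\, \diff Q_1 \right)\left(\int h_2\, \diff Q_2 \right).
\end{equation*}
The problem thus reduces to showing convergence in distribution of a product of two independent random integrals.

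Next, I would apply the finite Dirichlet approximation result of \cite{ishwaran2002exact} marginally. By construction in \eqref{eq:finite-approximation}, $\tilde{Q}_v$ is precisely the $L_v$-term Dirichlet approximation of $Q_v \sim \DP(\gamma_v, H_v)$, so for each fixed $v \in \{1,2\}$ and each bounded measurable $h_v$,
\begin{equation*}
    \int h_v\, \diff \tilde{Q}_v \overset{d}{\longrightarrow} \int h_v\, \diff Q_v \quad \text{as } L_v \to \infty.
\end{equation*}

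Finally, I would upgrade these two marginal convergences to joint convergence of the pair. By construction $\tilde{Q}_1 \ind \tilde{Q}_2$ at every $(L_1, L_2)$, and likewise $Q_1 \ind Q_2$, so by factoring the joint characteristic function, marginal convergence plus independence yields
\begin{equation*}
\left(\int h_1\, \diff \tilde{Q}_1,\; \int h_2\, \diff \tilde{Q}_2 \right) \overset{d}{\longrightarrow} \left(\int h_1\, \diff Q_1,\; \int h_2\, \diff Q_2 \right)
\end{equation*}
jointly as $L_1,L_2 \to \infty$. The continuous mapping theorem applied to the product map $(x,y)\mapsto xy$ then delivers \eqref{eq:double-convergence}.

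The only real obstacle is bookkeeping: confirming Fubini on both sides (immediate from boundedness of $h$ and the fact that $\tilde{Q}$ and $Q$ are probability measures), and that independence of the two view-specific measures at every finite $(L_1,L_2)$ is preserved in the limit so that marginal convergence implies joint convergence. Both are routine, and the substantive content is carried entirely by the one-view Ishwaran--Zarepour approximation, which is what makes the product structure in the definition of the PCDP so convenient.
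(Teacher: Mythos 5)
Your proposal is correct and follows essentially the same route as the paper's proof: factor both integrals via Fubini using the product structure of $h$ and the base measures, invoke the one-view result of \cite{ishwaran2002exact} for each $\tilde{Q}_v$, and use the mutual independence of the view-specific measures together with the continuous mapping theorem to pass to the product. Your explicit justification of the marginal-to-joint convergence step via characteristic functions is a detail the paper leaves implicit, but the substance is identical.
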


With regard to $\tilde{P}$, observe that, as in the infinite case, $\tilde{P} \mid \rho, \tilde{Q} \sim \DP(\rho, \tilde{Q})$: if $S_1, \dots, S_K$ is a partition of $\Theta$, then $(\tilde{P}(S_1), \dots, \tilde{P}(S_M)) \mid \rho, \tilde{Q} \sim \tx{Dir}(\rho \tilde{Q}(S_1), \dots, \rho \tilde{Q}(S_M))$, where $\tilde{Q}(S_m) = \sum_{k_1, k_2} \textbf{1}_{(\tilde{\theta}^*_{1k_1}, \tilde{\theta}^*_{2k_2}) \in S_m } \tilde{q}_{1k_1} \tilde{q}_{2k_2}$. 
Examples of functions $h$ satisfying the conditions of Proposition \ref{prop:finite-approximation} are the CDFs of $X_i$ in the correlated and uncorrelated views models, i.e. \eqref{eq:multiview-model} and \eqref{eq:conditional-model}. This shows that the distribution of $X_i$ under the finite model will resemble the infinite case as $L_1, L_2 \to \infty$. For example, the expected CDF of $X_i$ in the finite approximation is $\tilde{\E}[\tilde{\Pi}(x_i \mid \tilde{P}) \mid \tilde Q]= \int \int F(x_i; \theta ) \diff \tilde{Q}(\theta)$ and in the regular PCDP is $\E[\Pi(x_i \mid P) \mid Q] = \int \int F(x_i; \theta ) \diff Q(\theta)$, where $F(x_i;\theta)$ is the CDF for $f(x_i;\theta)$. Then, \eqref{eq:double-convergence} implies that $\tilde{\E}[\tilde{\Pi}(x_i \mid \tilde{P}) \mid \tilde Q] \overset{d}{\to} \E[\Pi(x_i \mid P) \mid Q]$.

Properties from the infinite PCDP carry over into the finite approximation. For example, the marginals $\tilde{P}_v$ of $\tilde{P}$ are distributed as finite approximations to the HDP, and using a similar argument to the one above, we can show that these provide accurate approximations to the view-specific marginal CDFs. In Section \ref{section:computation}, we derive a Gibbs sampler for multiview and conditional models using the finite approximation in \eqref{eq:finite-approximation}. In practice, we recommend setting $L_1$ and $L_2$ to be large positive integers that serve as prior beliefs on an upper bound to the number of clusters $K_1$ and $K_2$ along the views. We will show in the following section that the expectation of the Rand index for the finite approximation converges to that of the PCDP at a $(L_1L_2)^{-1}$ rate.

\section{Clustering with Independence Centring} \label{section:CLIC}
Recall that ties in $\theta_{vi}$ generate a partition $C_v = (C_{v1}, \dots, C_{vK_v}) \in \Pn$ for $v=1,2$. In this section, we address the induced joint distribution of $(C_1,C_2)$, which we denote as $\CLIC(\rho, \gamma_1, \gamma_2)$. We will show that the joint distribution of $(C_1,C_2)$ depends only on the contingency table $\n = (n_{k_1 k_2})_{k_1,k_2}$, where $n_{k_1k_2} = |C_{1k_1} \cap C_{2k_2}|$. This behavior is comparable to an exchangeable partition probability function (EPPF), a special type of PMF for random partitions that is an exchangeable function of the cluster sizes \citep{pitman2002combinatorial}. The cluster sizes can be recovered via summation, i.e. $n_{1k_1} = \sum_{k_2}n_{k_1k_2}$ (row summation) and $n_{2k_2} = \sum_{k_1} n_{k_1k_2}$ (column summation). The $\CRP(\gamma)$ distribution in \eqref{eq:CRP-PMF} is an example of an EPPF, though partitions induced by Pitman-Yor processes, symmetric Dirichlet-Multinomial sampling, and even hierarchical processes such as the HDP also admit EPPFs \citep{camerlenghi2018bayesian,camerlenghi2019distribution}. The telescopic EPPF \citep{franzolini2023bayesian} is a generalization of EPPFs to multiview clustering, and closed-form expression of telescopic EPPFs exist for several multiview clustering models. 

We say that the joint partition function of $(C_1,C_2)$ is a marginally exchangeable partition probability function (MEPPF) if $\pi(C_1,C_2)$ is a function of the contingency table $\n$ that is invariant to permutations of the \textit{marginal} cluster indices. For example, permutations of the rows or columns of the contingency table will result in the same prior probability, i.e. if $\pi(C_1,C_2) = g( \n)$ and if $\omega$ is a permutation of $[K_1]$ and $\n^\omega = (n_{\omega(k_1), k_2})_{k_1,k_2}$, then $g(\n) = g(\n^\omega)$; an identical guarantee holds for permutations of $[K_2]$. Observe that exchangeability of the marginal cluster indices is not equivalent to exchangeability of the entries in $\n$. Pragmatically, this is to ensure that we can always recover the marginal cluster sizes by summing along the rows or columns of $\n$. We now present the closed-form expression for the MEPPF of the $\CLIC(\rho, \gamma_1, \gamma_2)$ distribution. 
\begin{theorem} \label{thm:joint-EPPF}
    Suppose $\mathcal{S}_v$ is the Borel algebra of $\Theta_v$ for $v=1,2$. If $(C_1,C_2) \sim \CLIC(\rho, \gamma_1, \gamma_2)$, then
    \begin{gather} \label{eq:joint-EPPF}
        \pi(C_1,C_2 \mid \rho, \gamma_1, \gamma_2) = \frac{\gamma_1^{K_1}\gamma_2^{K_2}}{(\rho)^{(n)}} \sum_{ \bs r \in \mathcal R} \rho^{r} \lb \prod_{v=1}^{2}  \frac{\prod_{k_v} (r_{vk_v}-1)!}{(\gamma_v)^{(r)}} \rb \prod_{k_1,k_2} |s(n_{k_1 k_2}, r_{k_1 k_2})|,
    \end{gather}
    where $(\rho)^{(n)} = \Gamma(\rho + n)/\Gamma(\rho)$, $\mathcal R =  \lb \bs r =  (r_{k_1 k_2})_{k_1,k_2}: r_{k_1k_2} \in [n_{k_1, k_2}] \: \forall k_1,k_2 \rb$, $r = \sum_{k_1,k_2} r_{k_1 k_2}$, $r_{1k_1} = \sum_{k_2}r_{k_1k_2}$, $r_{2k_2} = \sum_{k_1} r_{k_1k_2}$, and $|s(n,m)|$ are the unsigned Stirling numbers of the first kind.
\end{theorem}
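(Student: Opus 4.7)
The plan is to derive \eqref{eq:joint-EPPF} by augmenting with an auxiliary \emph{top-level} partition $\tilde C$ of $[n]$, whose blocks collect observations that trace back to the same ``independent draw'' from $Q = Q_1 \times Q_2$ in the P\'olya urn representation of Theorem \ref{thm:thm:polya-urn}. Each block of $\tilde C$ is contained in exactly one cell $D_{k_1 k_2}$ of the cross-partition, and counting blocks per cell produces $\bs r = (r_{k_1 k_2})$ with $r_{k_1 k_2} \in \{1, \dots, n_{k_1 k_2}\}$ (the lower bound holds since any nonempty cell must be created by at least one independent draw). The strategy is to compute the joint probability of $(\tilde C, C_1, C_2)$, sum out $\tilde C$ while keeping $(C_1, C_2, \bs r)$ fixed, and sum the result over $\bs r \in \mathcal R$.

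The key observation is that the two-level DP hierarchy factors into three exchangeable CRP layers because $P \mid Q \sim \DP(\rho, Q)$ and $Q_v \mid H_v \sim \DP(\gamma_v, H_v)$ independently with $H_v$ non-atomic on the Borel space $(\Theta_v, \mathcal{S}_v)$. Integrating out $P$ given $Q$ yields a $\CRP(\rho)$ on $[n]$ whose blocks are $\tilde C$, contributing $\rho^r \prod_j (\tilde n_j - 1)! / (\rho)^{(n)}$, with $\tilde n_j$ the size of the $j$th block and $r = |\tilde C|$. Each of the $r$ top-level blocks carries a representative $(\tilde\theta_{1,j}^*, \tilde\theta_{2,j}^*) \sim Q_1 \times Q_2$; integrating out $Q_v$ gives a $\CRP(\gamma_v)$ on the $r$ representatives with $K_v$ blocks of sizes $(r_{v1}, \dots, r_{vK_v})$, contributing $\gamma_v^{K_v} \prod_{k_v} (r_{vk_v} - 1)! / (\gamma_v)^{(r)}$. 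Non-atomicity of $H_v$ ensures newly drawn atoms are almost surely distinct, so $C_v$ is a deterministic function of $\tilde C$ and the view-$v$ $\CRP(\gamma_v)$ partition on the representatives: merge top-level blocks whose representatives share a $Q_v$-atom.

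Summing out $\tilde C$ while holding $(C_1, C_2, \bs r)$ fixed reduces to independently partitioning each cell $D_{k_1 k_2}$ into $r_{k_1 k_2}$ nonempty blocks, since $\tilde C$ is determined by its restrictions to the cells. Using the identity
\[
\sum_{\pi \in \Pn,\, |\pi|=r} \prod_j (|\pi_j| - 1)! = |s(n, r)|,
\]
which follows from the $\CRP(\alpha)$ normalization $(\alpha)^{(n)} = \sum_r |s(n,r)| \alpha^r$, the cell-wise sum collapses to $\prod_{k_1, k_2} |s(n_{k_1 k_2}, r_{k_1 k_2})|$. Multiplying the three CRP factors, rearranging, and summing over $\bs r \in \mathcal R$ yields the right-hand side of \eqref{eq:joint-EPPF}. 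The main obstacle will be rigorously justifying the hierarchical factorization: one must formally integrate $P$ against $\DP(\rho, Q)$ and then $Q_1, Q_2$ against independent $\DP(\gamma_v, H_v)$ priors, and verify that $(C_1, C_2, \bs r)$ together with the cell-wise restrictions of $\tilde C$ form a sufficient latent structure. The Chinese-restaurant-franchise machinery for hierarchical processes in \cite{camerlenghi2018bayesian, camerlenghi2019distribution} provides the right language: although the PCDP is not itself a hierarchical process, the marginal $P_v$ is an HDP and analogous bookkeeping disentangles the three exchangeable layers.
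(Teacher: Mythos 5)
Your proposal is correct, but it takes a genuinely different route from the paper. The paper's proof is analytic: it computes the moment measure $M_{\n}(d\theta_1,\dots,d\theta_K)=\E\bigl[\prod_{k_1,k_2}P^{n_{k_1k_2}}(d(\theta_{1k_1},\theta_{2k_2}))\bigr]$ over shrinking $\epsilon$-rectangles using the Laplace-functional representation of the DP (equation (A2) of \cite{camerlenghi2018bayesian}), factorizes $\E\prod_{k_1,k_2}Q^{r_{k_1k_2}}(B_{k_1,k_2,\epsilon})$ into view-specific moments via the product-measure structure and the independence of $Q_1,Q_2$, evaluates those marginal moments with Proposition 3 of \cite{james2009posterior} to produce the $\gamma_v^{K_v}\prod_{k_v}(r_{vk_v}-1)!/(\gamma_v)^{(r)}$ factors, and then takes $\epsilon\to 0$. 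You instead work combinatorially through the Chinese-restaurant-franchise augmentation: a latent table partition $\tilde C$ contributing the $\CRP(\rho)$ weight, independent $\CRP(\gamma_v)$ partitions of the $r$ table representatives contributing the view-specific factors, and a cell-wise marginalization of $\tilde C$ via the identity $\sum_{|\pi|=r}\prod_j(|\pi_j|-1)!=|s(n,r)|$. Your bookkeeping is sound: each table lies in exactly one cell of the cross-partition, so the fiber over $(C_1,C_2,\bs r)$ is exactly the product of cell-wise partitions into $r_{k_1k_2}$ blocks, and non-atomicity of $H_v$ guarantees the merging map from $(\tilde C,T_1,T_2)$ to $(C_1,C_2)$ is well defined. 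What each approach buys: yours is more elementary and makes transparent where each factor originates, and it connects directly to the root-partition sampling interpretation the paper gives in \eqref{eq:root-partitions}--\eqref{eq:C1-C2-conditional}; the paper's moment-measure argument handles the measure-theoretic subtleties (simultaneous atoms, diffuse limits) through the $\epsilon$-rectangle construction and would extend with little change to hierarchies built from more general completely random measures, where no clean restaurant-franchise combinatorics is available.
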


We present the proof of Theorem \ref{thm:joint-EPPF} in the Supplemental Material; the result follows by adapting the derivations of \cite{camerlenghi2018bayesian} on the EPPF induced by the HDP and exploiting the product measure formulation of $Q$ and the prior independence of $Q_1$ and $Q_2$. Observe that $\pi(C_1, C_2 \mid \rho, \gamma_1, \gamma_2)$ is itself a mixture model over independent contingency tables in $\mathcal{R}$, given by the \textit{root partitions} $T_1$ and $T_2$. First, we sample $r \in [n]$ via $\pi(r=w \mid \rho) \propto \rho^w$ for $w =1, \dots, n$, then we sample two partitions of $[r]$ from
\begin{equation} \label{eq:root-partitions}
    \pi(T_1, T_2 \mid \gamma_1, \gamma_2) = \prod_{v=1}^2 \frac{\gamma_v^{K_v} \prod_{k_v} (r_{vk_v}-1)!}{(\gamma_v)^{(r)}},
\end{equation}
which is the product of the EPPFs for two independent $\CRP(\gamma_v)$ partitions. Finally, we sample $(C_1, C_2)$ conditionally on $(T_1,T_2)$ from
\begin{equation} \label{eq:C1-C2-conditional}
    \pi(C_1, C_2 \mid \rho, \gamma_1, \gamma_2, T_1, T_2) \propto \frac{\Gamma(\rho)}{\Gamma(n + \rho)} \prod_{k_1,k_2} |s(n_{k_1 k_2}, r_{k_1 k_2})|.
\end{equation}
Hence, we can reinterpret the data generating process as first drawing two root partitions, then perturbing their cross partition using the probability mass function in \eqref{eq:C1-C2-conditional}. An important consequence of this decomposition is that
\begin{equation} \label{eq:Cv-number-of-clusters}
    \pi(K_v = m \mid \rho, \gamma_v) = \begin{cases}
        \sum_{w=1}^n \frac{\rho^{w-1}(1-\rho)}{1-\rho^n} \Pr[ K(w; \gamma_v) = m]  & \tx{ for } \rho \neq 1; \\
       \frac{1}{n} \sum_{w=1}^n \Pr[ K(w; \gamma_v) = m] & \tx{ for } \rho = 1;
    \end{cases}, 
\end{equation}
where $K(w; \gamma_v)$ is the number of clusters in a sample of $w$ objects from a $\CRP(\gamma_v)$ distribution, see \cite{antoniak1974mixtures} and \cite{gnedin2006exchangeable} for further information on this quantity. Therefore, both $\rho$ and $\gamma_v$ impact the number of clusters in the marginals. As $\rho \to \infty$, $\rho^{n-1}(1 - \rho)/(1-\rho^n) \to 1$, meaning that $K_v \overset{d}{\longrightarrow} K(n; \gamma_v)$, as expected by our discussion in the previous section. Finally, $K_v$ grows very slowly with the sample size a priori, as it has been shown that the number of clusters in an HDP scales asymptotically at a $\log \log n$ rate \citep{teh2010hierarchical,camerlenghi2019distribution}.

The marginal distributions of $P$ in \eqref{eq:P-marginal} imply that $C_1$ and $C_2$ are both distributed as HDP partitions, and if $\gamma_1=\gamma_2$, then $C_1 \overset{d}{=} C_2$. The EPPF of $C_v$ is then given by
\begin{equation} \label{eq:marginal-EPPF}
    \pi(C_v \mid \rho, \gamma_v) = \frac{\gamma_v^{K_v}}{(\rho)^{(n)}} \sum_{r_v \in \mathcal R_v} \frac{\rho^{r}}{(\gamma_v)^{(r)}} \prod_{k_v=1}^{K_v}(r_{vk_v}-1)! |s(n_{vk_v}, r_{vk_v})|, \tx{ for } v=1,2;
\end{equation}
where the domain of the sum is $\mathcal R_v = \lb  \bs r_v = (r_{v1}, \dots, r_{vK_v}) : r_{vk_v} \in [n_{vk_v}], k_v=1,\dots,K_v \rb$ \citep{camerlenghi2018bayesian}. Similar to the joint distribution, the marginals admit a conditional sampling scheme given a root partition $T_v$. As we also showed in \eqref{eq:Cv-number-of-clusters}, $\rho$ affects not only the joint distribution of the partitions but also the marginals. This phenomenon is evident in the P\'olya urn scheme in \eqref{eq:polya}. The probability of choosing a table and dish independently, which includes the possibility of picking a new table or dish, is controlled by $\rho$. Hence, for very small $\rho$, the number of clusters in both $C_1$ and $C_2$ is expected to be small. 

Following \cite{page2022dependent} and \cite{franzolini2023conditional}, we use the prior expected Rand index (ERI) $\tau_{12} = \sum_{C_1, C_2 \in \Pn} R(C_1, C_2) \pi(C_1, C_2 \mid \rho, \gamma_1, \gamma_2)$ to measure dependence between $C_1$ and $C_2$. For convenience, we will assume that $\gamma_1 = \gamma_2 = \gamma$, but a similar result holds for differing values of $\gamma_v$. For two independent CRP partitions, their ERI is given by $\tau_{12}^0 = (1+\gamma^2)/(1+\gamma)^2$ \citep{page2022dependent}. Under the PCDP, the ERI is a weighted average between the ERI of two independent CRP partitions and $1$, the supremum of the Rand index.
\begin{theorem} \label{thm:ERI}
    Let $(C_1, C_2) \sim \CLIC(\rho, \gamma, \gamma)$. Then,
    \begin{equation} \label{eq:ERI}
        \tau_{12} = \frac{1}{\rho+1} + \left( 1 - \frac{1}{\rho+1} \right) \frac{1 + \gamma^2}{(1+\gamma)^2}.
    \end{equation}
\end{theorem}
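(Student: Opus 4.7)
The plan is to reduce $\tau_{12}$ to a single pairwise probability using exchangeability, then evaluate that probability via the two-level P\'olya urn of the PCDP. Since the atoms $\theta_1, \ldots, \theta_n$ are exchangeable under \eqref{eq:P-prior}--\eqref{eq:Q-prior}, the probabilities $\P(c_{1i}=c_{1j}, c_{2i}=c_{2j})$ and $\P(c_{1i}\neq c_{1j}, c_{2i}\neq c_{2j})$ do not depend on the pair $(i,j)$ with $i\neq j$. Writing $p_v := \P(c_{v,1}=c_{v,2})$ and $p_{12} := \P(c_{1,1}=c_{1,2},\, c_{2,1}=c_{2,2})$, inclusion--exclusion gives $\P(c_{1,1}\neq c_{1,2},\, c_{2,1} \neq c_{2,2}) = 1 - p_1 - p_2 + p_{12}$, so $\tau_{12} = 2p_{12} - p_1 - p_2 + 1$. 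Under $\gamma_1=\gamma_2=\gamma$, the symmetry between the two views gives $p_1 = p_2$, so it only remains to compute $p_v$ and $p_{12}$.

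I would apply \eqref{eq:P-polya} at $i=2$: with probability $1/(\rho+1)$, $\theta_2 = \theta_1$ exactly and both coordinates tie trivially; with probability $\rho/(\rho+1)$, $\theta_2 \sim Q = Q_1 \times Q_2$ independently of $\theta_1$. Conditional on $(Q_1, Q_2, \theta_1)$, the product structure of $Q$ factors the joint tie event:
\begin{equation*}
\P(\theta_{1,2}=\theta_{1,1},\, \theta_{2,2}=\theta_{2,1} \mid Q_1, Q_2, \theta_1) \;=\; Q_1(\{\theta_{1,1}\})\, Q_2(\{\theta_{2,1}\}).
\end{equation*}
Taking expectations, the key step is that $(Q_1, \theta_{1,1})$ is independent of $(Q_2, \theta_{2,1})$: the prior independence $Q_1 \ind Q_2$ from \eqref{eq:Q-prior} combined with the fact that, after integrating out $P$, $\theta_1 \sim Q_1 \times Q_2$, so that $\theta_{v,1}$ depends only on $Q_v$. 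Hence $\E[Q_1(\{\theta_{1,1}\})Q_2(\{\theta_{2,1}\})] = \prod_{v=1}^2 \E[Q_v(\{\theta_{v,1}\})]$, while for $p_v$ only the single-view analog is needed. Finally, since $\theta_{v,1} \sim Q_v$ and $Q_v \sim \DP(\gamma, H_v)$, the Blackwell--MacQueen urn gives $\E[Q_v(\{\theta_{v,1}\})] = 1/(\gamma+1)$ (the probability that two i.i.d.\ draws from $Q_v$ coincide). Assembling the two cases yields
\begin{equation*}
p_v = \frac{1}{\rho+1} + \frac{\rho}{(\rho+1)(\gamma+1)}, \qquad p_{12} = \frac{1}{\rho+1} + \frac{\rho}{(\rho+1)(\gamma+1)^2}.
\end{equation*}

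Substituting into $\tau_{12} = 2p_{12} + 1 - 2p_v$ yields $\tau_{12} = 1 - 2\rho\gamma/[(\rho+1)(\gamma+1)^2]$, and a routine polynomial manipulation verifies that this equals $(\rho+1)^{-1} + [1-(\rho+1)^{-1}](1+\gamma^2)/(1+\gamma)^2$, establishing \eqref{eq:ERI}. The main obstacle is the factorization step in the previous paragraph: it is the product-measure base $Q = Q_1 \times Q_2$ with independent $Q_v$ that decouples the joint tie probability into view-specific marginal tie probabilities. For a generic HDP with a single DP-distributed base on $\Theta_1 \times \Theta_2$, this factorization would fail and the convex-combination form \eqref{eq:ERI} with weights $1/(\rho+1)$ and $1-1/(\rho+1)$ would not emerge.
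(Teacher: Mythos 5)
Your proof is correct, but it takes a genuinely different route from the paper's. The paper works with the stick-breaking representation: it writes $P=\sum_{k_1,k_2}p_{k_1k_2}\delta_{(\theta^*_{1k_1},\theta^*_{2k_2})}$ with $p\mid\rho,q\sim\DP(\rho,q_1\otimes q_2)$, expresses $\E[R(C_1,C_2)\mid p]$ as $\sum_{k_1,k_2}p_{k_1k_2}^2+\sum_{k_1\neq m_1,k_2\neq m_2}p_{k_1k_2}p_{m_1m_2}$, and then grinds through Beta moments of the $p_{k_1k_2}$, second moments and cross-moments of the stick-breaking weights $q_{vk_v}$, and several geometric series (justifying the interchange of expectation and infinite summation by Fubini). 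You instead exploit exchangeability to reduce $\tau_{12}$ to the single-pair identity $\tau_{12}=2p_{12}+1-p_1-p_2$, and evaluate $p_{12}$ and $p_v$ with one application of the top-level P\'olya urn plus the Blackwell--MacQueen co-clustering probability $1/(\gamma+1)$ for each marginal $\DP(\gamma,H_v)$ with $H_v$ non-atomic. The one step that deserves the emphasis you give it --- the factorization $\E[Q_1(\{\theta_{1,1}\})Q_2(\{\theta_{2,1}\})]=\prod_v\E[Q_v(\{\theta_{v,1}\})]$ --- is justified correctly: conditional on $(Q_1,Q_2)$ the first atom satisfies $\theta_1\sim Q_1\times Q_2$, so $(Q_1,\theta_{1,1})\ind(Q_2,\theta_{2,1})$, and this is exactly where the product-measure base is used (the same structural fact the paper exploits, but at the level of weights rather than tie probabilities). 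Your route is shorter and more elementary, avoids the moment calculations for the GEM weights entirely, and immediately yields the asymmetric case $\gamma_1\neq\gamma_2$ (replacing $1/(\gamma+1)^2$ by $1/\{(\gamma_1+1)(\gamma_2+1)\}$, matching the paper's closing remark); what the paper's stick-breaking computation buys in exchange is the explicit conditional representation $\E[R(C_1,C_2)\mid p]$ and the weight moments, which it reuses almost verbatim to derive the ERI of the finite approximation $\tilde\tau_{12}$ in the Supplement, where your urn-based shortcut would have to be redone from scratch.
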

Our intuition on the magnitude of the concentration parameters carries over to the ERI. As $\rho \to \infty, \tau_{12} \to (1+\gamma^2)/(1+\gamma)^2$, whereas if $\gamma \to \infty, \tau_{12} \to 1$. While the Rand index is guaranteed to be in $[0,1]$, we have that $(1+\gamma^2)/(1+\gamma)^2 \leq \tau_{12} \leq 1$. Similar truncation occurs in several models for clustering, which motivated the adjusted Rand index (ARI) \citep{hubert1985comparing} and, more recently, the telescopic adjusted Rand index (TARI) \citep{franzolini2023bayesian}, which removes the lower bound imposed by the independent model. In the latter case, we have that the expected TARI is simply equal to $\nu = 1/(\rho + 1)$ under the PCDP.

Recall that $\gamma$ controls the choices of customer $i$ in selecting a table and  
entr\'ee, with large $\gamma$ leading to trivial dependence between $C_1$ and $C_2$. We now have the added interpretation that $\gamma$ determines the ERI of the null independence model. That is, the ERI between $C_1$ and $C_2$ will be at least as large as the ERI between two independent $\CRP(\gamma)$ partitions. If we assign $\rho$ and $\gamma$ priors, we induce a prior directly on $\tau_{12}$. Since $\gamma$ is the DP concentration parameter of $Q_1,Q_2$ and $\rho$ is the DP concentration parameter of $P \mid Q$, we can model both using semi-conjugate prior distributions, such as a Gamma prior \citep{escobar1995bayesian} or a Stirling-Gamma prior \citep{zito2023bayesian}. Then, samples from the posterior distribution for the ERI $\pi(\tau_{12} \mid \X)$ can be obtained from the samples of $\rho,\gamma$.

A similar result on the ERI holds for the finite approximation to CLIC. In this setting, we sample $C_1,C_2$ using a matrix multinomial distribution, with $\tilde{\pi}(c_{1i} = k_1, c_{2i} = k_2 \mid \tilde{p}) = \tilde{p}_{k_1k_2}$ for all $k_1=1,\dots, L_1, \: k_2=1, \dots, L_2$. We show in the Supplement that if $\Tilde{\tau}_{12}$ is the ERI resulting from the model in \eqref{eq:finite-approximation}, then $\Tilde{\tau}_{12} = \nu + (1-\nu) \kappa(L_1,L_2)$, where $\kappa(L_1,L_2)$ is the ERI of two independent partitions with symmetric Dirichlet-multinomial priors and $\lim_{L_1,L_2 \to \infty} \kappa(L_1,L_2) = (1+\gamma^2)/(1+\gamma)^2$. This result also provides insight on the impact of setting $L_1$ and $L_2$ to finite values. For simplicity, suppose that $L_1 = L_2 = L$ for some fixed $L>0$. One can show that $| \tau_{12} - \tilde \tau_{12} | \propto | (\gamma/L) + 1 - \gamma  |/L$, where proportionality is in terms of $L$. That is, the rate of convergence for the ERI of the finite approximation to the PCDP is quadratic in $L^{-1}$. 

\section{Posterior Computation} \label{section:computation}

\subsection{Gibbs Sampler}

Given a dataset $\X$ and a multiview clustering model such as \eqref{eq:multiview-model} or \eqref{eq:conditional-model}, we focus on inferring the view-specific partitions $(C_1,C_2)$ and estimating the degree of dependence between the partitions using the Rand index or the TARI. Since the PCDP as presented in Section \ref{section:PCDP} has infinitely many mixture components, we present an algorithm specific to the finite approximation in \eqref{eq:finite-approximation}. We will assume that $H_v$ are conjugate to the kernels in either \eqref{eq:multiview-model} or \eqref{eq:conditional-model}, and that $\rho \sim \tx{Gamma}(a_\rho, b_\rho)$ for some $a_\rho, b_\rho > 0$ or $\pi(\rho = u) \propto 1$ for a set of finite numbers $u$ defined on a grid $U$, leading to a griddy-Gibbs sampler \citep{ritter1992facilitating}. The key strategy in our approach is to marginalize out the probability matrix $\tilde p$ from the multinomial likelihood of $(C_1,C_2)$, similar to posterior inference on the finite approximation to the HDP \citep{teh2006sharing}. This leads to a Gibbs sampler for the two view setting, but computation is similar when there are $V > 2$ views.

Let $K_v$ be the number of non-empty clusters in $C_v$, where $K_v \leq L_v$ for $v=1,2$. By marginalizing $\tilde p$ from $\pi(C_1,C_2 \mid \tilde{p}, \rho)$, one can show that the prior probability mass function of $(C_1,C_2)$ is given by
\begin{equation} \label{eq:S-Sprime-marginal}
    \pi(C_1,C_2 \mid \tilde q, \rho) = \frac{\Gamma(\rho)}{\Gamma(n + \rho)} \prod_{k_1=1}^{L_1} \prod_{k_2=1}^{L_2} \frac{\Gamma(n_{k_1k_2} + \rho \tilde q_{1k_1} \tilde q_{2k_2})}{ \Gamma(\rho \tilde q_{1k_1} \tilde q_{2k_2})  },
\end{equation}
where $n_{k_1 k_2} = 0$ if either $k_1 > K_1$ or $k_2 > K_2$. Denote $C_1^{-i}$ and $C_2^{-i}$ to be the partitions induced on $[n]\setminus \lb i \rb$ by removing observation $i$. We have that $\pi(c_{1i} = k_1, c_{2i} = k_2 \mid  C_1^{-i},  C_2^{-i}, \tilde q, \rho) \propto \rho \tilde q_{1k_1} \tilde q_{2k_2} + n_{k_1k_2}^{-i}$, where $n_{k_1 k_2}^{-i} = \sum_{j \neq i} \textbf{1}(c_{1j} = k_1, c_{2j} = k_2)$ for all $k_1 \in [K_1], k_2 \in [K_2]$ and $n_{k_1 k_2}^{-i} = 0$ if either $k_1 > K_1$ or $k_2 > K_2$. The Gibbs update of observation $i$ is given by the formula
\begin{equation}
    \pi(c_{1i} = k_1, c_{2i} = k_2 \mid -) \propto f(X_{i} ; (\theta_{1k_1}^*, \theta_{2k_2}^*))\left( \rho \tilde q_{1k_1} \tilde q_{2k_2} + n_{k_1k_2}^{-i}  \right),
\end{equation}
for all $k_1 \in [L_1], k_2 \in [L_2]$, where ``$-$" refers to $C_1^{-i},  C_2^{-i}, \tilde q, \rho, \theta^*_{1k_1}, \theta_{2k_2}^*$, and $\X$.

In order to sample the full conditional distribution of $\tilde q_1$, $\tilde q_2$, and $\rho$, we adapt an auxilary variable scheme used for inference on HDPs to our model \citep{teh2006sharing}. By using properties of the Gamma function, one can show that the product terms in \eqref{eq:S-Sprime-marginal} can be decomposed in the following way for all $k_1 \in [K_1], k_2 \in [K_2]$,
\begin{equation*}
    \frac{\Gamma(n_{k_1k_2} + \rho \tilde q_{1k_1} \tilde q_{2k_2})}{ \Gamma(\rho \tilde q_{1k_1} \tilde q_{2k_2})  }= \sum_{r_{k_1 k_2} = 0}^{n_{k_1 k_2}} |s(n_{k_1 k_2}, r_{k_1 k_2})| (\rho \tilde q_{1k_1} \tilde q_{2k_2})^{r_{k_1 k_2}},
\end{equation*}
where $|s(n,m)|$ are the unsigned Stirling numbers of the first kind. We can rewrite \eqref{eq:S-Sprime-marginal} as being the marginals of the following joint distribution based on a matrix $\bs r = (r_{k_1k_2})_{k_1,k_2}$ of auxiliary variables,
\begin{equation} \label{eq:auxiliary-Gibbs}
    \Pi(C_1, C_2, \bs r \mid \tilde q_1, \tilde q_2, \rho) = \frac{\Gamma(\rho)}{\Gamma(n + \rho)} \prod_{k_1=1}^{K_1} \prod_{k_2=1}^{K_2} |s(n_{k_1k_2}, r_{k_1 k_2})| (\rho \tilde  q_{1k_1} \tilde q_{2k_2})^{r_{k_1 k_2}}.
\end{equation}
Note that these auxiliary variables correspond to the domain of summation in the MEPPF \eqref{eq:joint-EPPF} for the PCDP, and that the product is over all non-empty clusters. Similarly, we define clustering-specific auxiliary variables $r_{1 k_1} = \sum_{k_2=1}^{K_2} r_{k_1 k_2}$ and $r_{2k_2} = \sum_{k_1=1}^{K_1} r_{k_1 k_2}$. To update $r_{k_1 k_2}$, we sample from a multinomial distribution using
\begin{align}
    \pi(r_{k_1 k_2} = w \mid C_1, C_2, \bs r^{- k_1 k_2}) & \propto |s(n_{k_1 k_2}, w)| (\rho \tilde q_{1k_1} \tilde q_{2k_2})^{w}; & w \in \lb 0, 1, \dots, n_{k_1 k_2} \rb.
\end{align}
We use the recursive formula $|s(n+1, w)| = n |s(n,w)| + |s(n,w-1)|$ to compute the unsigned Stirling numbers of the first kind.

Using \eqref{eq:auxiliary-Gibbs}, we can see that the full conditional distribution of $\tilde q_v$ is a $\Dir(\gamma_v/L_v + r_{v1}, \dots, \gamma_v/L_v + r_{vL_v})$, where $r_{vk_v} = 0$ for all $k_v > K_v$, for $v=1,2$. This update in particular provides insight onto the interpretation of the auxilary variables. Consider the independent clustering model, in which $\pi(c_{1i} = k_1, c_{2i} = k_2 \mid \tilde q_1, \tilde q_2) = \tilde q_{1k_1} \tilde q_{2 k_2}$. Then the Gibbs update of $\tilde q_1$ and $\tilde q_2$ will be exactly the same as in our model, except with $\bs r_1$ and $\bs r_2$ being replaced with $\bs n_1$ and $\bs n_2$. However, in general $r = \sum_{k_1,k_2} r_{k_1k_2} < n$. We then interpret $\bs r$ as being the cell counts of $C_1$ and $C_2$ if they were modeled as being truly independent and defined on a smaller sample. 

If $\rho \sim \tx{Gamma}(a_\rho, b_\rho)$, we use a data augmentation technique introduced in \cite{escobar1995bayesian}. From \eqref{eq:S-Sprime-marginal}, one sees that the full conditional distribution of $\rho$ is
\begin{equation} \label{eq:rho-marginal-fc}
    \Pi(\rho \mid C_1, C_2, \bs r, \tilde q_1, \tilde q_2, \X) \propto \pi(\rho) \frac{\Gamma(\rho)}{\Gamma(n+\rho)} \rho^{r} \propto \pi(\rho) \rho^{r-1}(\rho + n) \beta(\rho+1,n)
\end{equation}
where $\beta(\cdot, \cdot)$ is the beta function. Since $\beta(\rho+1,n) = \int_{0}^1 \eta^{\rho}(1-\eta)^{n-1}\diff \eta$, we have that \eqref{eq:rho-marginal-fc} is the marginal distribution of $\rho$ for a joint density proportional to $ \pi(\rho) \rho^{r-1}(\rho + n) \eta^\rho (1-\eta)^{n-1}$ for $0 < \eta < 1$ and $\rho > 0$. We set $\eta$ to be an auxiliary variable, then acquire samples from \eqref{eq:rho-marginal-fc} by alternating sampling of $\eta$ and $\rho$. We update $\eta$ by sampling from a $\tx{Beta}(\rho+1,n)$ distribution. To sample $\rho$, we simulate from a two-component mixture of Gamma distributions: $\omega_{\eta} \tx{Gamma}(a_\rho + r, b_\rho - \log \eta) + (1-\omega_\eta) \tx{Gamma}(a_\rho + r - 1, b_\rho - \log \eta),$ where $\omega_\eta/(1-\omega_\eta) = (a_\rho + r - 1) / \lb n(b_\rho - \log \eta) \rb$. If instead $\pi(\rho = u) \propto 1$ for all $u \in U$, we update the position of $\rho$ by simulating from a discrete distribution proportional to \eqref{eq:S-Sprime-marginal} for all points on the grid.

Finally, to update $\theta_{v k_v}^*$, we sample from $\pi(\theta_{v k_v}\mid \bs C_v, \X) \propto \lb \prod_{i \in C_{vk_v}} \pi(X_i ; \theta^*_{vk_v}) \rb H_v(\theta_{vk_v}^*)$ for $k_v = 1, \dots, K_v$, where in independent view case, $\pi(X_i; \theta_{vk_v}^*) = f_v(X_{vi}; \theta_{vk_v}^*)$, and in the correlated case, $\pi(X_i; \theta_{1k_1}^*) = f_1(X_{1i}; \theta_{1k_1}^*)$ and $\pi(X_i; \theta_{2k_2}^*) = f_2(X_{2i}; X_{1i}, \theta_{2k_2}^*)$. For any $k_v > K_v$, where $n_{vk_v}=0$, we simulate $\theta_{vk_v}^* \sim H_v$. 

\subsection{Point Estimation and Uncertainty Quantification}

At the termination of the Gibbs sampler we obtain MCMC samples $C_1^{(t)}, C_2^{(t)} \sim \pi(C_1, C_2 \mid \X)$ for $t=1, \dots, T$ iterations. We compute point estimates $\{ \widehat{C}_v \}$ of the partitions by minimizing the posterior expectation of a clustering loss function. In our simulations and application, we use the Variation of Information (VI) loss \citep{meilua2007comparing}, though alternatives exist including Binder's loss \citep{binder1978bayesian} and generalized losses \citep{dahl2022search}. As in the single view case, we would expect kernel misspecification in the PCDP to result in over-clustering the data by approximating true clusters with several mixture components. One strategy to robustify inference on the partitions is to apply a kernel merging algorithm to the marginals such as Fusing of Localized Densities (FOLD) \citep{dombowsky2023bayesian}, which takes the posterior samples of $\theta_{1i}$ and $\theta_{2i}$ as input. One could also implement the PCDP as a prior to infer the coarsened posterior of the partitions, an inferential framework which allows for perturbations in the observed data from the assumed component kernels \citep{miller2018robust}. 

Uncertainty in clustering is conveyed with either the posterior similarity matrix, which measures $\pi(c_{vi}=c_{vj} \mid \X)$ for all $i,j$ pairs, or a credible ball \citep{wade2018bayesian}, which provides a set of clusterings close to $\widehat C_v$ in the partition metric space that have high posterior probability. To quantify dependence in the partitions, we infer $\pi \lb R(C_1, C_2) \mid \X \rb$ or, alternately, the posterior of the TARI, and provide the posterior mean and a credible interval. For inference on the number of clusters in the $v$th view, we calculate the posterior probability of any $\hat K_v>0$ with $\pi(K_v = \hat K_v \mid \X) \approx (1/T)\sum_{t=1}^T \textbf{1}_{K_v^{(t)} = \hat K_v} $, where $K_v^{(t)}=\sum_{k_v=1}^{L_v} \textbf{1}_{C_{vk_v}^{(t)} \neq \emptyset}$ counts the number of non-empty blocks in the $v$th partition. For interpretability, we recommend inspecting the marginal posterior distributions of $K_1$ and $K_2$ individually, though inferences on their joint distribution can also be obtained with $\pi(K_1 = \hat K_1, K_2 = \hat K_2 \mid \X) \approx (1/T) \sum_{t=1}^T \textbf{1}_{K_1^{(t)} = \hat K_1, K_2^{(t)} = \hat K_2}$.

\section{Illustrations} \label{section:simulations}

\subsection{Two View Setting}

To evaluate the performance of CLIC and the PCDP relative to competitors, we simulate $n=200$ observations under three cases of partition dependence between two views: identical clusterings (case 1), dependent clusterings (case 2), and independent clusterings (case 3). In all cases, we set the number of true clusters in each view to be $K_1 = K_2 = 2$, and simulate $C_1$ via $\Pr(c_{1i} = 1) = \Pr(c_{1i} = 2) = 1/2$. In case 1, we set $c_{2i} = c_{1i}$, and in case 3 we simulate $c_{2i}$ independently using $\Pr(c_{2i} = 1) = \Pr(c_{2i} = 2)=1/2$. For case 2, with probability $2/3$ we sample $(c_{1i}, c_{2i})$ using the scheme in case 1, and with probability $1/3$ we sample $(c_{1i}, c_{2i})$ independently with the scheme in case 3. We focus on the uncorrelated view model \eqref{eq:multiview-model}, but we also explore the correlated view model \eqref{eq:conditional-model} in the Supplementary Material.

When fitting the PCDP, we fix $\gamma_1 = \gamma_2 = 1$, but assume that $\rho \sim \tx{Gamma}(1, 1)$. We run the Gibbs sampler in Section \ref{section:computation} for $30,000$ iterations, remove the first $10,000$ as burn-in, then keep every other iteration, leading to $10,000$ samples from $\pi(C_1, C_2 \mid \X)$. We find that mixing improves using a conditional sampling scheme for the cluster labels. Rather than sampling $(c_{1i}, c_{2i})$ jointly from its full conditional, we first sample $c_{1i}$, then $c_{2i}$ conditional on the value of $c_{1i}$. We also find that this method vastly speeds up posterior computation in comparison to the joint sampler. Mixing and convergence of the PCDP are evaluated using traceplots and effective sample size for $R(C_1,C_2)$.

\begin{figure}
    \centering
    \includegraphics[scale=0.55]{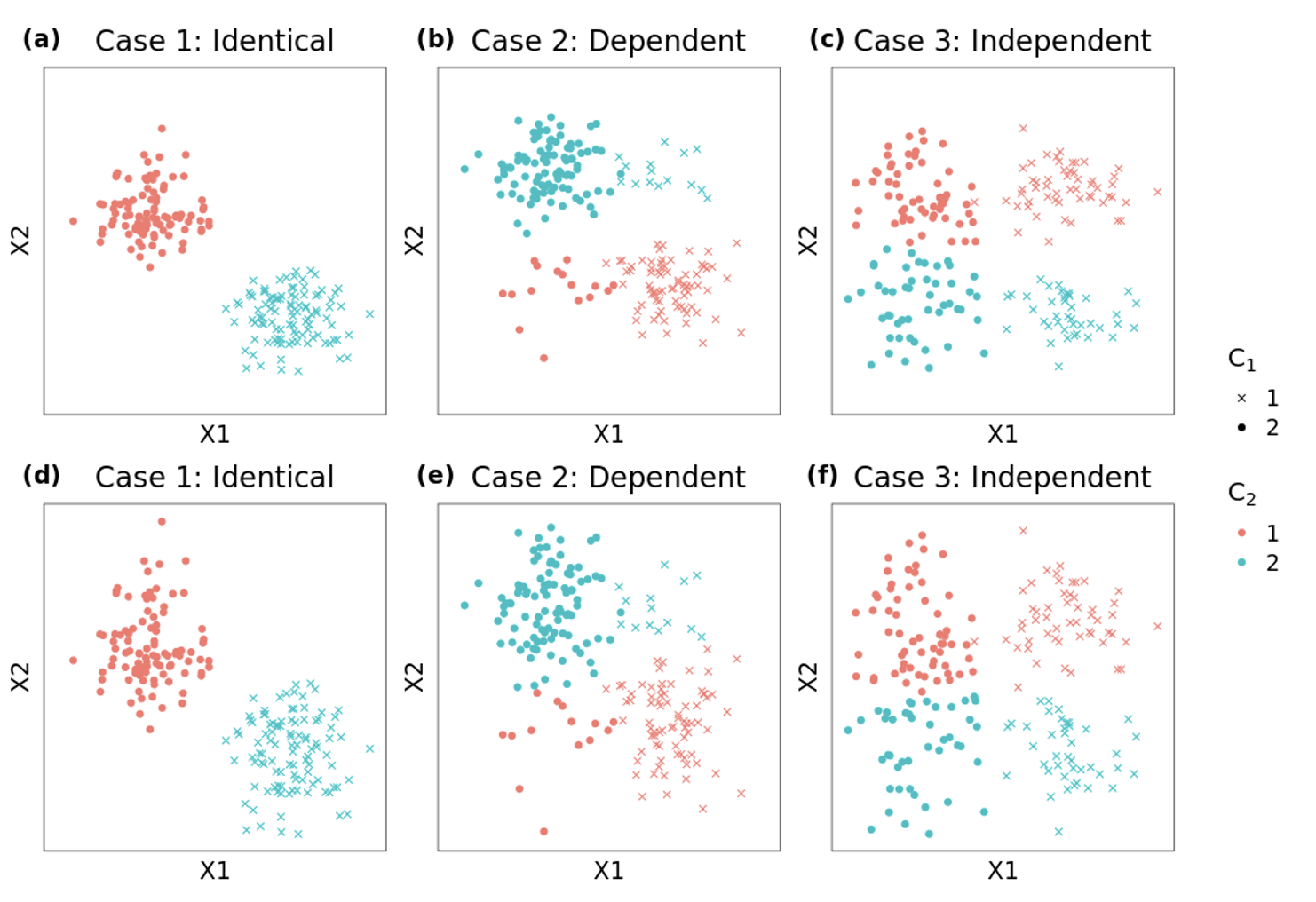}
    \caption{Synthetic data simulated in the two view setting, where shapes and colors correspond to the true values of $C_1$ and $C_2$. Data in plots (a)-(c) are simulated with little overlap between $C_2$ clusters ($\eta^2=0.2$), whereas data in plots (d)-(f) are simulated with higher overlap ($\eta^2=0.45$).}
    \label{fig:multiview-data}
\end{figure}

\begin{table}[ht]
    \centering
    \begin{tabular}{cccccc}
    \toprule
       Overlap & Case & CLIC & t-HDP & EM & IDPs \\
      \multirow{3}{4em}{$\eta^2 \hspace{-0.5mm} = \hspace{-0.5mm} 0.2$} & 1 &  (1.000, 1.000) & (1.000, 1.000) & (1.000, 0.980) & (1.000, 0.980) \\
       & 2 & (0.921, 0.98) & (0.902, 0.98) &   (0.921, 1.000) & (0.921, 1.000) \\
       & 3 & (0.980, 0.921)  & (0.980, 0.921) & (0.980, 0.941) & (0.980, 0.941) \\
       \multirow{3}{4em}{$\eta^2  \hspace{-0.5mm} = \hspace{-0.5mm} 0.45$} & 1 &  (1.000, 1.000) & (1.000, 1.000) & (1.000, 0.000) & (1.000, 0.000) \\
       & 2 & (0.921, 0.921) & (0.883, 0.773) &   (0.921, 0.864) & (0.921, 0.864) \\
       & 3 & (0.980, 0.704)  & (0.980, 0.000) & (0.980, 0.687) & (0.980, 0.687) \\
        \bottomrule
    \end{tabular}
    \caption{The adjusted Rand indices (ARI) between the true $(C_1, C_2)$ and the point estimates computed by CLIC, the t-HDP, independent EM, and two IDPs for synthetic data in the two view setting. CLIC consistently outperforms competitors for inferring $C_2$ at a higher overlap.
    }
    \label{table:aris-marginal}
\end{table}

CLIC is compared to existing methods for clustering uncorrelated views. We evaluate the model in \eqref{eq:multiview-model} across two different overlaps in the sample space for the $C_2$ clusters. That is, we generate data via $(X_i \mid c_{1i}, c_{2i}) \sim \N(X_{1i}; \lb -1\rb^{c_{1i}+1}, 0.2) \times \N(X_{2i};\lb -1 \rb^{c_{2i}}, \eta^2)$, where $\eta^2 \in \lb 0.2, 0.45 \rb$. Smaller values of $\eta^2$ lead to wider separation along the second view. Plots of the synthetic data are displayed in Figure \ref{fig:multiview-data}. We fit a location Gaussian mixture with the finite approximation to the PCDP, i.e. $\pi(X_i \mid \theta_i) = \N(X_{1i}; \theta_{1i}, \sigma_1^2) \times \N(X_{2i}; \theta_{2i}, \sigma_2^2)$, and assume that $H_v(\theta) = \N(\theta; \mu_0, \sigma_0^2)$. We then compute the clustering point estimates corresponding to the telescopic hierarchical Dirichlet process (t-HDP) described in \cite{franzolini2023bayesian}, two independent Dirichlet processes (IDPs), and the EM algorithm \citep{scrucca2023mclust} applied independently to each view. In all Bayesian methods, point estimates are calculated with the VI loss. 

The ARI between the point estimates and the true partitions are displayed in Table \ref{table:aris-marginal}. All methods perform similarly when the clusters along both views are well separated ($\eta^2=0.2$). However, CLIC far more accurately captures $C_2$ in comparison to competitors after increasing the overlap ($\eta^2=0.45$). In case 1, IDPs and the EM algorithm along the marginals completely fail to detect any clustering structure in $C_2$. The t-HDP has similar issues in case 3, as its point estimate allocates all objects to a single cluster. Though all methods detect clustering structure in case 2, CLIC attains the highest ARI with the true $C_2$ clusters, particularly in comparison to the t-HDP. Additionally, CLIC's estimation of $C_1$ is consistent across the two different overlaps, but estimation of $C_1$ in case 2 under the t-HDP worsens when the overlap between the $C_2$ clusters is increased. We also calculated the number of clusters along each view for each method. When $\eta^2 = 0.2$, all methods estimate the correct number of clusters along both views for all three cases. For $\eta^2 = 0.45$, the EM algorithm and IDPs infer one cluster in the second view in case 1. Similarly, the t-HDP results in one cluster along view $2$ in case 3. For all other methods and cases under this overlap, the number of clusters along both views are correctly estimated.

\subsection{Three View Setting}

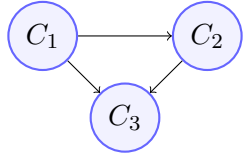
\begin{figure}
    \centering
\begin{tikzpicture}[roundnode/.style={circle, draw=blue!60, fill=blue!5, thick, minimum size=3mm},box/.style = {draw,black,inner sep=10pt,rounded corners=5pt}, node distance = {6mm}]
        \node[roundnode] (c3) {$C_3$};
        \node[roundnode] (c2) [above right = of c3] {$C_2$};
        \node[roundnode] (c1) [above left = of c3] {$C_1$};
        \draw[->] (c1)--(c2);
        \draw[->] (c2)--(c3);
        \draw[->] (c1)--(c3);
    \end{tikzpicture}
    \caption{Directed acyclic graph (DAG) formulation for the joint distribution of the simulated partitions $(C_1,C_2,C_3)$. Note that the DAG is not a tree; $C_3$ has both $C_1$ and $C_2$ as parents.}
    \label{fig:dag}
\end{figure} 

Next, we examine the performance of CLIC and competitors in the three view setting. We set $K_1 = K_2 = K_3=2$ to be the number of true clusters and simulate $n=200$ observations. Here, we only consider the case in which the three partitions are dependent, but not identical. $c_{1i}$ is simulated in the same manner as the two view setting. With probability $1/3$, we set $c_{2i} = c_{1i}$, else we reallocate object $i$ to the other cluster. To simulate $c_{3i}$, we first generate $u_i \sim \tx{Unif}(0,1)$. If $u_i < 1/3$, then $c_{3i} = c_{1i}$, if $1/3 \leq u_i < 2/3$, then $c_{3i} = c_{2i}$, and if $u_i \geq 2/3$, we simulate $c_{3i}$ independently from the other labels using the same cluster probabilities as $c_{1i}$. The directed acyclic graph (DAG) in Figure \ref{fig:dag}  shows the conditional dependencies between all three simulated partitions. Importantly, the dependence structure between $C_1$, $C_2$, and $C_3$ does not correspond to a tree graph, which is how the t-HDP model formulates the joint partition distribution. We simulate the data so that the clusters along $C_1$ and $C_2$ are more separated than the clusters along the third view. The data are generated by $(X_{i} \mid c_{1i}, c_{2i}, c_{3i}) \sim \N(X_{1i}; \lb -1\rb^{c_{1i}+1}, 0.2) \times \N(X_{2i};\lb -1 \rb^{c_{2i}}, 0.2) \times \N(X_{3i}; \lb -1 \rb^{c_{3i}+1}2, 1)$. A plot of the synthetic data is given in Figure \ref{fig:threeview-data}. We fit the finite approximation to the PCDP with three views and set $\pi(X_i \mid \theta_i) = \N(X_{1i}, \theta_{1i}, \sigma_1^2) \times \N(X_{2i}, \theta_{2i}, \sigma_2^2) \times \N(X_{3i}, \theta_{3i}, \sigma_3^2)$, where $H_v(\theta) = \N(\theta; \mu_0, \sigma_0^2)$. We draw the same number of posterior samples and keep the same hyperparameters for CLIC, the t-HDP, and the IDPs as in the two view setting. Clustering point estimates are computed with the VI loss function. ARIs between the point estimates from CLIC and the other existing methods are displayed in Table \ref{table:threeview-results}. CLIC attains the joint highest ARI with the true partitions, but noticeably improves on the t-HDP in estimating $C_3$. This disparity may be explained by the general formulation of dependence in the PCDP: CLIC partitions need not be Markovian or have a tree structure. All methods accurately estimate the number of clusters along the three views.

\begin{figure}
    \centering
    \includegraphics[scale=0.55]{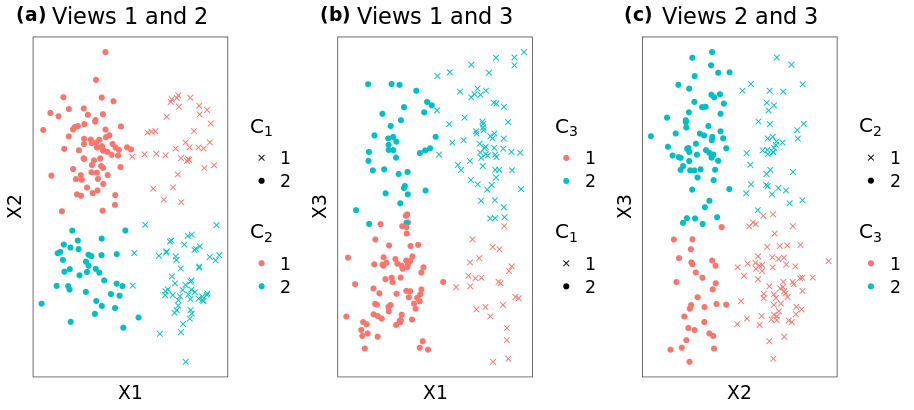}
    \caption{The synthetic data simulated in the three view setting, where shapes and colors correspond to the true values of $C_1$, $C_2$, and $C_3$. 
    }
    \label{fig:threeview-data}
\end{figure}

\begin{table}
    \centering
    \begin{tabular}{ccccc}
    \toprule
         & CLIC & t-HDP & EM & IDPs \\
        ARI with $C_1$ & 0.941 & 0.921 & 0.941 & 0.941 \\
        ARI with $C_2$ & 0.941 & 0.941 & 0.941 & 0.941\\
        ARI with $C_3$ & 0.846 & 0.773 & 0.827 & 0.827 \\
        \bottomrule
    \end{tabular}
    \caption{ARI with the true partitions $C_1$, $C_2$, and $C_3$ for point estimates from CLIC, the t-HDP, the EM algorithm, and two IDPs. CLIC attains the joint highest ARI for all partitions, particularly improving on the t-HDP in estimating $C_3$.}
    \label{table:threeview-results}
\end{table}

\subsection{Varying Sample Size and Dimension}

\begin{table}[ht]
    \centering
    \begin{tabular}{cccccc}
    \multicolumn{6}{c}{$n=100$} \\
    \toprule
       $d_2$  & CLIC & t-HDP & EM & IDPs & DP \\
        \multirow{2}{0.5em}{$2$} & (1.000, 0.919) & (1.000, 0.919) & (1.000, 0.919) & (1.000, 0.919) & (0.835, 0.783) \\
        & (2,2) & (2,2) & (2,2) & (2,2) & (4,4) \\
        \multirow{2}{0.5em}{$10$} & (0.882, 1.000) & (0.960, 1.000) & (0.845, 1.000) & (0.921, 1.000) & (0.866, 0.871)\\
        & (2,2) & (2,2) & (2,2) & (2,2) & (4,4) \\
        \multirow{2}{0.5em}{$25$} & (0.959,1.000) & (1.000,0.143) & (0.959,1.000) & (0.959,1.000) & (0.694,1.000) \\ 
        & (2,2) & (2,10) & (2,2) & (2,2) & (2,2) \\
        \bottomrule
    \end{tabular}
    \begin{tabular}{cccccc}
    \multicolumn{6}{c}{$n=200$} \\
    \toprule
       $d_2$  & CLIC & t-HDP & EM & IDPs & DP \\
        \multirow{2}{0.5em}{$2$} & (0.979,0.939) & (0.959,0.939)& (0.979,0.899) & (0.979,0.899) & (0.843,0.826) \\
        & (2,2) & (2,2) & (2,2) & (2,2) & (4,4) \\
        \multirow{2}{0.5em}{$10$} & (0.827,1.000) & (0.940,1.000) & (0.845,1.000) & (0.827, 1.000) & (0.853, 0.885)\\
        & (2,2) & (2,2) & (2,2) & (2,2) & (4,4) \\
        \multirow{2}{0.5em}{$25$} & (0.980,1.000) & (1.000,0.927) & (0.980,1.000) & (0.980,1.000) & (0.613,1.000) \\
        & (2,2) & (2,4) & (2,2) & (2,2) & (2,2) \\
        \bottomrule
    \end{tabular}
    \begin{tabular}{cccccc}
    \multicolumn{6}{c}{$n=500$} \\
    \toprule
       $d_2$  & CLIC & t-HDP & EM & IDPs & DP \\
         \multirow{2}{0.5em}{$2$} & (0.960,0.960) & (0.976,0.944) & (0.960,0.944) & (0.960,0.944) & (0.825,0.806) \\ 
         &  (2,2) & (2,2) & (2,2) & (2,2) & (4,4) \\
        \multirow{2}{0.5em}{$10$} & (0.906,1.000) & (0.984,1.000) & (0.913,1.000) & (0.906,1.000) & (0.849,0.865)\\
        & (2,2) & (2,2) & (2,2) & (2,2) & (4,4) \\
        \multirow{2}{0.5em}{$10$} & (0.968,1.000) & (1.000,0.966) & (0.968,1.000) & (0.960,1.000) & (0.822,0.813) \\
        & (2,2) & (2,4) & (2,2) & (2,2) & (4,4) \\
        \bottomrule
    \end{tabular}
    \caption{ARIs (top) and number of clusters (bottom) for CLIC and competitors, where $n \in \lb 100, 200, 500 \rb$ and $d_2 \in \lb 2, 10, 25 \rb$. For the single view DP, the partition estimates are the same along both views.}
    \label{table:varying-dim-and-sample-size}
\end{table}

Finally, we also inspect the performance of our approach and competitors by varying the sample size and dimension of the synthetic data. We simulate two highly dependent partitions, where $c_{2i}=c_{1i}$ with probability $8/10$ (otherwise, the two labels are sampled independently). We fix $d_1 = 2$ but vary $d_2 \in \lb 2, 10, 25 \rb$ and $n \in \lb 100,200,500 \rb$. Along the first view, the true cluster means are $\theta_{11}^* = (1,-1)$, $\theta_{12}^* = (-1,1)$. Along the second, these are $\theta_{21}^* = (1,-1)_{d2}$ and $\theta_{22}^* = - \theta_{21}^*$, where $(1,-1)_{d_2}$ is an alternating sequence of $\pm 1$ of length $d_2$. We simulate the data via $X_{vi} \mid \theta_{vi} \sim \N(X_{vi}; \theta_{vi}, 0.4 I_{d_v})$ independently for $v=1,2$. The finite approximation of the PCDP is applied with two independent Gaussian views, $H_{v}(\theta) = \N_{d_v}(\theta; \mu_0, \Sigma_0)$, $\gamma=1$, and $\rho \sim \tx{Gamma}(1,1)$. In addition to the competitors used in the previous illustrations, we also fit a single (i.e., ignorant of the views) DP with a Gaussian likelihood to the full data $\X$.

The resulting ARI values with the true partitions and number of clusters are displayed in Table \ref{table:varying-dim-and-sample-size}. We can see that CLIC generally performs well at estimating the true cluster labels and always results in the correct number of clusters. Our approach excels in the small sample size but high dimension regime ($n=100,d_2=25$), particularly in the estimation of $C_2$. The point estimate of $C_2$ under CLIC perfectly captures the true partition, whereas the ARI for the t-HDP is $0.143$. In general, the t-HDP overestimates the number of clusters along the second view for $d_2 = 25$, resulting in 10 clusters when $n = 100$ and 4 clusters when $n \in \lb 200,500 \rb$. We can also see that CLIC is more robust than a single view DP for estimating both partitions. When $d_2=25$ and $n \in \lb 100, 200 \rb$, the point estimate for the DP perfectly corresponds to the $C_2$ clustering, i.e. the view with dominating dimension, though this tendency is not present in the DP for large $n$. Finally, it is clear that CLIC can outperform the independent view methods (i.e., EM and IDPs) for smaller values of $d_2$.

\section{Example: The Collaborative Perinatal Project} \label{section:application}

The \cite{longnecker2001association} dataset is a subset of children studied in the US Collaborative Perinatal Project (CPP) \citep{niswander1972women}. The CPP enrolled pregnant women born between 1959 and 1966, and measured the participants' blood pre-delivery, at delivery, and postpartum. The children of the participants were then monitored for a period of seven years after delivery for neurodevelopmental outcomes. From 1997 to 1999, the \cite{longnecker2001association} study assessed the concentration of DDE, a metabolite of the insecticide DDT, present in blood samples taken from the third trimester using laboratory assays. They were interested in inferring the relationship between DDE concentration and two primary outcomes: preterm delivery and being small-for-gestational age (SGA), the latter term describing newborns with smaller weight than what is considered normal. In total,  \cite{longnecker2001association} obtained complete information on $2,380$ children, where $361$ were classified as being delivered preterm and $221$ were small-for-gestational-age. 

Following standard procedure in reproductive epidemiology, the two primary outcomes were determined by categorizing birth weight and gestational age. For instance, preterm, term, and post-term births correspond to gestational age at delivery less than 37 weeks, between 37 and 42 weeks, and after 42 weeks, respectively; preterm births are often further classified as very, moderately, or late preterm \citep{loftin2010late}. A baby's birth weight is considered to be low, or ``premature", if it does not exceed 2.5 kg \citep{hughes2017low}. There are multiple ways to classify newborns as SGA using birth weight and gestational age, and these generally are study-specific and use percentiles \citep{schlaudecker2017small}. The issue is further complicated by the fact that the start of gestation, and therefore gestational age, is difficult to measure in practice. Therefore, we are interested in inferring \textit{naturally-occurring} subgroups along both birth weight and gestational age, motivating a multiview approach, where view 1 is birth weight and view 2 is gestational age. 

The data are collected in the matrix $\X = (\X_1, \X_2)$, where $X_{1i}$ is the normalized birth weight and $X_{2i}$ is the normalized gestational age for child $i$. We assume that $X_i \mid \theta_i \sim \N(X_{1i}; \theta_{1i}, \sigma_1^2 ) \times \N(X_{2i}; \theta_{2i}, \sigma_2^2)$, $\theta_i \mid P \sim P$, and $P \sim \tx{PCDP}$, where the base measures are $H_v(\theta) = \N(\theta; 0, 1)$ and $\gamma_1 = \gamma_2 = 1$. We take $L_1=L_2=5$ and $\pi(\rho = u ) \propto 1$ for all $u \in U$, where $U$ is taken to be a grid from $10^{-2}$ to $150$ in intervals of length $0.5$. That is, we implement the griddy-Gibbs sampler discussed in Section \ref{section:computation}. We only consider children born before 42 weeks (i.e., full term newborns), then take a random sample of size $n=1,000$. The Gibbs sampler is run for $100,000$ iterations with the first $10,000$ taken as burn-in. We take every fifth iteration, leading to $18,000$ samples from $\pi(C_1,C_2 \mid \X)$. The griddy-Gibbs sampler leads to better MCMC diagnostics compared to the \cite{escobar1995bayesian} semi-conjugate prior for $\rho$. Convergence of the partitions and $\rho$ was rapid and mixing was good based on trace plots and effective sample size. Partition point estimates $\widehat C_1$ and $\widehat C_2$ are calculated by minimizing the VI loss.

\begin{figure}
    \centering
    \includegraphics[scale=0.55]{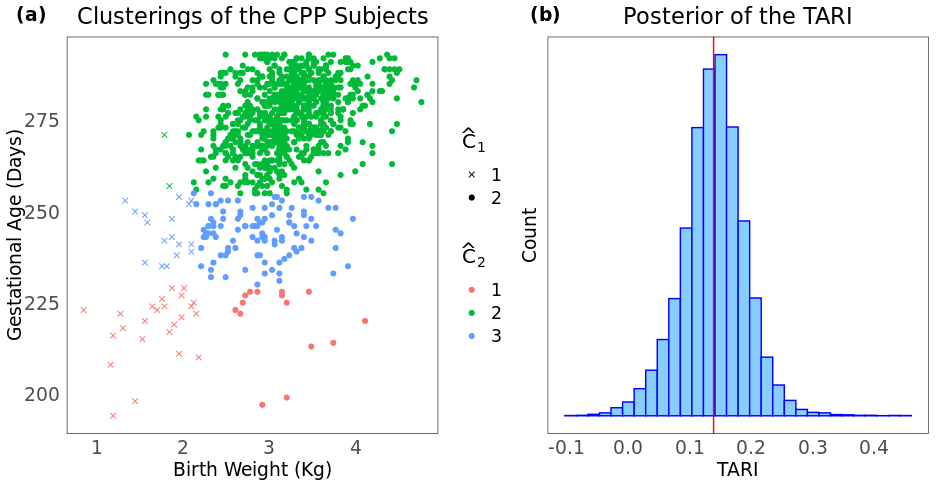}
    \caption{On the left, gestational age and birth weight for the children studied in \cite{longnecker2001association} along with the clustering point estimates, where shapes correspond to groups in birth weight and colors correspond to groups in gestational age. On the right, the posterior distribution of the TARI between $C_1$ and $C_2$, where the red line is the estimated posterior mean ($0.139$). Note that similar to the ARI, the TARI can take negative values with positive probability. }
    \label{fig:longnecker-data-and-TARI}
\end{figure}

The clustering point estimates are displayed in Figure \ref{fig:longnecker-data-and-TARI}(a). There are two birth weight clusters, cluster 1 (with $43$ children) and cluster 2 (with $957$ children). The maximum birth weight of children in cluster $1$ of $\widehat C_1$ was $2.183$ kg, meaning that all children in cluster $1$ would be considered to be premature. $97.7\%$ of the children in this cluster are born pre-term; and that proportion for cluster 2 of $\widehat C_1$ is merely $16.1\%$. In addition, Pearson's chi-squared test rejects the null independence hypothesis between $\widehat C_1$ and preterm delivery incidence. Interestingly, $\widehat C_2$ has three clusters which do not correspond exactly to the preterm delivery cut-offs. While all $804$ full term children were allocated into cluster 2 of $\widehat C_2$, the remaining $196$ preterm subjects were split between the three clusters, with $19.9\%$ in cluster 1, $16.3$ \% in cluster $2$, and $63.4\%$ in cluster 3. In fact, all of the subjects allocated to clusters 1 and 3 of $\widehat C_2$ were delivered pre-term. Similar to $\widehat C_1$, the chi-squared test for independence between $\widehat C_2$ and pre-term delivery rejects the null hypothesis.

\begin{table}[ht]
\centering
\begin{tabular}{ccccc}
  & $\hat{C}_{21}$ & $\hat{C}_{22}$ & $\hat{C}_{23}$ &   \\ 
$\hat{C}_{11}$ &  24 &   2 &  17 &  43 \\ 
   $\hat{C}_{12}$ &  15 & 834 & 108 & 957 \\ 
     & 39 & 836 & 125 & 
\end{tabular}
\caption{Contingency table between the clustering point estimates $\widehat C_1$ and $\widehat C_2$ for the \cite{longnecker2001association} data, along with marginal cluster sizes along the rows and columns. }
\label{table:longnecker-contingency}
\end{table}

Along with the clustering point estimates, we find evidence that groups based on birth weight and gestational age are dependent. We quantify the dependence between the partitions using the TARI. The TARI posterior distribution is displayed in Figure \ref{fig:longnecker-data-and-TARI}(b), and has posterior expectation equal to $0.139$ and 95\% credible interval $[0.029, 0.239]$. An interesting insight provided by the multiview approach is the ability to infer premature subgroups in each gestational age cluster by inspecting the contingency table for the point estimates (Table \ref{table:longnecker-contingency}). Here, we can see, for example, that newborns in $\hat C_{21}$ are more likely to be born prematurely than newborns in the other birth weight clusters, and that this probability rapidly diminishes with increasing gestational age. In addition, the separation between the $\hat C_1$ clusters is far larger in $\hat C_{21}$ than $\hat C_{22}$ and $\hat C_{23}$, indicating that birth weight outcomes are most polarized for earlier births.

\section{Discussion and Extensions} \label{section:discussion}

This article proposes a novel method for Bayesian inference on dependent partitions arising from multiview data. Our approach relies on the PCDP, a nonparametric prior that generates dependent random distributions by centring a DP on a random product measure. The induced $\CLIC(\rho, \gamma_1, \gamma_2)$ distribution has several appealing theoretical properties, such as an independence centered CRP and a closed form expression of the ERI. The probability mass function of the $\CLIC(\rho, \gamma_1, \gamma_2)$ distribution is an MEPPF, a novel generalization of the EPPF that describes the joint distribution of random partitions in terms of their contingency table. In addition, we derive a Gibbs sampler for a finite approximation to the PCDP that can model both uncorrelated and correlated views. Point estimation and uncertainty quantification of the partitions is achieved using standard procedures in the Bayesian clustering literature. A single parameter, $\rho$, controls perturbations from the independent clustering model, and the degree of dependence between the partitions is inferred using the Rand index (or the normalized TARI). In simulations and an application to the \cite{longnecker2001association} study, we show that CLIC can capture a wide spectrum of partition dependence.

Our method contributes to the literature on multiview clustering. Interpretability, simplicity, and the ability to characterize uncertainty in clustering motivate the use of CLIC in practice. However, in some applications collecting multiview data, the focus is on inferring a single grouping of the objects, e.g. when clusters are used to make decisions, such as treatment assignment. Consensus clustering approaches \citep{lock2013bayesian, xanthopoulos2014} may be more appropriate in this regime. In addition, we assume that the views are known and correspond to well-defined groups of features. However, in some cases, we do not know in advance which features should be assigned to which views; in such cases, clustering features into views may be one of the main goals. There is a related literature on bi-clustering \citep{castanho2024biclustering}, which seeks to simultaneously learn clusters of samples and features, but such approaches do not allow different sample clusters for each feature cluster. Finally, some applications require that the derived clusterings are highly dissimilar; this is the main aim of multiple clustering \citep{bailey2014alternative}. Although higher values of $\rho$ encourage independent partitions a priori, our methodology generally does not prioritize dissimilar partitions. 

In comparison to existing methods for Bayesian multiview clustering, our approach does not assume a directional relationship between $C_1$ and $C_2$. Instead, our methodology focuses on describing the joint distribution of $(C_1,C_2)$ using the innovative idea of a cross-partition and, in particular, the MEPPF. The CLIC distribution is one example of a joint partition model that admits an MEPPF, but it is interesting to examine what other joint partition models fall under this framework. Though our construction relies on a hierarchical process for sampling a random measure, it may be the case that an MEPPF can be induced without a data augmentation procedure. This formulation would be especially useful for computation. Modeling $\rho$ as random leads to a fully Bayesian model and additionally leads to vastly improved mixing over setting it equal to a constant, but sampling from the full conditional of $\rho$ requires imputation of the marginal latent probability vectors $\tilde q_1, \tilde q_2$. As in the HDP, the addition of these latent vectors can lead to slow mixing in certain situations. The closed form expression for the distribution of the cross-partition in \eqref{eq:joint-EPPF} and the notion of root partitions provide a promising direction for deriving a marginal MCMC sampler, though sampling from $\pi(T_1, T_2 \mid C_1, C_2, \rho, \gamma_1, \gamma_2)$ is not straightforward. 

The PCDP can be extended to address several tasks that commonly arise in unsupervised learning. For example, it is often of interest to simultaneously infer a random partition while selecting features that distinguish its clusters \citep{hancer2020survey}. A common strategy is to introduce latent discrimination indicators \citep{tadesse2005bayesian, kim2006variable} $\alpha_{vj_v} \in \lb 0,1 \rb$ for each feature and view. Typically, the indicators are assumed to be independent and identically distributed $\tx{Bernoulli}(\omega)$ random variables. Let $\theta_i^{(1)}$ be the vector of atoms for which $\alpha_{vj_v}=1$ and $\theta_i^{(0)}$ be the analogous vector for which $\alpha_{vj_v}=0$; let $\alpha^{(1)}$ and $\alpha^{(0)}$ be similar vectors for the indicators. Inter-view variable selection can be accomplished by modeling $\theta_i^{(0)} \mid \alpha^{(0)} \sim W$, where $W$ is a fixed non-atomic measure, then modeling $\theta_i^{(1)}$ conditional on $\alpha^{(1)}$ with the PCDP. However, it may be the case that we want to adjust for the different views when selecting variables, e.g., when one view is of far higher dimension than the other. To implement this intra-view approach, we concatenate the indicators into $\alpha_v = (\alpha_{vj_v})_{j_v=1}^{d_v}$ for all $v=1, \dots, V$. We model the members of each $\alpha_v$ as independent $\tx{Bernoulli}(\omega_v)$ variables, where $\omega_v$ can be calibrated to match the desired sparsity for the $v$th view. If $\theta_{vi}^{(0)}$ are the atoms in the $v$th view for which $\alpha_{vj_v}=0$, we draw $\theta_{vi}^{(0)} \mid \alpha_{v}^{(0)} \sim W_v$ for a fixed measure $W_v$. Alternatively, one could adapt shrinkage priors developed for variable selection \citep{yau2011hierarchical,malsiner2016model} into intra-view and inter-view approaches with the PCDP as the prior on the partitions. 

Finally, an intriguing prospect of the PCDP is extending its hierarchical formulation to incorporate graphical dependence structure between the views. For example, for $V=3$ views, the directed acyclic graph (DAG) for the partitions could be $C_1 \to C_2 \to C_3$, in which case $\pi(c_{1i} = k_1, c_{2i} = k_2, c_{3i} = k_3 \mid  p) = p_{k_1 k_2}^{(12)}   p_{k_3 \mid k_2}^{(3 \mid 2)}$, where $ p_{k_v \mid k_{v-1}}^{(v \mid v-1)}$ is the conditional probability of the $v$th label given the $(v-1)$th label for clusters $k_v$ and $k_{v-1}$. In this case, the PCDP could be used to model the pairwise joint distributions between views one and two, and between views two and three. This would allow separate parameters, say $\rho_{12}$ and $\rho_{23}$, to control concentration towards pairwise independent models. An alternative strategy is to center a random measure on the DAG, i.e. $P \mid \rho, Q \sim \DP(\rho, Q_{12} \times Q_{3 \mid 2})$, where $Q_{12}$ is a random joint distribution between the first two views and $Q_{3 \mid 2}$ is a random conditional measure between views two and three. When $\rho \to \infty$, the DAG holds, and hence $\rho$ can be used to test the validity of the graphical model. 

\section*{Acknowledgements}
This work was supported by the National Institutes of Health (NIH) under Grants 1R01AI155733, 1R01ES035625, and 5R01ES027498; the United States Office of Naval Research (ONR) Grant N00014-21-1-2510; the European Research Council (ERC) under Grant 856506; and Merck \& Co., Inc., through its support for the Merck BARDS Academic Collaboration. The authors thank Amy H. Herring for helpful comments.

\bibliography{sources}

\appendix 

\section*{Supplementary Material}

Proofs of all results and extended details on simulations and the application to the CPP data are available in the Supplementary Material. Code for the simulations and application is available at \url{https://github.com/adombowsky/clic}.

\section{Proofs and Additional Results}

\subsection{Identifiability of Concentration Parameters}
In this section, we verify that the key hyperparameters $(\rho, \gamma_1, \gamma_2)$ are identifiable in the joint prior for $(P, Q_1, Q_2)$. Let $S_{v1}, \dots, S_{vm_v}$ be a partition of $\Theta_v$ and assume that $H_v(S_{vk_v})>0$ for all $k_v=1, \dots, m_v$. By definition of the DP,
\begin{equation} \label{eq:q-measure-of-partition}
    (Q_v(S_{v1}), \dots, Q_v(S_{vm_v})) \mid \gamma_v \sim \Dir(\gamma_v H_v(S_{v1}), \dots, \gamma_v H_v(S_{vm_v})).
\end{equation}
This Dirichlet distribution is a one parameter exponential family, with sufficient statistic $\sum_{u_v=1}^{m_v} H_v(S_{vm_v}) \log Q_v(S_{vm_v})$ and canonical parameter $\gamma_v$. Hence, $\gamma_v$ is identifiable in the prior for $Q_v$. Moreover, let $S_1, \dots, S_m$ be a partition of $\Theta = \Theta_1 \times \Theta_2$ and $Q$ be \textit{any probability measure} on $\Theta$ with $Q(S_k) >0$ for all $k=1, \dots, m$. By definition,
\begin{equation} \label{eq:p-measure-of-partition}
    (P(S_1), \dots, P(S_m)) \mid \rho, Q \sim \Dir(\rho Q(S_1), \dots, \rho Q(S_m)), 
\end{equation}
and so by a similar argument for $Q_v$, $\rho$ is identifiable in the prior for $P \mid Q$.

Next, let $\lb \lb S_{v1}, \dots, S_{vm_v} \rb \rb_{v=1,2}$ be partitions of $\Theta_1$ and $\Theta_2$ with $H_v(S_{vk_v}) > 0$ for all $k_v$ and $v$, and create a partition of $\Theta$ with $\lb S_{1k_1} \times S_{2k_2} \rb_{ u_1, u_2}$. We now define a joint prior density over (a) a measure of the joint partition $\lb S_{1u_1} \times S_{2u_2} \rb_{ u_1, u_2}$, (b) a measure of the marginal partition along $\Theta_1$, and (c) a measure of the marginal partition along $\Theta_2$. Let $p$ belong to the $m_1m_2-1$ simplex, $q_1$ belong to the $m_1-1$ simplex, and $q_2$ belong to the $m_2-1$ simplex. The notation $\pi(p \mid \rho, q_1, q_2)$ denotes the density function of \eqref{eq:p-measure-of-partition} and $\pi(q_v \mid \gamma_v)$ denotes the density function of \eqref{eq:q-measure-of-partition} for $v=1,2$. We occasionally collect all hyperparameters into the vector $\varphi = (\rho, \gamma_1, \gamma_2)$. Hence, $\pi(p, q_1, q_2 \mid \varphi) = \pi(p \mid \rho, q_1, q_2) \pi(q_1 \mid \gamma_1) \pi(q_2 \mid \gamma_2)$. 

Suppose that $\pi(p, q_1, q_2 \mid \varphi) = \pi(p, q_1, q_2 \mid \varphi^\prime)$ for all $p, q_1, q_2$ in their respective supports. This is equivalent to 
\begin{equation}
    \forall p, q_1, q_2, \quad \pi(p \mid \rho, q_1, q_2) \pi(q_1 \mid \gamma_1) \pi(q_2 \mid \gamma_2) = \pi(p \mid \rho^\prime, q_1, q_2) \pi(q_1 \mid \gamma_1^\prime) \pi(q_2 \mid \gamma_2^\prime). \label{eq:pi-p-q1-q2}
\end{equation}
Thus, for any $q_1, q_2$, 
\begin{gather}
    \int \pi(p \mid \rho, q_1, q_2) \pi(q_1 \mid \gamma_1) \pi(q_2 \mid \gamma_2) \diff p = \int \pi(p \mid \rho^\prime, q_1, q_2) \pi(q_1 \mid \gamma_1^\prime) \pi(q_2 \mid \gamma_2^\prime) \diff p \nonumber \\
    \implies  \forall q_1, q_2, \quad \pi(q_1 \mid \gamma_1) \pi(q_2 \mid \gamma_2) = \pi(q_1 \mid \gamma_1^\prime)\pi(q_2 \mid \gamma_2^\prime). \label{eq:pi-q1-q2}
\end{gather}
Similarly, for any $q_1$,
\begin{gather}
    \int \pi(q_1 \mid \gamma_1) \pi(q_2 \mid \gamma_2) \diff q_2 = \int \pi(q_1 \mid \gamma_1^\prime)\pi(q_2 \mid \gamma_2^\prime) \diff q_2 \nonumber \\
    \implies \forall q_1, \quad \pi(q_1 \mid \gamma_1) = \pi(q_1 \mid \gamma_1^\prime). \label{eq:pi-q1}
\end{gather}
As $\gamma_1$ is identifiable, \eqref{eq:pi-q1} implies $\gamma_1 = \gamma_1^\prime$. \eqref{eq:pi-q1-q2} simplifies to $\pi(q_2 \mid \gamma_2) = \pi(q_2 \mid \gamma_2^\prime)$ for all $q_2$ and so $\gamma_2 = \gamma_2^\prime$. By a similar argument for \eqref{eq:pi-p-q1-q2}, $\rho = \rho^\prime$.

\subsection{Proof of Theorem \ref{thm:thm:polya-urn}}

\begin{proof}
In the independence centered CRP, the customer can either choose an entre\'e and table from the contingency table, or they can make their choices independently. Denote $r^{(i)}$ to be the number of times that customers $1, \dots, i-1$ made independent choices. To make the CRP metaphor more concrete, we introduce the notion of pseudo-unique values $\vartheta_1^*, \dots, \vartheta^*_{r^{(i)}}$ and view-specific pseudo-unique values $\vartheta^*_{v1}, \dots, \vartheta^*_{vr^{(i)}}$, defined by $\vartheta_l^* \mid Q  \sim Q$ for all $l=1, \dots, r^{(i)}$.

Recall the stick-breaking process representation of the DP \citep{sethuraman1994constructive},
    \begin{align} \label{eq:Q-stick}
        Q_v & = \sum_{k_v=1}^\infty q_{vk_v} \delta_{\theta_{vk_v}^*}, & q_{vk_v} & = \beta_{vk_v} \prod_{m_v<k_v} (1 - \beta_{vm_v}), & \beta_{vk_v} & \sim \tx{Beta}(1, \gamma_v), & \theta_{vk_v}^* & \sim H.
    \end{align}
Since the $Q_v$ are discrete and $Q$ is the product measure, we have that for any $\theta = (\theta_1, \dots, \theta_V) \in \Theta$,
\begin{gather}
    Q(\theta) \overset{a.s.}{=} \sum_{k_1=1}^\infty \cdots \sum_{k_V=1}^\infty \lb  \prod_{v=1}^V q_{vk_v} \rb \delta_{\theta_v^*}(\theta) \nonumber \\ 
    \overset{a.s.}{=} \sum_{k_1=1}^\infty \cdots \sum_{k_V=1}^\infty \prod_{v=1}^V q_{vk_v} \delta_{\theta_{vk_v}^*}(\theta_v) \overset{a.s.}{=} \prod_{v=1}^V \lb \sum_{k_v=1}^\infty q_{vk_v} \delta_{(\theta_{vk_v}^*)} \rb = \prod_{v=1}^V Q_v(\theta_v),
\end{gather}
by Fubini's theorem. Hence, by independence we have that
\begin{equation*}
    \E[Q \mid \vartheta^*_{1}, \dots, \vartheta^*_{r^{(i)}}] = \bigtimes_{v=1}^V \E[Q_v \mid \vartheta^*_{1}, \dots, \vartheta^*_{r^{(i)}}] = \bigtimes_{v=1}^V \E[Q_v \mid \vartheta^*_{v1}, \dots, \vartheta^*_{vr^{(i)}}].
\end{equation*}

Observe that each term in the product measure above is just the posterior expectation of a Dirichlet process (see, for example \cite{gelman2013bayesian}),
\begin{equation} \label{eq:Q_u-Polya}
    \E[Q_v \mid \vartheta_{v1}^*, \dots, \vartheta_{vr^{(i)}}^*] = \frac{1}{\gamma_v + r^{(i)}} \sum_{l_v=1}^{r^{(i)}} \delta_{\vartheta_{vl_v }^*} + \frac{\gamma_v}{\gamma_v + r^{(i)}} H_v.
\end{equation}
Under this sampling scheme, we can now see that, provided $H_v$ is non-atomic, there will be ties amongst the pseudo-unique values $\vartheta_{v1}^*, \dots, \vartheta_{vr^{(i)}}^*$ into unique values $\theta_{v1}^*, \dots, \theta_{vK_v^{(i)}}^*$ where $K_v^{(i)} \leq r^{(i)}$. We then denote $r_{vk_v} = \sum_{l_v=1}^{r^{(i)}} \textbf{1}(\vartheta_{vl_v}^* = \theta_{v k_v}^*)$ for all $k_v=1,\dots,K_v$, which counts the number of times that the first $i-1$ customers chose an occupied table or already served dish given that they made the choices independently. This also implies that $\sum_{k_v=1}^{K_v^{(i)}} r_{vk_v} = r^{(i)}$. We can then rewrite \eqref{eq:Q_u-Polya} as
\begin{equation} \label{eq:Q_u-polya-unique}
    \E[Q_v \mid \theta_{v1}^*, \dots, \theta_{vK_v}^*] = \frac{1}{\gamma_v + r^{(i)}} \sum_{k_v=1}^{K_v^{(i)}} r_{vk_v} \delta_{\theta_{vk_v}^*}  + \frac{\gamma_v}{\gamma_v + r^{(i)}} H_v.
\end{equation}

Now, we address the predictive distribution of $\theta_i$. Conditional on $Q$ and $\vartheta_1^*, \dots, \vartheta_{r^{(i)}}^*$, the Blackwell-Macqueen urn scheme \citep{blackwell1973ferguson} holds (e.g., as in the HDP in \cite{teh2006sharing}), resulting in
\begin{equation} \label{eq:Polya-given-Q}
    \theta_i \mid \vartheta_1^*, \dots, \vartheta_{r^{(i)}}^*, \rho, Q \sim \frac{1}{\rho + i - 1} \sum_{l_1, \dots, l_V} w_{l_1 \cdots l_V} \delta_{(\vartheta^*_{1l_1}, \dots, \vartheta^*_{Vl_V})} + \frac{\rho}{\rho+i-1}Q,
\end{equation}
where the sum is taken over $l_1 \in [r^{(i)}], \dots, l_V \in [r^{(i)}]$ and $w_{l_1 \cdots l_V} = \sum_{j=1}^{i-1}\textbf{1}(\theta_{j1}=\vartheta_{1l_1}^*, \dots, \theta_{jV} = \vartheta^*_{Vl_V})$, and $\sum_{k_1, \dots, k_v}w_{k_1 \cdots k_v} = i-1$. As mentioned previously, there are groups amongst the pseudo-unique values, which we count with
\begin{equation*}
    n_{k_1 \cdots k_v} = \sum_{l_1, \dots, l_V} \textbf{1}(\vartheta_{1l_1} = \theta_{1k_1}^*, \dots, \vartheta_{Vl_V}^* = \theta_{Vk_v}^*)w_{l_1 \cdots l_V},
\end{equation*}
which correspond to the entries in the contingency array. Denote $\bs r_v = (r_{v1}, \dots, r_{vK_v^{(i)}})$ for all $v=1, \dots, V$. Using \eqref{eq:Q_u-Polya} and \eqref{eq:Q_u-polya-unique}, we can marginalize out $Q$ from \eqref{eq:Polya-given-Q} to obtain
\begin{equation*}
    \theta_i \mid \theta^{(i)}, \rho \sim \hspace{-4mm} \sum_{k_1, \dots, k_V}   \frac{n_{k_1 \cdots k_v}}{\rho + i - 1} \delta_{(\theta_{1k_1}^*, \dots, \theta_{Vk_V}^*)} + \frac{\rho}{\rho + i - 1} \bigtimes_{v=1}^V \lb  \sum_{k_v=1}^{K_v^{(i)}} \frac{r_{vk_v}}{\gamma_v + r^{(i)}} \delta_{\theta_{vk_v}^*}  + \frac{\gamma_v}{\gamma_v + r^{(i)}} H_v \rb.
\end{equation*}

 \end{proof}

\subsection{Proof of Proposition \ref{prop:convergence}}
\begin{proof}
First, we show (i). Let $\epsilon>0$ and assume that $\rho$ is fixed, i.e., we will show all results by conditioning on $\rho$, though we will not use the conditional notation. Using Markov's inequality,
\begin{gather} \label{eq:markov-inequality}
    \pi( |P(S) - Q(S)| > \epsilon \mid Q) \leq \frac{\Var[P(S) \mid Q]}{\epsilon^2} = \frac{ Q(S) \lb 1 - Q(S) \rb}{\epsilon^2(\rho+1)} \leq \frac{1}{4\epsilon^2(\rho+1)}.
\end{gather}
Next, we take the expectation of both sides of \eqref{eq:markov-inequality} with respect to the prior for $Q$. Since the right hand side does not depend on $Q$, we have that $\pi(|P(S) - Q(S)| > \epsilon) \leq 1/(4\epsilon^2 (\rho + 1))$ for any fixed value of $\rho>0$. If we construct a countable sequence of $\rho$ values with an infinite limit, this shows that for any $\epsilon>0$, $\lim_{\rho \to \infty} \pi(|P(S) - Q(S)| > \epsilon) = 0$.

For (ii), observe that $Q_v(S_v) \overset{p}{\longrightarrow} H_v(S_v)$ as $\gamma_v \to \infty$ for $v=1,2$. Fix $\lambda \in [0,1]$ to be any continuity point for the CDF of $P(S)$.
Recall that $P(S) \mid \rho, Q(S) \sim \tx{Beta}(\rho Q(S), \rho Q(S^c))$.
Since the partial derivatives of the incomplete beta function exist for all shape parameters \citep{ozcag2008partial}, the continuous mapping theorem implies convergence of the CDFs in prior probability,
\begin{equation}
    \Pi(\lambda \mid Q_1(S_1), Q_2(S_2)) \cvp \Pi(\lambda \mid H_1(S_1), H_2(S_2)).
\end{equation}
Using the dominated convergence theorem applied to $\pi \lb P(S) \leq \lambda \mid Q_1(S_1), Q_2(S_2) \rb$, we get that $P(S) \overset{d}{\longrightarrow} P_0(S)$.
\end{proof}

\subsection{Proof of Proposition \ref{prop:finite-approximation}}
\begin{proof}
    \cite{ishwaran2002exact} shows that as $L_v \to \infty$, $\int h_v(\theta_v) \diff \tilde{Q}_v(\theta_v) \overset{d}{\to} \int h_v(\theta_v) \diff Q_v(\theta_v)$ for all measurable $h_v$. Using Fubini's theorem, we have that
    \begin{equation}
        \int \int h(\theta) \diff\tilde{Q}(\theta) = \int h_1(\theta_1) \diff \tilde{Q}_1(\theta_1)  \int h_2(\theta_1) \diff \tilde{Q}_2(\theta_2).
    \end{equation}
    Recall that $\tilde{Q}_1 \ind \tilde{Q}_2$, $\tilde{Q}_1 \ind Q_2$, $\tilde{Q}_2 \ind Q_1$, and $Q_1 \ind Q_2$. Therefore, by the continuous mapping theorem,
    \begin{equation*}
        \int h_1(\theta_1) \diff \tilde{Q}_1(\theta_1)  \int h_2(\theta_1) \diff \tilde{Q}_2(\theta_2) \overset{d}{\longrightarrow} \int h_1(\theta_1) \diff Q_1(\theta_1) \int h_2(\theta_1) \diff Q_2(\theta_2),
    \end{equation*}
    and the result follows by applying Fubini's theorem to the right hand side of the above. 
    
\end{proof}

\subsection{Proof of Theorem \ref{thm:joint-EPPF}}
\begin{proof}
The result largely follows by adapting the proof of Theorem 1 in \cite{camerlenghi2018bayesian}, which derives the EPPF resulting from an HDP. Let $\theta_1 \neq \cdots \neq \theta_K$ be fixed, where $\theta_k = (\theta_{1k}, \theta_{2k})$. Consider the function
\begin{equation*}
    M_{\n}(d \theta_1, \dots, d \theta_K) = \E \lb \prod_{k_1,k_2} P^{n_{k_1k_2}}(d (\theta_{1k_1}, \theta_{2k_2})) \rb.
\end{equation*}
The connection between $M_{\n}(d \theta_1, \dots, d \theta_K)$ and $\Pi(C_1,C_2 \mid \rho, \gamma_1, \gamma_2)$ is that the MEPPF is equal to the integral of the above with respect to $\theta_1 \neq \dots \neq \theta_K$. Suppose now that we create an $\epsilon$-rectangle around each $\theta_k$, denoted by $B_{k_1,k_2,\epsilon} = B_{1k_1, \epsilon} \times B_{2k_2, \epsilon}$, where $B_{vk_v, \epsilon} = \lb x \in \Theta_v: \norm{\theta_{vk_v}-x} < \epsilon \rb$, and $\epsilon$ is small enough so that $B_{k_1,k_2,\epsilon}$ are disjoint for all $k_1,k_2$. It follows then that
\begin{gather*}
    M_{\n}\left( \bigtimes_{k_1,k_2} B_{k_1,k_2,\epsilon} \right) =  \E \lb \prod_{k_1,k_2} P^{n_{k_1k_2}}(B_{k_1,k_2,\epsilon}) \mid Q_1,Q_2 \rb \\
    = \frac{1}{\Gamma(n)} \E \int_{0}^{\infty} u^{n-1} e^{- \rho \Psi(u) Q(\Theta_\epsilon^c) } \prod_{k_1,k_2} \lb (-1)^{n_{k_1k_2}} \frac{d^{n_{k_2 k_2}}}{d u^{n_{k_1 k_2}}} e^{-\rho \Psi(u) Q(B_{k_1,k_2,\epsilon})} \rb \diff u,
\end{gather*}
where $\Theta_\epsilon^c = \Theta \setminus \bigcup_{k_1,k_2} B_{k_1,k_2,\epsilon}$ and for any $u > 0$, $\Psi(u) = \int_{0}^{\infty} \lb 1 - e^{-u \nu}  \rb \nu^{-1} e^{-\nu} d \nu$. By applying equation (A2) in \cite{camerlenghi2018bayesian}, we have that 
\begin{gather}
    M_{\n}\left( \bigtimes_{k_1,k_2} B_{k_1,k_2,\epsilon} \right) \nonumber \\
    = \frac{1}{\Gamma(n)} \int_{0}^{\infty} u^{n-1}e^{- \rho \Psi(u)} \sum_{r \in \mathcal R} \rho^{r} \left( \E \prod_{k_1, k_2} Q^{r_{k_1 k_2}}(B_{k_1,k_2,\epsilon}) \xi_{n_{k_1k_2},r_{k_1k_2}}(u) \right) \diff u, \label{eq:camerlengthi}
\end{gather}
where for any two integers $m,n$,
\begin{equation*}
    \xi_{n, m}(u) = \frac{1}{m!} \sum_{w \in W} \binom{n}{w_1 \cdots w_m} \frac{\prod_{l=1}^m \Gamma(w_l)}{(u+1)^{n}} ,
\end{equation*}
and the summation is over $W = \lb w=(w_1, \dots, w_m): w_l \in \mathbb{N}, \sum_{l=1}^m w_l = n \rb$. Observe that, under the assumption that $\mathcal{S}_1, \mathcal{S}_2$ are Borel algebras, with probability equal to 1,
\begin{gather*}
    \prod_{k_1,k_2} Q^{r_{k_1k_2}}(B_{k_1,k_2,\epsilon}) = \prod_{k_1,k_2} \lb  Q_1^{r_{k_1k_2}}(B_{1k_1,\epsilon}) Q_2^{r_{k_1,k_2}}(B_{2k_2,\epsilon}) \rb \\
    = \prod_{k_1} Q_1^{r_{1k_1}}(B_{1k_1,\epsilon}) \prod_{k_2} Q_2^{r_{2k_2}}(B_{2k_2, \epsilon}).
\end{gather*}
By independence of $Q_1$ and $Q_2$,
\begin{gather*}
    \E \prod_{k_1,k_2} Q^{r_{k_1 k_2}}(B_{k_1,k_2,\epsilon}) = \E \prod_{k_1} Q_1^{r_{1k_1}}(B_{1k_1,\epsilon})\ \E \prod_{k_2} Q_2^{r_{2k_2}}(B_{2k_2,\epsilon}).
\end{gather*}
Returning to \eqref{eq:camerlengthi}, we have that
\begin{gather*}
    M_{\n}\left( \bigtimes_{k_1,k_2} B_{k_1,k_2,\epsilon} \right)= \\
    \frac{1}{\Gamma(n)} \sum_{r \in \mathcal R} M_{r_1} \left( \bigtimes_{k_1} B_{1k_1, \epsilon} \right) M_{r_2} \left( \bigtimes_{k_2} B_{2k_2,\epsilon} \right)  \rho^{r} \int_{0}^{\infty} u^{n-1}e^{-\rho \Psi(u)} \lb \prod_{k_1,k_2} \xi_{n_{k_1k_2}, r_{k_1k_2}}(u) \rb \diff u,
\end{gather*}
where $M_{ r_v}(d \theta_{v1}, \dots, \theta_{vK_v}) = \E \prod_{k_v} Q_v^{r_{vk_v}}(d \theta_{vk_v})$ for $v=1,2$. Using Proposition 3 of \cite{james2009posterior}, we have that for $v=1,2$,
\begin{equation*}
    M_{ r_v}(d \theta_{v1}, \dots, \theta_{vK_v}) = \left( \prod_{k_v} H_v(d \theta_{vk_v}) \right) \frac{\gamma_v^{K_v}}{(\gamma_v)^{(| r_v|)}} \prod_{k_1} (r_{vk_v}-1)!,
\end{equation*}
where $|r_v| = \sum_{k_v=1}^{K_v} r_{vk_v} = r$ for $v=1,2$. The MEPPF is obtained by taking $\epsilon \to 0$, then applying Example 1 in \cite{camerlenghi2018bayesian}.
\end{proof}

\subsection{Proof of Theorem \ref{thm:ERI}}
\begin{proof}
    We express the PCDP in terms of a stick-breaking representation using the technique in \cite{teh2006sharing} for the HDP stick-breaking decomposition. We first begin by sampling the marginals $Q_v$. We write $Q_v = \sum_{k_v=1}^{\infty} q_{vk_v} \delta_{\theta^*_{vk_v}}$, where $\theta^*_{vk_v} \sim H_v$, $q_{vk_v} = \beta_{vk_v} \prod_{m_v<k_v}(1-\beta_{vm_v})$, and $\beta_{vk_v} \sim \tx{Beta}(1, \gamma_v)$. We collect the individual weights of each $Q_v$ into the infinite-dimensional probability vectors $q_v = (q_{vk_v})_{k_v=1}^\infty$. Observe that the support of $P$ is comprised of the Cartesian product of the supports of $Q_1, Q_2$. This implies that we can rewrite $P$ as
    \begin{equation} \label{eq:CLIC-DP-stick}
        P = \sum_{k_1=1}^{\infty} \sum_{k_2=1}^{\infty} p_{k_1 k_2} \delta_{(\theta_{1k_1}^*, \theta_{2k_2}^*)},
    \end{equation}
    where $0 \leq p_{k_1 k_2} \leq 1$ and $\sum_{k_1=1}^{\infty} \sum_{k_2=1}^{\infty} p_{k_1 k_2} = 1$. We denote $p = (p_{k_1k_2})_{k_1, k_2}$ to be the matrix of probabilities constructed from the weights in \eqref{eq:CLIC-DP-stick}. Assuming that $H_1, H_2$ are non-atomic, the probability matrix is itself a Dirichlet process centered on the product measure of $q_1$ and $q_2$,
    \begin{align*}
        p  \mid \rho, q & \sim \DP(\rho, q ); & q = q_1 \otimes q_2,
    \end{align*}
    where $\otimes$ is the tensor product.    
    Conditional on $p$, 
    \begin{equation} \label{eq:rand-proof-infinite-sum}
        \E[R(C_1,C_2) \mid  p] = \sum_{k_1=1}^\infty \sum_{k_2=1}^\infty p_{k_1k_2}^2 + \sum_{k_1 \neq m_1}^\infty \sum_{k_2 \neq m_2}^\infty p_{k_1 k_2} p_{m_1 m_2}.
    \end{equation}
    Observe that $\E[R(C_1, C_2) \mid p]$ is an infinite sum. However, all terms in the sum are positive and the Rand index is bounded above by $1$, so we can exploit the linearity of expectation by invoking Fubini's theorem, which we will implicitly invoke multiple times by using the law of total expectation. By definition, $p_{k_1 k_2} \mid \rho, q \sim \tx{Beta}(\rho q_{1 k_1} q_{2 k_2}, \rho(1 - q_{1 k_1} q_{2 k_2}))$. This implies
    \begin{gather*}
        \E[p_{k_1 k_2}^2 \mid q_{1k_1}, q_{2k_2}] = \frac{q_{1k_1} q_{2k_2}(1 - q_{1k_1} q_{2k_2})}{\rho + 1} + q_{1k_1}^2 q_{2k_2}^2; \\
        \E[q_{k_1 k_2} q_{m_1 m_2} \mid q_{1 k_1}, q_{1 m_1}, q_{2 k_2}, q_{2 m_2}] = \left( 1 - \frac{1}{\rho + 1} \right) q_{1k_1} q_{1 m_1} q_{2 k_2} q_{2 m_2}. 
    \end{gather*}
    Similarly, we have that $q_1 \ind q_2$ and distributed as stick-breaking processes with concentration parameters $\gamma$. Calculating the expectation of \eqref{eq:rand-proof-infinite-sum} requires computing the moments of a stick-breaking process. For $k_v > m_v$, these are
    \begin{gather*}
        \E[q_{v k_v}] = \frac{\gamma^{k_v-1}}{(1+\gamma)^{k_v}} \\ 
        \E[q_{v k_v}^2] = \lb  \frac{\gamma}{(\gamma+1)^2(\gamma+2)} + \frac{1}{(1 + \gamma )^2}  \rb \lb \frac{\gamma}{(\gamma+1)^2(\gamma+2)} + \frac{\gamma^2}{(1 + \gamma )^2} \rb^{k_v-1} \\ 
        \E[q_{vk_v} q_{v m_v}] = \frac{\gamma}{(\gamma + 1)^2(\gamma+2)} \left( \frac{\gamma}{\gamma+1} \right)^{k_v - m_v - 1} \lb \frac{\gamma}{(\gamma+1)^2(\gamma+2)} + \frac{\gamma^2}{(1 + \gamma )^2}  \rb^{m_v-1}
     \end{gather*}
     By plugging these terms back into \eqref{eq:rand-proof-infinite-sum}, we then need to compute the limits of several power series, all of which are geometric. In particular, we have that
     \begin{equation}
         \begin{gathered} \label{eq:series-sums}
        \sum_{k_v=1}^{\infty} \E[q_{v k_v}] = 1;\quad 
         \sum_{k_v=1}^{\infty} \E[q_{v k_v}^2] = \frac{1}{\gamma + 1};\quad 
         2 \sum_{m_v=1}^{\infty} \sum_{k_v = m_v + 1}^\infty \E[q_{v k_v} q_{v m_v}] = \frac{\gamma}{\gamma + 1}.
         \end{gathered}
     \end{equation}
    Plugging these identities back into \eqref{eq:rand-proof-infinite-sum} shows that
     \begin{gather*}
         \tau_{12} = \frac{1}{\rho + 1} + \left( 1 - \frac{1}{\rho+1} \right) \left( \frac{1}{(\gamma+1)^2} + \frac{\gamma^2}{(\gamma+1)^2}  \right)  \\
         = \frac{1}{\rho + 1} + \left( 1 - \frac{1}{\rho+1} \right)  \frac{\gamma^2 + 1}{(\gamma+1)^2}.
     \end{gather*}
     By Corollary 1 in \cite{page2022dependent}, the expected Rand index of two independent $\tx{CRP}(\gamma)$ partitions is $(\gamma^2 + 1)/(\gamma+1)^2$. This shows how $\rho$ shrinks toward independent $\tx{CRP}(\gamma)$ partitions. Additionally, increased $\gamma$ pulls the expected Rand index to one, reflecting the tendency of $(C_1, C_2)$ to concentrate towards a $\tx{CRP}(\rho)$ distribution for large $\gamma$.

     Observe that in the case where $\gamma_1 \neq \gamma_2$, \eqref{eq:series-sums} implies that
     \begin{gather*}
         \tau_{12} = \frac{1}{\rho+1} + \left( 1 - \frac{1}{\rho+1} \right) \lb \frac{1}{(\gamma_1+1)(\gamma_2+1)} + \frac{\gamma_1\gamma_2}{(\gamma_1+1)(\gamma_2 + 1)}  \rb \\
         = \frac{1}{\rho+1} + \left( 1 - \frac{1}{\rho+1} \right) \frac{1+\gamma_1 \gamma_2}{(\gamma_1+1)(\gamma_2+1)}.
     \end{gather*}
\end{proof}

\subsection{ERI of the Finite Approximation}
We now return to the finite approximation to the PCDP presented in Section \ref{section:PCDP}, which we use during our simulations and application. As stated in the main article, the ERI under the finite approximation converges to the ERI of the PCDP as the number of components is taken to infinity.

\begin{prop}
    Let $\tilde{\tau}_{12}$ be the ERI of $C_1$ and $C_2$ under the finite approximation to the PCDP in \eqref{eq:finite-approximation}. Then $\tilde{\tau}_{12} = \nu + (1 - \nu) \kappa(L_1,L_2)$, where $\kappa(L_1, L_2) \to (1+\gamma^2)/(1+\gamma)^2$ as $L_1, L_2 \to \infty$.
\end{prop}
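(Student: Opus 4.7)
The plan is to compute $\tilde\tau_{12}$ by iterated expectation: first conditioning on $\tilde p$, then on $\tilde q = \tilde q_1 \otimes \tilde q_2$, and finally integrating out $\tilde q_1, \tilde q_2$. The main simplification relative to the proof of Theorem \ref{thm:ERI} is that all sums are finite, so I only need elementary Dirichlet moment formulas; there is no need to invoke Fubini or stick-breaking identities.

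Conditional on $\tilde p$, the labels are i.i.d.\ across $i$ from a matrix multinomial with probabilities $\tilde p$, so as in the infinite case
\begin{equation*}
\E[R(C_1,C_2) \mid \tilde p] = \sum_{k_1=1}^{L_1}\sum_{k_2=1}^{L_2} \tilde p_{k_1k_2}^2 + \sum_{k_1 \neq m_1}\sum_{k_2 \neq m_2} \tilde p_{k_1k_2}\tilde p_{m_1m_2}.
\end{equation*}
Since $\tilde p \mid \tilde q \sim \Dir(\rho \tilde q)$ has total concentration $\rho$ (because $\sum_{k_1,k_2}\tilde q_{1k_1}\tilde q_{2k_2} = 1$), I would plug in the Dirichlet moment identities
\begin{align*}
\E[\tilde p_{k_1k_2}^2 \mid \tilde q] &= \nu\, \tilde q_{1k_1}\tilde q_{2k_2} + (1-\nu)(\tilde q_{1k_1}\tilde q_{2k_2})^2, \\
\E[\tilde p_{k_1k_2}\tilde p_{m_1m_2} \mid \tilde q] &= (1-\nu)\, \tilde q_{1k_1}\tilde q_{1m_1}\tilde q_{2k_2}\tilde q_{2m_2} \quad ((k_1,k_2) \neq (m_1,m_2)),
\end{align*}
with $\nu = 1/(\rho+1)$. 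Substituting and using $\sum_{k_1,k_2}\tilde q_{1k_1}\tilde q_{2k_2} = 1$ collapses the first ``off-diagonal'' term to $\nu$, giving $\E[R(C_1,C_2) \mid \tilde q] = \nu + (1-\nu) W(\tilde q_1,\tilde q_2)$, where $W(\tilde q_1,\tilde q_2) = \sum_{k_1,k_2}(\tilde q_{1k_1}\tilde q_{2k_2})^2 + \sum_{k_1 \neq m_1}\sum_{k_2 \neq m_2}\tilde q_{1k_1}\tilde q_{1m_1}\tilde q_{2k_2}\tilde q_{2m_2}$. This is exactly the Rand index of two independent partitions drawn from the finite symmetric Dirichlet priors, so setting $\kappa(L_1,L_2) = \E[W(\tilde q_1,\tilde q_2)]$ delivers the first half of the claim, $\tilde\tau_{12} = \nu + (1-\nu)\kappa(L_1,L_2)$.

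The last step is to factorize $\kappa(L_1,L_2)$ across views. Both double sums in $W$ split as products: $\sum_{k_1,k_2}(\tilde q_{1k_1}\tilde q_{2k_2})^2 = (\sum_{k_1}\tilde q_{1k_1}^2)(\sum_{k_2}\tilde q_{2k_2}^2)$, and the ``mismatched'' sum factors as $(1 - \sum_{k_1}\tilde q_{1k_1}^2)(1 - \sum_{k_2}\tilde q_{2k_2}^2)$ using $\sum_{k_v \neq m_v}\tilde q_{vk_v}\tilde q_{vm_v} = 1 - \sum_{k_v}\tilde q_{vk_v}^2$. Combining prior independence of $\tilde q_1$ and $\tilde q_2$ with the symmetric Dirichlet second moment $\E[\sum_{k_v}\tilde q_{vk_v}^2] = (1+\gamma/L_v)/(1+\gamma)$ yields
\begin{equation*}
\kappa(L_1,L_2) = \frac{(1+\gamma/L_1)(1+\gamma/L_2) + (\gamma - \gamma/L_1)(\gamma - \gamma/L_2)}{(1+\gamma)^2},
\end{equation*}
which tends to $(1+\gamma^2)/(1+\gamma)^2$ as $L_1,L_2 \to \infty$. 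I do not anticipate any genuine obstacle; the only care needed is bookkeeping of the index set $\{k_1 \neq m_1, k_2 \neq m_2\}$ so that the Rand-index decomposition is compatible with the product-form factorization across views that drives the final limit.
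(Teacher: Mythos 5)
Your proof is correct and follows essentially the same route as the paper: decompose $\E[R(C_1,C_2)\mid \tilde p]$ into matched and mismatched pair terms, apply Dirichlet moment identities conditional on $\tilde q$, and exploit independence of $\tilde q_1$ and $\tilde q_2$ to factorize across views; your only streamlining is to extract the $\nu + (1-\nu)W(\tilde q_1,\tilde q_2)$ structure before integrating over $\tilde q$, so that $\kappa(L_1,L_2)=\E[W]$ is identified with the independent-model ERI by definition rather than by a separate matching computation. Your closed form for $\kappa(L_1,L_2)$ is algebraically equivalent to the paper's intended expression (and consistent with the paper's subsequent remark on the $L^{-1}$ rate); the paper's displayed formula contains a minor typo, with $1/L_v^{2}$ where $1/L_v$ should appear after absorbing the factor $L_1L_2$.
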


\begin{proof}
    Similar to the infinite case, for any probability matrix $\tilde p$, one can show that
    \begin{equation*}
        \E[R(C_1, C_2) \mid \tilde p] = \sum_{k_1=1}^{L_1} \sum_{k_2 = 1}^{L_2} \tilde p_{k_1 k_2}^2 + \sum_{k_1 \neq m_1} \sum_{k_2 \neq m_2} \tilde p_{k_1k_2} \tilde p_{m_1 m_2}.
    \end{equation*}
    If $\tilde p = \tilde p_1 \otimes \tilde p_2$, with $ \tilde p_1 \sim \tx{Dir}(\gamma/L_1, \dots, \gamma/L_1)$, $ \tilde p_2 \sim \tx{Dir}(\gamma/L_2, \dots, \gamma/L_2)$, and $ \tilde p_1 \ind \tilde p_2$, then we have that
    \begin{gather*}
        \E[\tilde p_{k_1 k_2}^2 ] = \E[ \tilde p_{1k_1}^2] \E [ \tilde p^2_{2k_2}] = \lb  \frac{(1/L_1)(1-1/L_1)}{\gamma + 1} + \frac{1}{L_1^2} \rb \lb  \frac{(1/L_2)(1-1/L_2)}{\gamma + 1} + \frac{1}{L_2^{2}} \rb \\
        \E[\tilde p_{k_1k_2} \tilde p_{m_1m_2}]  = \E[\tilde p_{1k_1} \tilde p_{1m_1}] \E[\tilde p_{2k_2} \tilde p_{2m_2}] = \frac{\gamma^2}{(\gamma + 1)^2} \frac{1}{L_1^2 L_2^{ 2}}.
    \end{gather*}
    So for $C_1 \ind C_2$, with both distributed as symmetric Dirichlet-multinomial partitions,
    \begin{gather*}
        \E[R(C_1,C_2) \mid \gamma] = \kappa(L_1,L_2) \nonumber \\
        = \left(  \frac{1-1/L_1}{\gamma + 1} + \frac{1}{L_1^2}    \right) \left( \frac{1 - 1/L_2}{\gamma + 1} + \frac{1}{L_2^2}  \right) + \frac{(1-1/L_1)(1-1/L_2) \gamma^2}{(\gamma+1)^2}.
    \end{gather*}
    Note that $\lim_{L_1,L_2 \to \infty} \kappa(L_1,L_2) = (1+\gamma^2)/(1+\gamma)^2$, the ERI for two independent $\CRP(\gamma)$ partitions.

    Next, assume $(C_1,C_2)$ is distributed as partitions induced by the finite approximation to the PCDP in \eqref{eq:finite-approximation} with $\gamma_1 = \gamma_2 = \gamma$. Conditional on $\tilde q_1, \tilde q_2$,
    \begin{gather*}
        \E[\tilde p_{k_1k_2}^2 \mid \tilde q_{1k_1}, \tilde q_{2k_2}] = \frac{ \tilde q_{1k_1} \tilde q_{2k_2} (1 - \tilde q_{1k_1} \tilde q_{2k_2})   }{ \rho + 1} + \tilde q_{1k_1}^2 \tilde q_{2k_2}^2 \\
        \E[\tilde p_{k_1k_2} \tilde p_{m_1m_2} \mid \tilde q_{1k_1}, \tilde q_{2k_2} ] = \left(1 - \frac{1}{\rho+1} \right) \tilde q_{1k_1} \tilde q_{2k_2} \tilde q_{1m_1} \tilde q_{2m_2}.  
    \end{gather*}
    Set $\nu = 1/(\rho + 1)$. Recall that $\tilde q_v \sim \tx{Dir}(\gamma/L_v, \dots, \gamma/L_v)$ for $v=1,2$. Hence,
    \begin{gather}
        \E[\tilde p_{k_1k_2}^2] = \frac{\nu}{L_1 L_2} + (1-\nu) \lb \frac{(1/L_1)(1-1/L_1)}{\gamma + 1} + \frac{1}{L_1^2} \rb \lb \frac{(1/L_2)(1-1/L_2)}{\gamma + 1} + \frac{1}{L_2^2} \rb; \label{eq:akk-second-moment} \\ 
        \E[\tilde p_{k_1k_2} \tilde p_{m_1m_2}] = (1-\nu) \frac{ \gamma^2}{L_1^1 L_2^2(\gamma+1)^2} . \label{eq:akk-covariance}
    \end{gather}
    The result follows by multiplying \eqref{eq:akk-second-moment} and \eqref{eq:akk-covariance} by $L_1 L_2$ and $L_1L_2(L_1-1)(L_2-1)$, respectively, then summing them together.
\end{proof}

\begin{remark}
    Let $L_1 = L_2 = L$. Then $|\tau_{12} - \tilde \tau_{12}| \propto \lb (\gamma/L) - \gamma + 1 \rb/L$, where proportionality is in terms of $L$. For example, if $\gamma=1$, then $|\tau_{12} - \tilde \tau_{12}| \propto 1/L^2$.
\end{remark}
\begin{proof}
    Using equations \eqref{eq:akk-second-moment} and \eqref{eq:akk-covariance}, we can see that
    \begin{gather*}
        \tilde \tau_{12} - \tau_{12} \\
        = (1-\nu) \lb \left( \frac{1 - 1/L}{\gamma + 1} + 1/L  \right)^2 - \frac{1}{(\gamma+1)^2} + \left( (1-1/L)^2 - 1 \right) \frac{\gamma^2}{(\gamma+1)^2}  \rb \\
        = (1 - \nu) \lb \left( 1/L^2 - 2/L \right) \frac{\gamma^2 + 1}{(\gamma+1)^2} + \frac{(2/L)(1-1/L)}{\gamma+1} + 1/L^2 \rb \\
        = \frac{2 \gamma(1-\nu)}{L(\gamma+1)^2} \lb \frac{\gamma}{L} - \gamma + 1 \rb.
    \end{gather*}
\end{proof}

\section{Additional Details on Simulations}
Code for all simulations and the application to the CPP data is available in the \href{https://github.com/adombowsky/clic}{CLIC Github repository}. For the simulations in Section \ref{section:computation}, we set $\theta_0 = 0$ and $\sigma_0 = 1$. For the priors on $\sigma_v^2$, we use the $\tx{Gamma}(1,1)$ distribution, and for the number of components we set $L_v = 5$. The t-HDP of \cite{franzolini2023bayesian} is implemented using code from the Github page of the first author, where the base measures are taken to be $\mathcal{N-IG}(0,1,1,1)$, the DP concentration parameters are both set equal to $1$, and the number of components are both set to be $5$. The EM algorithm is implemented with the \texttt{R} package \texttt{mclust} \citep{scrucca2023mclust}. The clustering point estimates are computed with the \texttt{R} package \texttt{mcclust} \citep{fritsch2009package} and \texttt{mcclust.ext} \citep{wade2015package} using the \texttt{min.VI()} function, which we also use in the application to the \cite{longnecker2001association} data in Section \ref{section:application}.

\subsection{Joint and Conditional Samplers for the Labels}
A key step in the Gibbs sampler is imputing $(c_{1i}, c_{2i})$ for each object. There are multiple strategies for sampling the labels a priori given the clustering formations $C_v^{-i}$, where $C_v^{-i}$ is the $v$th partition with object $i$ removed. In the \textit{joint sampler}, we sample $(c_{1i}, c_{2i})$ using a matrix multinomial given by
\begin{equation} \label{eq:joint-sampler}
    \pi(c_{1i} = k_1, c_{2i} = k_2 \mid C_1^{-i}, C_2^{-i}, \rho) \propto \frac{\rho \gamma_1 \gamma_2 }{L_1L_2} + n_{k_1k_2}^{-i}.
\end{equation}
Alternatively, the conditional sampler decomposes the joint distribution of $c_{1i}$ and $c_{2i}$ as
\begin{gather}
    \pi(c_{1i} = k_1, c_{2i} = k_2 \mid C_1^{-i}, C_2^{-i}, \rho) \nonumber \\
    = \pi(c_{2i} = k_2 \mid c_{1i} = k_1, C_1^{-i}, C_2^{-i}, \rho) \pi(c_{1i} = k_1 \mid  C_1^{-i}, C_2^{-i}, \rho).
\end{gather}
By summing over $k_2$ in \eqref{eq:joint-sampler}, we see that
\begin{gather}
    \pi(c_{1i} = k_1 \mid C_1^{-i}, C_2^{-i}, \rho) \propto \frac{\rho \gamma_1 \gamma_2}{L_1} + n_{1k_1}^{-i}, \label{eq:C1-marginal} \\
    \pi(c_{2i}= k_2 \mid c_{1i} = k_1, C_1^{-i}, C_2^{-i}, \rho) \propto \frac{\rho \gamma_1 \gamma_2}{L_1L_2} + n_{k_1k_2}^{-i}. \label{eq:C2-conditional}
\end{gather}
Note that the conditional distribution of $c_{2i}$ in \eqref{eq:C2-conditional} only depends on objects with the same value of $c_{1i}$. When $\rho$ is fixed, we have that
\begin{equation}
    \pi(c_{1i} = k_1, c_{2i} = k_2 \mid -) \propto f(X_i; (\theta_{1k_1}^*, \theta_{2k_2}^*)) \lb \frac{\rho \gamma_1 \gamma_2}{L_1} + n_{1k_1}^{-i} \rb \lb \frac{\rho \gamma_1 \gamma_2}{L_1L_2} + n_{k_1k_2}^{-i} \rb. 
\end{equation}
Since $f(X_i; (\theta_{1k_1}^*, \theta_{2k_2}^*))$ factorizes into a product, we can use compositional sampling to simulate $\pi(c_{1i} \mid -)$ and then $\pi(c_{2i} \mid c_{1i}, -)$. For the random $\rho$ case, we will need to include the latent probability vectors $\tilde q_1$ and $\tilde q_2$. In that case, we have that
\begin{equation}
    \pi(c_{1i} = k_1, c_{2i} = k_2 \mid C_1^{-i}, C_2^{-i}, \tilde q_1, \tilde q_2, \rho) \propto \rho \tilde q_{1k_1} \tilde{q}_{2k_2} + n_{k_1k_2}^{-i},
\end{equation}
and so
\begin{equation}
    \pi(c_{1i} = k_1, c_{2i} = k_2 \mid -) \propto f(X_i; (\theta_{1k_1}^*, \theta_{2k_2}^*)) \lb \rho \tilde q_{1k_1} + n_{1k_1}^{-i} \rb \lb \rho \tilde q_{1k_1} \tilde q_{2k_2} + n_{k_1k_2}^{-i} \rb. 
\end{equation}

In general, the conditional sampling scheme leads to far better mixing of the partitions than the joint sampling scheme. Observe that the reverse conditional, i.e. doing compositional sampling of $c_{2i}$ then $c_{1i} \mid c_{2i}$, is completely symmetric. Therefore, one can do either conditional sampler in practice, or even alternate between the conditional samplers from iteration to iteration. We have found that either order/alternation leads to similar efficiency, so for our simulations and the application, we implement the scheme outlined above. 

When $V \geq 2$, we construct a similar sampler by taking sums across the dimensions of the $V$-dimensional contingency array. For the uncorrelated views model, the full conditional of the labels for object $i$ is
\begin{equation} \label{eq:V-partitions-full-conditional}
    \pi(c_{1i}=k_1, \dots, c_{Vi} = k_{V} \mid -) \propto \prod_{v=1}^V f_v(X_{vi}; \theta_{vk_v}^*) \times \lb \rho \prod_{v=1}^V \tilde q_{vk_v} + n_{k_1 \cdots k_V}^{-i} \rb,
\end{equation}
where $n_{k_1 \cdots k_V}^{-i}$ are entries in the contingency array for $C_1^{-i}, \dots, C_V^{-i}$. One conditional sampling scheme is
\begin{gather} \label{eq:V-partitions-Markov}
    \pi(c_{1i}=k_1, \dots, c_{Vi} = k_{V} \mid -) \nonumber \\
    \propto f_1(X_{1i}; \theta_{1i}^*) \lb \rho \tilde q_{1k_1} + n_{1k_1}^{-i} \rb \prod_{v=2}^V f_v(X_{vi}; \theta_{vk_v}^*) \lb \rho \prod_{u=1}^v \tilde q_{uk_u} + n_{k_1 \cdots k_u}^{-i} \rb,
\end{gather}
but analogous equations hold for any permutation of $[V]$, e.g., if $\omega$ is a permutation of $[V]$, then
\begin{gather} \label{eq:V-partitions-Markov-permutation}
    \pi(c_{\omega(1)i}=k_{\omega(1)}, \dots, c_{\omega(V)i} = k_{\omega(V)} \mid -) \nonumber \\
    \propto f_{\omega(1)}(X_{\omega(1)i}; \theta_{\omega(1)i}^*) \lb \rho \tilde q_{\omega(1)k_{\omega(1)}} + n_{\omega(1)k_{\omega(1)}}^{-i} \rb  \\
    \times \prod_{v=2}^V f_{\omega(v)}(X_{\omega(v)i}; \theta_{\omega(v)k_v}^*) \lb \rho \prod_{u=1}^v \tilde q_{\omega(u)k_{\omega(u)}} + n_{k_{\omega(1)} \cdots k_{\omega(u)}}^{-i} \rb.
\end{gather}
A similar guarantee holds for the correlated views model.

\subsection{Additional Remarks on Two Uncorrelated Views Illustration}

\begin{figure}
    \centering
    \includegraphics[scale=0.6]{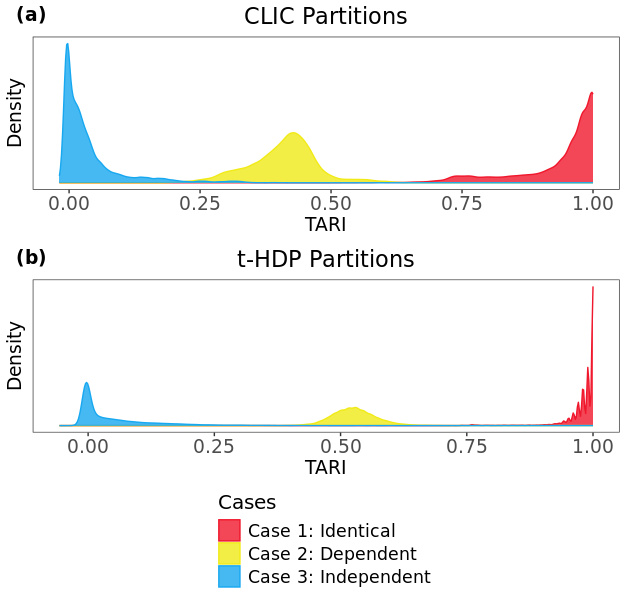}
    \caption{Posterior distributions of the TARI for CLIC and the t-HDP under the first overlap ($\eta^2 = 0.2$).}
    \label{fig:rand-multiview-1}
\end{figure}

\begin{figure}
    \centering
    \includegraphics[scale=0.6]{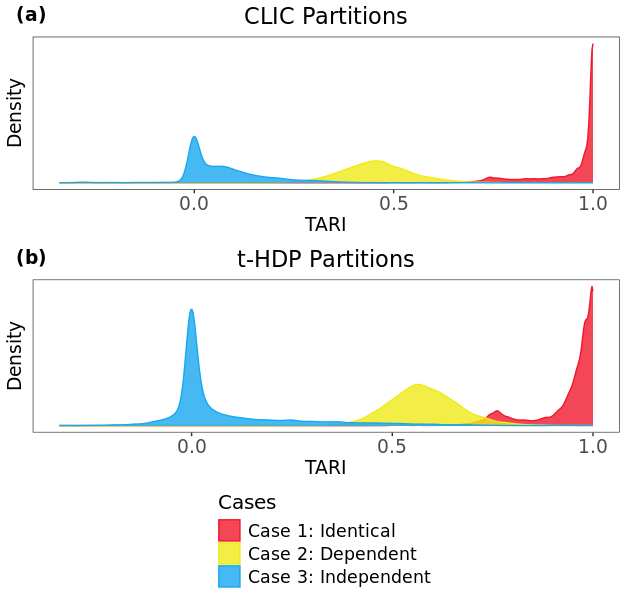}
    \caption{Posterior distributions of the TARI for CLIC and the t-HDP under the second overlap ($\eta^2 = 0.45$).}
    \label{fig:rand-multiview-2}
\end{figure}

The dependence between partitions aids inference on $C_2$, allowing CLIC to outperform competitors. In addition, samples from $\pi(C_1, C_2 \mid \X)$ make it possible to infer the degree of dependence between the partitions. Figures \ref{fig:rand-multiview-1} and \ref{fig:rand-multiview-2} show the estimated posterior distributions of the TARI for CLIC and the t-HDP under the two different overlaps. In both overlaps, CLIC can model a wide range of dependence structures for random partitions. In cases 1 and 3, the TARI posterior concentrates near 1 and 0, respectively, whereas for case 2, the posterior concentrates between 0.25 and 0.75. We can see similar, but distinct posterior distributions for the t-HDP. For instance, in case 2 under both overlaps, the t-HDP tends to produce clusterings with a higher TARI than those from the CLIC posterior. Hence, while both strategies may lead to similar point estimates, inference on the clustering dependence structure differs between CLIC and the t-HDP. In addition, the computation time for each method is recorded in Table \ref{table:two-view-multiview-time}.

\begin{table}[ht]
    \centering
    \begin{tabular}{cccccc}
    \toprule
       Overlap & Case & CLIC & t-HDP & EM & IDPs \\
      \multirow{3}{4em}{$\eta^2 \hspace{-0.5mm} = \hspace{-0.5mm} 0.2$} & 1 &  31.150 & 314.780 & 0.023 & 28.177 \\
       & 2 &  30.105 &   314.860 & 0.019 & 15.681 \\
       & 3 & 30.161  & 316.797 & 0.022 & 27.405 \\
       \multirow{3}{4em}{$\eta^2  \hspace{-0.5mm} = \hspace{-0.5mm} 0.45$} & 1 &   32.512 & 316.287 &   0.025 & 29.427 \\
       & 2 & 30.903 & 314.624 &  0.034 & 15.843 \\
       & 3 & 29.372 & 316.704 &  0.022  & 27.573 \\
        \bottomrule
    \end{tabular}
    \caption{Computation time (in seconds) for CLIC and competitors in the two uncorrelated view simulation study, including time to run the MCMC sampler and calculate the point estimate. All Bayesian methods are run for 30,000 iterations. CLIC is substantially more efficient than the t-HDP, though the EM algorithm is the fastest option.}
    \label{table:two-view-multiview-time}
\end{table}

\subsection{Illustration with Two Correlated Views} \label{section:conditional-simulations}

We simulate the data with $(X_i \mid c_{1i}, c_{2i}) \sim \N(X_{1i}; \lb -1\rb^{c_{1i}+1}, 0.2) \times \N(X_{2i};\lb -1 \rb^{c_{2i}} X_{1i}, \eta^2)$, where $\eta^2$ takes the same values as in the uncorrelated setting in Section \ref{section:simulations}. If $c_{2i}=1$ observation $i$ has a negative relationship with the covariates, while if $c_{2i}=2$ they have a positive association. Figure \ref{fig:conditional-data} shows the true values for $C_1$ and $C_2$ along with the synthetic data. For the PCDP, we assume $\pi(X_i; \theta_i) = \N(X_{1i}; \theta_{1i}, \sigma_1^2) \times \N(X_{2i}; \theta_{2i} X_{1i}, \sigma_2^2)$ and $H_v(\theta) = \N(\theta; \mu_0, \sigma_0^2)$. We compare CLIC with a correlated implementation of the t-HDP (which we call the ct-HDP), two IDPs where one DP clusters $\X_1$ and the other clusters $\X_2 \mid \X_1$ using a mixture of linear regressions, and the EM algorithm applied to $\X_1$ and a linear regression implementation of the EM algorithm \citep{leisch2004flexmix} applied to $\X_2 \mid \X_1$. For both CLIC and the IDPs, the point estimates are obtained by minimizing the VI loss. 

\begin{figure}
    \centering
    \includegraphics[scale=0.58]{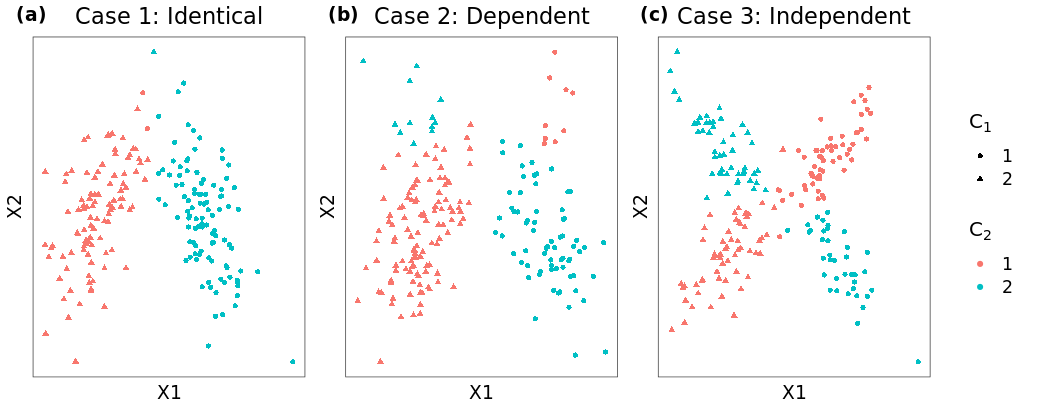}
    \caption{The synthetic data simulated in the correlated views setting, where shapes and colors correspond to the true values of $C_1$ and $C_2$. Here, we have that $C_2$ clusters are characterized by differing regression slopes.}
    \label{fig:conditional-data}
\end{figure}

\begin{table}[ht]
    \centering
    \begin{tabular}{cccccc}
    \toprule
         & CLIC & ct-HDP &  IDPs & EM \\
         \midrule
        Case 1 &  (0.941, 0.941) & (0.960, 0.960) & (0.941, 0.531) & (0.941, 0.606) \\
        Case 2 & (1.000, 0.775) & (1.000, 0.775) & (1.000, 0.704) &   (1.000, 0.721) \\
        Case 3 & (0.921, 0.883)  & (0.921, 0.883) & (0.921, 0.883) & (0.921, 0.883) \\
        \bottomrule
    \end{tabular}
    \caption{The adjusted Rand indices (ARI) between the true $(C_1, C_2)$ and the point estimates computed by CLIC, the ct-HDP, two IDPs (where the second model uses profile regression of $\X_2$ on $\X_1$), and independent EM for the synthetic data. CLIC has the joint highest accuracy for estimating $(C_1,C_2)$ with the exception of case 1, where the ct-HDP correctly classifies one extra object than CLIC.}
    \label{table:aris-correlated}
\end{table}

The corresponding ARIs are given in Table \ref{table:aris-correlated}. CLIC coherently clusters the two views and attains high ARI values in all three cases. In case 1, the outcome can be predicted by first partitioning the covariates and then fitting a regression model within each cluster. CLIC and the ct-HDP are able to identify and exploit this relationship, leading to ARIs of 0.941 and 0.960, respectively, greatly outperforming IDPs and EM. This task becomes far more difficult in case 2, where objects in either of the true clusters in $C_1$ can have positive or negative correlation between $X_{1i}$ and $X_{2i}$. However, objects in cluster 1 of $C_1$ are more likely to have a positive association, and objects in cluster 2 of $C_1$ tend to have a negative relationship. All methods find this regime challenging, in part because the separation between the $C_1$ clusters leads to wide gaps in the regression lines. However, CLIC and the ct-HDP still outperform IDPs and EM in classifying the objects because they account for clustering dependence. In case 3, objects in either cluster of $C_1$ are equally likely to have positive or negative correlation.  Due to the separation between the clusters, all four methods return the same point estimate and achieve high ARI values with the true partitions. A comparison between the posterior distributions of the TARI under CLIC and the t-HDP is given in Figure \ref{fig:rand-plots-conditional}. As in the uncorrelated views setting, the posterior for the TARI in case 1 is more diffuse for the t-HDP, and the posterior for case 3 is more concentrated at $0$ for CLIC. 

\begin{figure}[ht]
    \centering
    \includegraphics[scale=0.75]{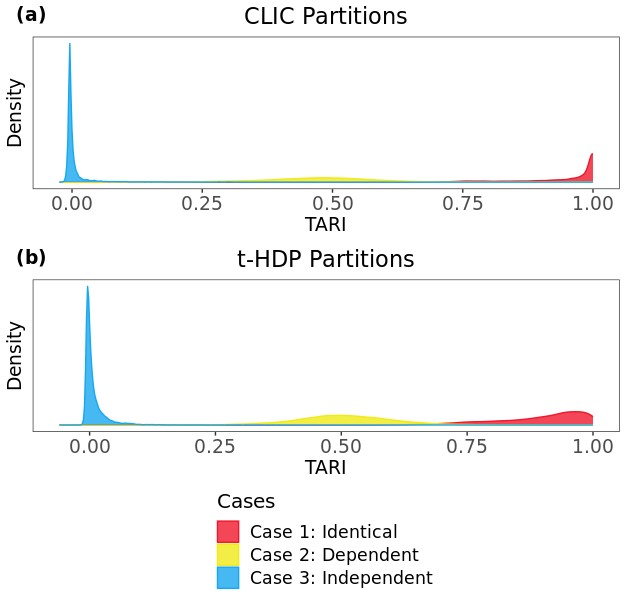}
    \caption{Estimated posterior distributions of the TARI under two correlated views for CLIC and the t-HDP based on synthetic data.}
    \label{fig:rand-plots-conditional}
\end{figure}

\subsection{Additional Remarks on Three Views Setting}
The computation times for CLIC, the t-HDP, EM, and the IDPs were $60.029$, $518.545$, $0.033$, and $23.477$ seconds respectively, where each Bayesian algorithm is run for $30,000$ iterations.

\subsection{Additional Remarks on Varying Dimension and Sample Size Simulations}

All Bayesian methods are ran for $30,000$ iterations with $5,000$ as warm-up, and every other iteration was thinned. Computation times for all methods are displayed in Table \ref{table:varying-dim-samp-size-time}. The PCDP uses base measures $H_v(\theta) = \N_{d_v}(\theta; \bs 0_{d_v}, I_{d_v})$, where $0_{d_v}$ and $I_{d_v}$ are the $d_v$-dimensional zero vector and identity matrix. We also set $\gamma_1 = \gamma_2 = 1$, $L_1 = L_2 = 10$, and  $1/\sigma_{v}^2 \sim \tx{Gamma}(1,1)$. For the t-HDP, we use the function \texttt{telescopic\_HDP\_NNIW\_multi.R} from the \href{https://github.com/beatricefranzolini/CPE/blob/main/telescopic_HDP_NNIW_multi.R}{Github repository} for the method. We use the default hyperparameters $H_0 = H = 10$, concentration parameters equal to $0.1$, and Normal-inverse-Wishart scale equal to $0.1$. Like the PCDP, the IDPs and DP also have standard Gaussian base measures, $10$ components, and a variance prior set to a $\tx{Gamma}(1,1)$ distribution. The EM algorithm is implemented in the same manner as the other simulations.

\begin{table}[]
    \centering
    \begin{tabular}{cccccc}
    \toprule
     \multicolumn{6}{c}{$n=100$} \\
     \midrule
         $d_2$ & CLIC & t-HDP & EM & IDPs & DP \\
         2 & 16.960 & 11875.856 & 0.960 & 11.106 & 6.343 \\
         10 & 18.685 & 14577.428 & 0.926 & 12.774  & 8.040 \\
         25 & 22.918 & 16182.596 & 1.235 &   16.564 &  11.883 \\
         \midrule
    \end{tabular}
    \centering
    \begin{tabular}{cccccc}
    \toprule
     \multicolumn{6}{c}{$n=200$} \\
     \midrule
         $d_2$ & CLIC & t-HDP & EM & IDPs & DP \\
         2 & 62.835 & 17647.145 & 3.139    & 29.664 &  16.702 \\
         10 & 66.319 & 21457.653 &  3.004  &   33.316  &  19.830 \\
         25 & 73.641 & 23917.661 & 2.041  &  40.047  &  26.918 \\
         \midrule
    \end{tabular}
    \centering
    \begin{tabular}{cccccc}
    \toprule
     \multicolumn{6}{c}{$n=500$} \\
     \midrule
         $d_2$ & CLIC & t-HDP & EM & IDPs & DP \\
         2 & 294.488 & 34363.581 & 10.147 & 136.790 & 81.237 \\
         10 & 300.070 & 41564.017 & 11.273   & 147.140  &  80.839 \\
         25 & 317.207 & 46181.697 & 39.278   & 161.755 & 96.035 \\
         \bottomrule
    \end{tabular}
    \caption{Computation time (in seconds) for all methods to compute the point estimates along both views.}
    \label{table:varying-dim-samp-size-time}
\end{table}

\subsection{Mixing and Convergence for the Two Views Setting}

\begin{figure}[ht]
    \centering
    \includegraphics[scale=0.5]{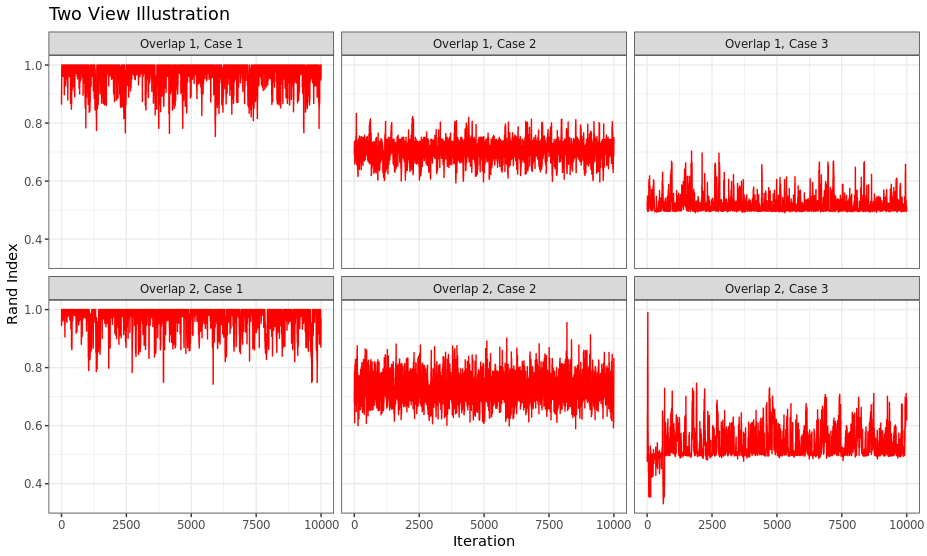}
    \caption{Traceplots for $R(C_1,C_2)$ for each case and overlap (i.e., value of $\eta^2$) in the two view setting. The Rand index may mix worse when it hits either its upper or lower boundary, and mixes better when it is in $(0,1)$.}
    \label{fig:rand-trace-plots}
\end{figure}

\begin{figure}
    \centering
    \includegraphics[scale=0.65]{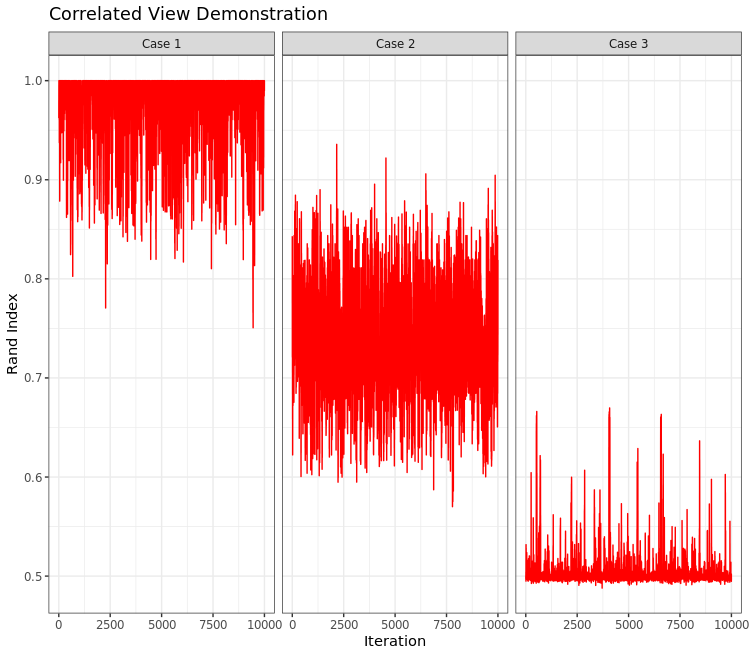}
    \caption{Traceplots for $R(C_1,C_2)$ in the illustration discussed in Section \ref{section:conditional-simulations}.}
    \label{fig:rand-conditional-trace-plots}
\end{figure}

\begin{table}[ht]
    \centering
    \begin{tabular}{cccc}
    \toprule
        Overlap & Case 1 & Case 2 & Case 3 \\
         \midrule
        $\eta^2=0.2$ & 480.852 & 598.751 & 378.956 \\
        $\eta^2 = 0.45$ & 366.986 & 919.625 & 219.241  \\
        \bottomrule
    \end{tabular}
    \caption{Effective sample size of $R(C_1,C_2)$ for each case and overlap in the two uncorrelated views setting.}
    \label{table:ess}
\end{table}

We find that the mixing for the partitions is much better when $\rho$ is given a hyperprior and updated during the MCMC sampler than when it is fixed. In the main article, we present two alternative priors and corresponding sampling strategies for updating $\rho.$ The first is to use a gamma prior as in \cite{escobar1995bayesian}, which is semiconjugate. The second is to use a griddy-Gibbs sampler \citep{ritter1992facilitating} and allow $\rho$ to vary across a pre-specified grid. In the simulations, we assume that $\rho \sim \tx{Gamma}(1,1)$ and fix $\gamma_1 = \gamma_2 = \gamma = 1$.

Recall that $0 \leq R(C_1, C_2) \leq 1$. Trace plots for $R(C_1,C_2)$ are given in Figures \ref{fig:rand-trace-plots} (for uncorrelated views) and \ref{fig:rand-conditional-trace-plots} (for correlated views). Additionally, the effective sample sizes for the uncorrelated views simulations are in Table \ref{table:ess}, and for the correlated views simulations these values are $558.467$ (case 1), $678.036$ (case 2), and $335.535$ (case 3). Convergence of the Rand index in all simulations is rapid, but the Gibbs sampler is most efficient when the true partitions are weakly dependent (i.e., case 2). Efficiency can decrease when the true partitions exhibit either case 1 or case 3. In case 1, the true Rand index is equal to 1, and so samples of $R(C_1,C_2)$ tend to accumulate near its upper bound, which results in worse mixing than the weak dependence case. While $R(C_1,C_2)$ is bounded below by $0$, we show in Theorem \ref{thm:ERI} that $\E[R(C_1,C_2)] = \tau_{12} \geq (1+\gamma^2)/(1+\gamma)^2 = 1/2$. Though there is still positive posterior probability that $R(C_1, C_2) < 1/2$, the vast majority of samples from the Rand index are at least as large as $1/2$. Hence, in case 3, samples of $R(C_1,C_2)$ accumulate near an (approximate) lower bound, resulting in correlated draws.

\subsection{Mixing and Convergence for the Three Views Setting}
Traceplots for the pairwise Rand indices are given in Figure \ref{fig:three-view-traces}. The effective sample size is $465.9618$ for $R(C_1, C_2)$, $565.8382$ for $R(C_1,C_3)$, and $365.5251$ for $R(C_2,C_3)$, similar to the mixing we observe in the two view case.

\begin{figure}
    \centering
    \includegraphics[scale=0.7]{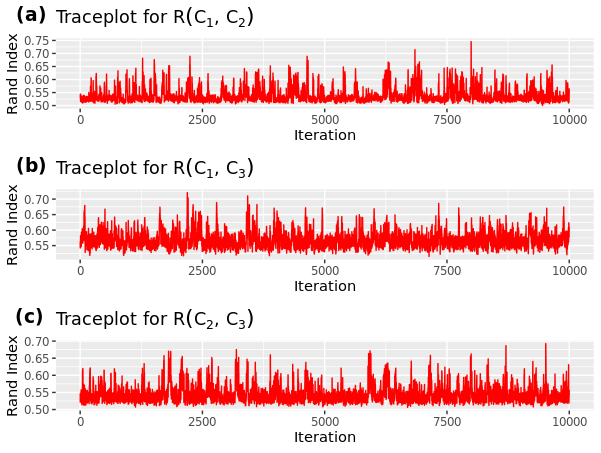}
    \caption{Traceplots for the all pairwise Rand indices for the three partitions in the three view setting.}
    \label{fig:three-view-traces}
\end{figure}

\section{Additional Details on the CPP Application}

The \cite{longnecker2001association} dataset has $d=494$ features measured on $2,379$ mothers (and their children). First, we subset the data to have gestational age strictly less than $42$ weeks in order to remove outliers. Then, we take a random sample of size $n=1,000$ to help with visualization of the results and to add more uncertainty into the cluster-specific parameters, ultimately making the problem more challenging. Both gestational age and birthweight are normalized to have zero mean and unit standard deviation. 

\begin{figure}
    \centering
    \includegraphics[scale=0.7]{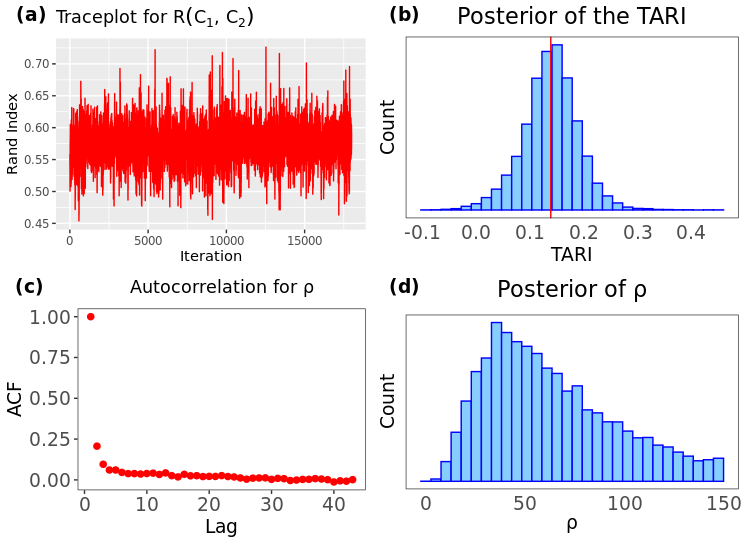}
    \caption{MCMC diagnostics for the application of CLIC to the \cite{longnecker2001association} data.}
    \label{fig:longecker-traces-and-correlations}
\end{figure}

As mentioned in the main article, mixing can improve by using a griddy-Gibbs sampler for $\rho$. We implement the griddy-Gibbs sampler in the application to the \cite{longnecker2001association} dataset, where $\rho$ has support on a grid from $10^{-2}$ to $150$ in intervals of length $0.5$. Figure \ref{fig:longecker-traces-and-correlations} shows the MCMC diagnostic plots for $R(C_1,C_2)$ and $\rho$ over the 18,000 iterations we use. As we observed in simulations, draws in the Gibbs sampler tend to be more correlated for very low values of the Rand index. The effective sample size of $R(C_1,C_2)$ is $1497.218$ and for $\rho$ it is $6251.118$. The time it takes to run the MCMC sampler, compute the point estimates, and sample from the posterior of $R(C_1, C_2)$ was 1,592.387 seconds.

\section{Testing the Uncorrelated and Correlated Models}

In some applications, it may be of interest to test the two multiview models presented in Section \ref{section:PCDP}. In one model, the views are correlated conditional on the cluster labels, and in the other, the views are uncorrelated. One possibility is to compute the Bayes factor (BF) \citep{kass1995bayes} for testing $H_0: f(X_i; \theta_i) = f_1(X_{1i}; \theta_{1i}) \times f_2(X_{2i}; \theta_{2i})$ vs. $H_1: f(X_i; \theta_i) = f_1(X_{1i}; \theta_{1i}) \times f_2(X_{2i}; X_{1i}, \theta_{2i})$, which involves computing the marginal likelihood of both models. In many cases, computing the marginal likelihood may be infeasible. An alternate strategy is to specify a variable selection prior for the base measure in the second view. As a simple example, suppose we have $d_2 = 1$ and set $f_1(X_{1i};\theta_{1i}) = \N(X_{1i};\theta_{1i}, \Sigma_1)$. For the outcome, say we set $\theta_{2i} = (\alpha_{2i}, \beta_{2i})$ and $f_2(X_{2i};X_{1i}, \theta_{2i}) = \N(X_{2i}; \alpha_{2i} + X_{1i}^T\beta_{2i}, \sigma_2^2 )$. To complete the specification, we take the base measure in the second view to be
\begin{equation*}
    H_2(\theta_{2k_2}^*) = \N(\alpha_{2k_2}^*; \mu_\alpha, \Sigma_\alpha) \times \lb \lambda \delta_0(\beta_{2k_2}^*) + (1-\lambda) \N(\beta_{2k_2}^*; \mu_\beta, \Sigma_\beta)   \rb,
\end{equation*}
where $\lambda \in (0,1)$ and $\delta_0$ is a point-mass at the zero vector. The choice of measure for $\beta_{2k_2}^*$ is a version of the spike-and-slab prior for linear regression models \citep{mitchell1988bayesian, ishwaran2005spike}. This is equivalent to mixing over a latent variable $\phi \in \lb 0,1 \rb$, where $(\beta_{2k_2}^* \mid \phi=1) \sim \delta_0$, $(\beta_{2k_2}^* \mid \phi=0) \sim \N(\mu_\beta, \Sigma_\beta)$, and $\pi(\phi=1 \mid \lambda) = \lambda$. When $\phi = 1$, model (5) holds, otherwise, model (6) holds. Suppose further that we set $\lambda \sim \tx{Beta}(a_\lambda, b_\lambda)$. We could test model (5) against model (6) by calculating the BF of $H_0: \phi=1$, or simply $\pi(\phi=1 \mid \bs X)$. This could be accomplished using the output of the Gibbs sampler in Section \ref{section:computation} by saving samples $\phi^{(t)}$ for $t=1, \dots, T$, then applying the approximation $\pi(\phi = 1 \mid \X) \approx (1/T) \sum_{t=1}^T \textbf{1}_{\phi^{(t)} = 1}$.

\section{Alternate Methods of Computation}
In Section \ref{section:CLIC}, we use a result on HDPs from \cite{camerlenghi2018bayesian} to derive the MEPPF for the cross-partition. As with HDPs, the inclusion of $ \tilde q_1$ and $\tilde q_2$ in the Gibbs sampler can be inefficient. However, Theorem \ref{thm:joint-EPPF} shows that we can marginalize out $\tilde q_1, \tilde q_2$ from our sampler. As in Section \ref{section:computation}, we need to sample from the full conditional distributions of $\rho$, $\bs r$, $C_1$, and $C_2$, as well as the cluster-specific parameters. Note that 
\begin{equation}
   \pi(C_1, C_2, \bs r \mid \rho, \gamma_1, \gamma_2) = \frac{\gamma_1^{K_1}\gamma_2^{K_2}}{(\rho)^{(n)}} \rho^{r} \lb \prod_{v=1}^{2}  \frac{\prod_{k_v} (r_{vk_v}-1)!}{(\gamma_v)^{(|r_v|)}} \rb \prod_{k_1,k_2} |s(n_{k_1 k_2}, r_{k_1 k_2})|. 
\end{equation}
As before, we have that $\pi(\rho \mid -) \propto \rho^{r}/(\rho)^{(n)} \pi(\rho)$. The main difficulty is simulating from $\pi(\bs r \mid -) \propto \pi(C_1, C_2, \bs r \mid \rho, \gamma_1, \gamma_2)$. Recall that $\pi(\bs r)$ can be simulated by first sampling $r$ from $\pi(r=w \mid \rho) \propto \rho^w$, then generating the root partitions $T_1$ and $T_2$ from independent $\tx{CRP}(\gamma_v)$ distributions with $r$ customers and taking $\bs r$ to be their contingency table. Therefore, one possible strategy is to take $\pi(\bs r)$ to be a proposal distribution and to perform a Metropolis-Hastings step within the Gibbs sampler. However, when sampling $(T_1, T_2)$ conditional on $(C_1, C_2)$, we have to make sure that the number of clusters in $T_v$ is equal to $K_v$. If the number of clusters in either root partition is not equal to the number of clusters in $C_v$, the candidate would automatically be rejected. Therefore, one needs to be careful for the proposal distribution of $\bs r$ in order to preserve the number of clusters. Alternatively, existing samplers for the HDP could be augmented for PCDP, see \cite{dasblocked2024} and references therein for a comprehensive overview. 

\end{document}